 \newtheorem{theorem}{Theorem}[section]
 \newtheorem{lemma}[theorem]{Lemma}
 \newtheorem{remark}[theorem]{Remark}
 \newtheorem{proposition}[theorem]{Proposition}
 \newtheorem{definition}[theorem]{Definition}
\DeclareMathOperator{\diag}{diag}
\DeclareMathOperator{\sym}{sym}
\DeclareMathOperator{\skw}{skew}
\DeclareMathOperator{\tr}{tr}
\DeclareMathOperator{\axl}{axl}
\DeclareMathOperator{\anti}{anti}
\DeclareMathOperator{\dev}{dev}
\DeclareMathOperator{\sL}{\mathfrak{sl}}
\DeclareMathOperator{\so}{\mathfrak{so}}
\DeclareMathOperator{\gl}{\mathfrak{gl}}
\DeclareMathOperator{\Curl}{Curl\,}
\newcommand{\Sym}{ {\rm{Sym}} }
\newcommand{\R}{\mathbb{R}}
\newcommand{\id}{{\boldsymbol{\mathbbm{1}}}}
\newcommand{\jump}[1]{ \llbracket #1 \rrbracket }
\def\skew{\text{skew}}
\def\dv{\textrm{div}}
\def\dd{\displaystyle}
\def\dv{\textrm{div}}
\def\Div{\textrm{Div\,}}
\def\curl{\textrm{curl\,}}
\let\@fnsymbol\@arabic
\begin{document}

\title{A new view on  boundary conditions in the Grioli-Koiter-Mindlin-Toupin
indeterminate couple stress model}

\author{\normalsize{ Angela Madeo\footnote{Angela Madeo, \ \  Laboratoire de G\'{e}nie Civil et Ing\'{e}nierie Environnementale,
Universit\'{e} de Lyon-INSA, B\^{a}timent Coulomb, 69621 Villeurbanne
Cedex, France; and International Center M\&MOCS ``Mathematics and Mechanics
of Complex Systems", Palazzo Caetani,
Cisterna di Latina, Italy,
 email: angela.madeo@insa-lyon.fr}\quad and \quad Ionel-Dumitrel Ghiba\thanks{{Corresponding author:} Ionel-Dumitrel Ghiba, \ \ \ \ Lehrstuhl f\"{u}r Nichtlineare Analysis und Modellierung, Fakult\"{a}t f\"{u}r Mathematik,
Universit\"{a}t Duisburg-Essen, Thea-Leymann Str. 9, 45127 Essen, Germany;  Alexandru Ioan Cuza University of Ia\c si, Department of Mathematics,  Blvd.
Carol I, no. 11, 700506 Ia\c si,
Romania;  Octav Mayer Institute of Mathematics of the
Romanian Academy, Ia\c si Branch,  700505 Ia\c si; and Institute of Solid Mechanics, Romanian Academy, 010141 Bucharest, Romania, email: { dumitrel.ghiba@uaic.ro, dumitrel.ghiba@uni-due.de}}  \quad
and \quad
Patrizio Neff\thanks{Patrizio Neff,  \ \ Head of Lehrstuhl f\"{u}r Nichtlineare Analysis und Modellierung, Fakult\"{a}t f\"{u}r
Mathematik, Universit\"{a}t Duisburg-Essen,  Thea-Leymann Str. 9, 45127 Essen, Germany, email: patrizio.neff@uni-due.de}\quad
and 
\quad Ingo M\"unch\thanks{Ingo M\"unch, Institute for Structural Analysis, Karlsruhe Institute of Technology, Kaiserstr. 12, 76131 Karlsruhe,
Germany, email: ingo.muench@kit.edu}}
}
\maketitle
\begin{abstract}
In this paper we consider the Grioli-Koiter-Mindlin-Toupin linear isotropic indeterminate
couple stress model. Our main aim is to show that, up to now, the boundary
conditions have not been completely understood  for this model. As it turns out, and
to our own surprise, restricting the well known boundary conditions stemming from the strain gradient or second gradient models  to the particular case of the indeterminate couple stress model, does not always reduce to the Grioli-Koiter-Mindlin-Toupin set of accepted boundary
conditions. We present, therefore, a proof of the fact that when specific ``mixed'' kinematical and traction boundary conditions are assigned on the boundary, no ``a priori'' equivalence can be established between Mindlin's and our approach. \\
 \vspace*{0.25cm}
 \\
 \textbf{{Key words:}} generalized continua, second
gradient elasticity, strain gradient elasticity, couple stresses, hyperstresses, indeterminate couple stress model, consistency of mixed  boundary conditions.
\\
\vspace*{0.25cm}
\\
{\bf{AMS 2010 subject classification:} } 74A30, 74A35.
\end{abstract}
\newpage{}

\tableofcontents{}

\newpage{}

\section{Introduction}

Higher gradient elasticity models are nowadays increasingly used to
describe mechanical systems with underlying micro- or nano-structures (see e.g. among many others \cite{Auffray,SciarraCoussy,AlibertSeppFdI,FdIGuarascio,IsolaSciarraVidoliPRSA,FibrousComposites,AltenbachEremeyev,eremeyev2014equilibrium,RinaldiPlacidi,PlacidiDamage,Forest98,Forest02b,Forest03,Lazar06b,Aifantis98}) or to regularize certain ill-posed problems with higher gradient contributions (see e.g. \cite{Mielke09b,Neff_Chelminski07_disloc,ebobisse2015existence,ebobisse2010existence,GhibaNeffExistence,NeffGhibaMadeoLazar}).  Such higher gradient models, together with the more general class of micromorphic models \cite{NeffGhibaMicroModel,Neff_micromorphic_rse_05,Neff_Forest_jel05,Neff_Jeong_ZAMP08,Neff_Jeong_Conformal_ZAMM08,Jeong_Neff_ZAMM08,Neff_Svendsen08,Barbagalo}, have been also proved to be a useful tool for the description of micro-structured materials showing exotic behaviours in the dynamic regime (see e.g. \cite{MadeoNeffGhibaW,MadeoNeffGhibaWZAMM,FdIPlacidiMadeo,PlacidiRosiMadeo,Eremeyev5,rigatos1991stabilizing,gourgiotis2015torsional,gourgiotis2008approach,zisis2014some}).

One
among such higher order models which was introduced at the very beginning is the so called \textit{indeterminate couple
stress model} in which the higher gradient contributions only enter
through gradients on the continuum rotation. We place ourselves in
the context of the linear elastic, isotropic model by choosing a specific form of the quadratic  free elastic energy density.

The question of boundary conditions in higher
gradient elasticity models has been a subject of constant attention.
The matter is that in a higher gradient model, it is not possible
to independently vary the test function and its gradient. Some sort
of split into tangential and normal parts is usually performed  (see e.g. \cite{FdISepp,EdgeIsolaSepp,NthGrad}). This is
well known in general higher gradient models. The boundary conditions in the general case of gradient elasticity and strain gradient elasticity have been settled in  the paper by  Bleustein \cite{bleustein1967note}, see also \cite{maugin2014continuum,madeo2012second,MadeoZAMM2014,dell2014origins}. However, as it turns out, the  boundary conditions obtained by  Tiersten and Bleustein in \cite{TierstenBleustein} with respect to the special case of the indeterminate couple stress model are  not the only possible ones in the framework of the indeterminate couple-stress model.

While the strain gradient framework necessitates to work with a third order hyperstress tensor, the indeterminate couple stress model is apparently simpler: it restricts the form of the curvature energy and allows to work with a second order couple-stress tensor work-conjugate to gradients of rotation. For this apparent simplification  the indeterminate couple stress
model has been  heavily investigated and is still being heavily used as well. A first answer  {as regards boundary conditions}
has been given by Mindlin and Tiersten as well as Koiter \cite{Mindlin62,Koiter64}
who established (correctly) that only 5 geometric and 5 traction boundary
conditions can be prescribed. Their format of boundary conditions
has become the commonly accepted one
for the couple stress model \cite{Mindlin62,Toupin62,Koiter64,anthoine2000effect,Yang02,Park07}, all
these  papers using the same  set of (incomplete) boundary conditions. It seems, to us,
however, that the state of the art in general strain gradient theories
\cite{Mindlin62,Toupin62,Koiter64,Neff_Jeong_IJSS09,park2008variational,anthoine2000effect,MauginVirtualPowers,maugin1980method,javili2013geometrically,Neff_Forest_jel05}
is much more advanced as far as boundary conditions are concerned.

This paper has been motivated by our reading of \cite{Dargush,hadjesfandiari2013fundamental,hadjesfandiari2013skew,hadjesfandiari2011couple,hadjesfandiari2010polar,hadjesfandiari2014evo},
in which the form of traction boundary conditions in the indeterminate couple stress model,
together with an apparently plausible physical postulate lead to unacceptable
conclusions, see \cite{NeffGhibagegenHD}. Therefore, there had to be an underlying problem which
we believe to have tracked down to the hitherto accepted format of
boundary conditions.

\medskip

The main result of this paper,  {consisting in setting up a "stronlgy independent"  set of  boundary conditions} for the couple stress model, has been announced in \cite{MadeoGhibaNeffMunchCRM}. This contribution is now structured as follows: after a subsection fixing the notations used throughout the paper, we outline  some related models in isotropic second gradient elasticity and we give a brief digression concerning differential geometry. In Section \ref{directsection}, we present the equilibrium equations and the constitutive equations of the indeterminate couple stress model as they have been derived in the literature. We also present the classical "weakly independent" boundary conditions proposed by Mindlin and Tiersten \cite{Mindlin62} and the main  {arguments} of their proposal. Since we remark that these boundary conditions are not the only possible ones, in Section \ref{fromgradientbc} we obtain the novel  set of boundary conditions in the indeterminate couple stress model. To this main aim of our paper, we follow two different paths. On the one hand, we consider the indeterminate couple stress model as a special case of the second gradient elasticity model and we derive the "strongly independent"  boundary conditions which follow naturally as restriction of such general framework, see Subsection \ref{sub:The-full-approach}. This kind of approach involves the third order hyperstress tensor as a reminiscence of the second gradient elasticity approach. However, in Subsection \ref{sub:identification-1}, we prove that the equilibrium equations and the boundary conditions may all be rewritten in terms of  Mindlin's second order couple stress tensor. On the other hand, following the line of Mindlin's  argument in combination with some calculations specific to second gradient elasticity model, we set up a "direct approach" which leads  to the same set of boundary conditions with those coming from second gradient elasticity, see Section \ref{sectionincomplete3}. However, these  boundary conditions do not always coincide with those proposed by Mindlin and Tiersten and which are accepted and used until now in the literature. We explain this fact in Subsection \ref{newsubsectionmadeo}  where we explicitly show that, if an "a priori" equivalence can be found in most cases between our approach and Mindlin's one, this is not the case when considering "mixed" boundary conditions, simultaneously assigning the force and the curl of displacement (Mindlin) or the force and the normal derivative of displacement (our approach) on the same portion of the boundary. 

In the appendix, we give some explicit or alternative calculations which are used in the main text and we answer again to the question: what are the missing steps in Mindlin and Tiersten's approach? (see Appendix  \ref{sectionincomplete2}). We do so by using all the arguments provided throughout the paper and pointed out in different circumstances.  We end our paper by some concluding diagrams summarizing our findings.

\subsection{Notational agreements}

In this paper, we denote by $\R^{3\times3}$
the set of real $3\times3$ second order tensors, which will be written with capital
letters.  We denote respectively by $\cdot\:$, $:$ and $\left.\langle  \cdot,\cdot\right.\rangle $
a simple and double contraction and the scalar product between two
tensors of any suitable order%
\footnote{For example, $(A\cdot v)_{i}=A_{ij}v_{j}$, $(A\cdot B)_{ik}=A_{ij}B_{jk}$,
$A:B=A_{ij}B_{ji}$, $(C\cdot B)_{ijk}=C_{ijp}B_{pk}$, $(C:B)_{i}=C_{ijp}B_{pj}$,
$\left.\langle  v,w\right.\rangle =v\cdot w=v_{i}w_{i}$, $\left.\langle  A,B\right.\rangle =A_{ij}B_{ij}$
etc.%
}. Everywhere we adopt the Einstein convention of sum over repeated
indices if not differently specified.  The standard
Euclidean scalar product on $\R^{3\times3}$ is given by $\langle{X},{Y}\rangle_{\R^{3\times3}}=\tr({X\cdot Y^{T}})$,
and thus the Frobenius tensor norm is $\|{X}\|^{2}=\langle{X},{X}\rangle_{\R^{3\times3}}$.
In the following we omit the index $\R^{3},\R^{3\times3}$. The identity
tensor on $\R^{3\times3}$ will be denoted by $\id$, so that $\tr({X})=\langle{X},{\id}\rangle$.
We adopt the usual abbreviations of Lie-algebra theory, i.e., $\so(3):=\{X\in\mathbb{R}^{3\times3}\;|X^{T}=-X\}$
is the Lie-algebra of skew symmetric tensors and $\sL(3):=\{X\in\mathbb{R}^{3\times3}\;|\tr({X})=0\}$
is the Lie-algebra of traceless tensors. For all $X\in\mathbb{R}^{3\times3}$
we set $\sym X=\frac{1}{2}(X^{T}+X)\in\Sym$, $\skw X=\frac{1}{2}(X-X^{T})\in\so(3)$
and the deviatoric part $\dev X=X-\frac{1}{3}\;\tr(X)\,\id\in\sL(3)$
and we have the \emph{orthogonal Cartan-decomposition of the Lie-algebra}
$\gl(3)$
\begin{align}
\gl(3) & =\{\sL(3)\cap\Sym(3)\}\oplus\so(3)\oplus\mathbb{R}\!\cdot\!\id,\qquad\quad X=\dev\sym X+\skw X+\frac{1}{3}\tr(X)\,\id\,.
\end{align}
Throughout this paper (when we do not specify else) Latin subscripts
take the values $1,2,3$. Typical conventions for differential operations
are implied such as comma followed by a subscript to denote the partial
derivative with respect to the corresponding cartesian coordinate.
Here, for
\begin{align}
\overline{A}=\left(\begin{array}{ccc}
0 & -a_{3} & a_{2}\\
a_{3} & 0 & -a_{1}\\
-a_{2} & a_{1} & 0
\end{array}\right)\in\so(3)
\end{align}
we consider the operators $\axl:\so(3)\rightarrow\mathbb{R}^{3}$
and $\anti:\mathbb{R}^{3}\rightarrow\so(3)$ which verify the following identities
\begin{align}
\axl(\overline{A}):=\left(a_{1},a_{2},a_{3}\right)^{T},\quad\quad\overline{A}.\, v=(\axl\overline{A})\times v,\qquad\qquad(\anti(v))_{ij}=-\epsilon_{ijk}v_{k},\quad\quad\forall\, v\in\mathbb{R}^{3},\label{eq:AxlAnti_indices}\\
\notag(\axl\overline{A})_{k}=-\frac{1}{2}\,\epsilon_{ijk}\overline{A}_{ij}=\frac{1}{2}\,\epsilon_{kij}\overline{A}_{ji}\,,\quad\qquad\overline{A}_{ij}=-\epsilon_{ijk}(\axl\overline{A})_{k}=:\anti(\axl\overline{A})_{ij},\quad(a\times b)_{i}=\epsilon_{ijk}a_{j}b_{k},
\end{align}
where $\epsilon_{ijk}$ is the totally antisymmetric Levi-Civita third order permutation
tensor.

We consider a body which occupies a bounded open set $\Omega$ of
the three-dimensional Euclidian space $\R^{3}$ and assume that its
boundary $\partial\Omega$ is a  smooth surface of
class $C^{2}$. An elastic material fills the domain $\Omega\subset\R^{3}$
and we refer the motion of the body to rectangular axes $Ox_{i}$.
\begin{figure}[h!]
\centering %
\begin{minipage}[c]{0.7\linewidth}%
\includegraphics[scale=0.5]{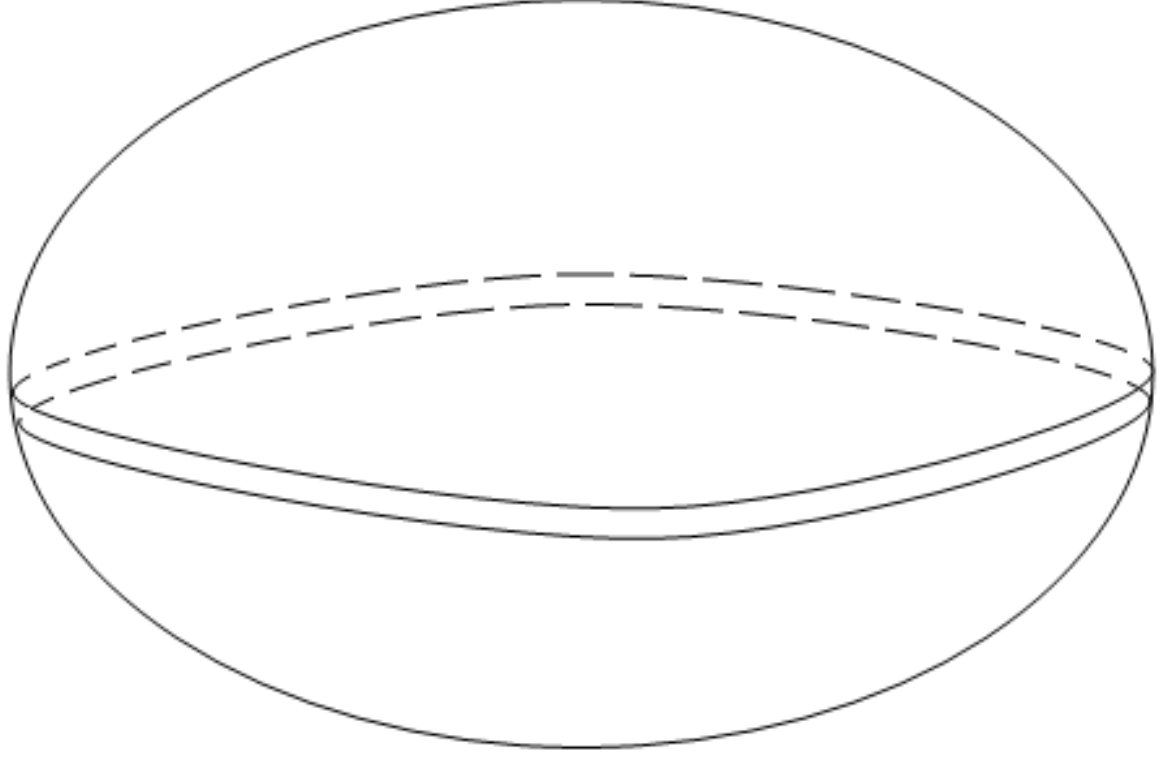} \centering \put(-100,80){$\partial\Omega\setminus\overline{\Gamma}$}
\put(-40,85){$\Omega$} \put(-90,10){$\Gamma$} \put(-90,40){$\partial\Gamma$}
\put(-105,40){\vector(0,-1){20}} \put(-120,20){$\nu^{-}$}
\put(-70,34){\vector(0,1){20}} \put(-65,50){$\nu^{+}$} \protect\protect\protect\protect\caption{{\footnotesize{}{}{}{}{The domain $\Omega\subset\mathbb{R}^{3}$
together with the part $\Gamma\subset\partial\Omega$, where Dirichlet
boundary conditions are prescribed. We need to represent the boundary
conditions on a disjoint union of $\partial\Omega=(\partial\Omega\setminus\overline{\Gamma})\cup\Gamma\cup\partial\Gamma$,
where $\Gamma$ is a open subset of $\partial\Omega$.}}}

{\footnotesize{}{}{}{}\label{capture-bc} }%
\end{minipage}
\end{figure}

With reference to Fig.\ref{capture-bc}, $n$ is the outward unit normal to $\partial \Omega$, $\Gamma$ is an open subset of the boundary $\partial \Omega$,  $\nu^{-}$ is a vector tangential to the surface $\partial\Omega\setminus\overline{\Gamma}$
and which is orthogonal to its boundary $\partial(\partial\Omega\setminus\overline{\Gamma})$,
$\tau^{-}=n\times\nu^{-}$ is the tangent to the curve $\partial(\partial\Omega\setminus\overline{\Gamma})$
with respect to the orientation on $\partial\Omega\setminus\overline{\Gamma}$ given by the outward unit normal $n$ to this surface.
Similarly, $\nu^{+}$ is a vector tangential to the surface $\Gamma$
and which is orthogonal to its boundary $\partial\Gamma$, $\tau^{+}=n\times\nu^{+}$
is the tangent to the curve $\partial\Gamma$ with respect to the
orientation on $\Gamma$.

In the following, given any vector field $a$ defined on the boundary
$\partial\Omega$ we will also set
\begin{equation}
\jump{\left.\langle  a,\nu\right.\rangle }:=\left.\langle  a^{+},\nu^{+}\right.\rangle +\left.\langle  a^{-},\nu^{-}\right.\rangle =\left.\langle  a^{+},\nu\right.\rangle -\left.\langle  a^{-},\nu\right.\rangle =\left.\langle  a^{+}-a^{-},\nu\right.\rangle ,\label{eq:Jump}
\end{equation}
which defines a measure of the jump of $a$ through the line $\partial\Gamma$, where $\nu:=\nu^+=-\nu^-$ and
\[
[\cdot]^{-}:=\hspace*{0cm}\dd\lim\limits _{\footnotesize{\begin{array}{c}
x\in\partial\Omega\setminus\overline{\Gamma}\\
\ x\rightarrow\partial\Gamma
\end{array}}}\hspace*{0cm}[\cdot], \qquad
[\cdot]^{+}:=\hspace*{-0.2cm}\dd\lim\limits _{\footnotesize{\begin{array}{c}
x\in\Gamma\\
\ x\rightarrow\partial\Gamma
\end{array}}}\hspace*{-0.2cm}[\cdot].
\]
We are assuming here that $\partial\Omega$ is a smooth
surface. Hence, there are no geometric singularities of the boundary. The jump
$\jump{\cdot}$ arises only as a consequence of possible discontinuities
which follows from the prescribed boundary conditions on $\Gamma$
and ${\partial\Omega\setminus\overline{\Gamma}}$. Nevertheless, if one would like to explicitly consider continua with non-smooth boundaries, the jump conditions to be imposed at the edges of the boundary would be formally the same to those that we will present in the remainder of this paper, with the precision that the jump would in this case indicate a true jump across a geometrical discontinuity of the surface.

The usual Lebesgue spaces of square integrable functions, vector or
tensor fields on $\Omega$ with values in $\mathbb{R}$, $\mathbb{R}^{3}$
or $\mathbb{R}^{3\times3}$, respectively will be denoted by $L^{2}(\Omega)$.
Moreover, we introduce the standard Sobolev spaces \cite{Adams75,Raviart79,Leis86}
\begin{align}
\begin{array}{ll}
{\rm H}^{1}(\Omega)=\{u\in L^{2}(\Omega)\,|\,{\rm grad}\, u\in L^{2}(\Omega)\}, & \qquad \|u\|_{{\rm H}^{1}(\Omega)}^{2}:=\|u\|_{L^{2}(\Omega)}^{2}+\|{\rm grad}\, u\|_{L^{2}(\Omega)}^{2}\,,\vspace{1.5mm}\\
{\rm H}({\rm curl};\Omega)=\{v\in L^{2}(\Omega)\,|\,{\rm curl}\, v\in L^{2}(\Omega)\}, &\qquad \|v\|_{{\rm H}({\rm curl};\Omega)}^{2}:=\|v\|_{L^{2}(\Omega)}^{2}+\|{\rm curl}\, v\|_{L^{2}(\Omega)}^{2}\,,\vspace{1.5mm}\\
{\rm H}({\rm div};\Omega)=\{v\in L^{2}(\Omega)\,|\,{\rm div}\, v\in L^{2}(\Omega)\}, &\qquad \|v\|_{{\rm H}({\rm div};\Omega)}^{2}:=\|v\|_{L^{2}(\Omega)}^{2}+\|{\rm div}\, v\|_{L^{2}(\Omega)}^{2}\,,
\end{array}
\end{align}
of functions $u$ or vector fields $v$, respectively.

For vector fields $v$ with components in ${\rm H}^{1}(\Omega)$,
i.e. $v=\left(v_{1},v_{2},v_{3}\right)^{T}\,,v_{i}\in{\rm H}^{1}(\Omega),$
we define \break $\nabla\, v=\left((\nabla\, v_{1})^{T},(\nabla\, v_{2})^{T},(\nabla\, v_{3})^{T}\right)^{T}$,
while for tensor fields $P$ with rows in ${\rm H}({\rm curl}\,;\Omega)$,
resp. ${\rm H}({\rm div}\,;\Omega)$, i.e. \break $P=\left(P_{1}^{T},P_{2}^{T},P_{3}^{T}\right)$,
$P_{i}\in{\rm H}({\rm curl}\,;\Omega)$ resp. $P_{i}\in{\rm H}({\rm div}\,;\Omega)$
we define ${\rm Curl}\, P=\left(({\rm curl}\, P_{1})^{T},({\rm curl}\, P_{2})^{T},({\rm curl}\, P_{3})^{T}\right)^{T},$
${\rm Div}\, P=\left({\rm div}\, P_{1},{\rm div}\, P_{2},{\rm div}\, P_{3}\right)^{T}.$
The corresponding Sobolev-spaces will be denoted by
\[
{\rm H}^{1}(\Omega),\qquad {\rm H}^{1}(\Div;\Omega),\qquad {\rm H}^{1}(\Curl;\Omega).
\]

\subsection{The indeterminate couple stress model}
In the indeterminate couple stress model we consider that the elastic energy
is given in the form
\begin{align}
W=W_{{\rm lin}}(\nabla u)+W_{{\rm curv}}(\nabla[\axl(\skw\nabla u)])=W_{{\rm lin}}(\nabla u)+\widetilde{W}_{{\rm curv}}(\nabla\curl u),
\end{align}
where
\begin{align}
W_{{\rm lin}}(\nabla u) & =\,\mu\,\|\sym\nabla u\|^{2}+\frac{\lambda}{2}\,[\tr(\sym\nabla u)]^{2}=\,\mu\,\|\dev\sym\nabla u\|^{2}+\frac{\kappa}{2}\,[\tr(\sym\nabla u)]^{2}.
\end{align}
Here, $\mu>0$ is the infinitesimal shear modulus,
$\kappa=\frac{2\mu+3\lambda}{3}>0$ is the infinitesimal bulk modulus with $\lambda$ the first Lam\'{e} constant.
In order to discuss the form of the curvature energy $W_{{\rm curv}}(\nabla[\axl(\skw\nabla u)])$,
let us recall some variants of the linear isotropic indeterminate couple stress models. 
Some parts of this classification have already been included in the paper \cite{NeffGhibaMadeoMunch} but  we include them also here for the sake of completeness:
\begin{itemize}
\item \textbf{the indeterminate couple stress model} (Grioli-Koiter-Mindlin-Toupin
model) \cite{Grioli60,Aero61,Koiter64,Mindlin62,Toupin64,Sokolowski72,grioli2003microstructures}
in which the higher derivatives (apparently) appear only through derivatives
of the infinitesimal continuum rotation $\curl u$. Hence, the curvature
energy has the equivalent forms
\begin{align}
W_{{\rm curv}}(\nabla[\axl(\skw\nabla u)]) & =\frac{\alpha_{1}}{4}\,\|\sym\nabla\curl\, u\|^{2}+\frac{\alpha_{2}}{4}\,\|\skw\nabla\curl\, u\|^{2}\qquad\notag\\
 & =\alpha_{1}\,\|\sym\nabla[\axl(\skw\nabla u)]\|^{2}+\alpha_{2}\,\|\skw\nabla[\axl(\skw\nabla u)]\|^{2}  \label{energyindet}\\
 & =\frac{\alpha_{1}}{4}\,\|\dev\sym\nabla\curl\, u\|^{2}+\frac{\alpha_{2}}{4}\,\|\skw\nabla\curl\, u\|^{2}.\qquad\ \notag
\end{align}
Here, we have used the identities
\begin{align}
\ \quad2\,\axl(\skew\nabla u) & =\curl u,\qquad\tr[\nabla[\axl(\skw\nabla u)]]=\frac{1}{2}\tr[\nabla[\curl u]]=\frac{1}{2}{\rm div}[\curl u]=0,
\end{align}
together with the fact that $\nabla \curl u$ is a trace-free second order tensor and hence so is $\sym \nabla \curl u$. This implies that $\dev \sym \nabla \curl u=\sym\nabla \curl u$.
Although this energy admits the equivalent forms \eqref{energyindet}$_{1}$ 
and \eqref{energyindet}$_{3}$, the equations and the boundary value problem
of the indeterminate couple stress model is usually formulated only
using the form \eqref{energyindet}$_{1}$ of the energy. Hence, we may individuate
one of the aims of the present paper in the fact that we want to formulate the boundary value
problem for \textbf{the indeterminate couple stress model using the
alternative form \eqref{energyindet}$_{2}$ of the energy} of the Grioli-Koiter-Mindlin-Toupin
model (see Section \ref{directsection}). We also remark that the spherical part of the couple stress
tensor is zero since $\tr(\nabla\curl u)={\rm div}(\curl u)=0$.
In order to prove the pointwise uniform positive definiteness it is assumed, following \cite{Koiter64}, that $\alpha_1>0, \alpha_2>0$   {(corresponds to $-1<\eta:=\frac{\alpha_1-\alpha_2}{\alpha_1+\alpha_2}<1$ in the notation of \cite{Koiter64}).}
Note that pointwise uniform positivity is often assumed \cite{Koiter64}
when deriving analytical solutions for simple boundary value problems
because it allows to invert the couple stress-curvature relation.
We have shown elsewhere \cite{NeffGhibaMadeoMunch} that  pointwise positive definiteness is
not necessary for well-posedness.
\item In this setting, {\bf Grioli} \cite{Grioli60,grioli2003microstructures} (see also Fleck \cite{Fleck93,Fleck97,Fleck01}) initially considered only the choice $\alpha_1=\alpha_2$. In fact, the energy originally proposed by Grioli \cite{Grioli60} is
{\begin{align}\label{KMTe0}
	W_{\rm curv}(D^2 u)&=\mu\,L_c^2\,\frac{\alpha_1}{4}\left[\, \, \|\nabla (\curl\, u)\|^2+\eta \,\tr[ (\nabla (\curl\, u))^2]\right]\notag\\
	&=\mu\,L_c^2\,\frac{\alpha_1}{4}[ \|\dev \sym \nabla [\axl (\skw \nabla u)]\|^2+\| \skw \nabla [\axl (\skw \nabla u)]\|^2\notag\\&\qquad \qquad +\eta\,\langle \nabla [\axl (\skw \nabla u)],(\nabla [\axl (\skw \nabla u)])^T\rangle]
	\\\notag
	&=\mu\,L_c^2\,\frac{\alpha_1}{4}\Big[ \|\dev \sym \nabla (\curl\, u)\|^2+\| \skw \nabla (\curl\, u)\|^2\\
	&\quad\qquad \qquad+\eta\,\langle \nabla (\curl\, u),(\nabla (\curl\, u))^T\rangle\Big]\notag
	\\
	&=\mu\,L_c^2\,\frac{\alpha_1}{4}\left[(1+\eta)\, \|\dev \sym \nabla (\curl\, u)\|^2+(1-\eta)\,\| \skw \nabla (\curl\, u)\|^2\right].\notag
	\end{align}}
\\
Mindlin \cite[p. 425]{Mindlin62}   {(with $\eta=0$)} explained the relations between  Toupin's constitutive equations \cite{Toupin62} and  Grioli's \cite{Grioli60} constitutive equations and concluded that the obtained equations in the linearized theory are identical, since the extra constitutive parameter $\eta$ of Grioli's model does not explicitly appear in the equations of motion  but enters only the boundary conditions,   {since
	$\nabla \axl(\skw \nabla u)=[\Curl(\sym \nabla u)]^T$,  ${\Div}\Curl(\cdot)=0$, and 
	$${\rm Div}\{\anti {\rm Div}[\nabla \axl(\skw \nabla u)]^T\}={\rm Div}\{\anti {\rm Div}[\Curl(\sym \nabla u)]\}={\rm Div}\{\anti(0)\}=0.$$ } 
\\
The same extra constitutive coefficient appears in  Mindlin and Eshel \cite{Mindlin68} and Grioli's version \eqref{KMTe0}.

\item \textbf{the modified - symmetric couple stress model - the conformal
model}. On the other hand, in the conformal case \cite{Neff_Jeong_IJSS09,Neff_Paris_Maugin09}
one may consider that $\alpha_{2}=0$, which makes the second order couple stress
tensor $\widetilde{m}$ symmetric and trace free \cite{dahler1963theory}.
This conformal curvature case has been considered by Neff in \cite{Neff_Jeong_IJSS09},
the curvature energy having the form
\begin{align}
\widetilde{W}_{{\rm curv}}(\nabla\curl\, u) & =\frac{\alpha_{1}}{4}\,\,\|\sym\nabla\curl\, u\|^{2}.
\end{align}
Indeed, there are two major reasons uncovered in \cite{Neff_Jeong_IJSS09}
for using the modified couple stress model. First, in order to avoid
singular stiffening behaviour for smaller and smaller samples in bending
\cite{Neff_Jeong_bounded_stiffness09} one has to take $\alpha_{2}=0$.
Second, based on a homogenization procedure invoking an intuitively
appealing natural ``micro-randomness\textquotedbl{} assumption (a
strong statement of microstructural isotropy) requires conformal invariance,
which is again equivalent to $\alpha_{2}=0$. Such a model is still
well-posed \cite{Neff_JeongMMS08} leading to existence and uniqueness
results with only one additional material length scale parameter,
while it is \textbf{not} pointwise uniformly positive definite. The initial motivation of Yang et al. \cite{Yang02} for using the modified couple stress model is based on incomplete arguments \cite{MunchYang}, even if their conclusions concerning a symmetric couple stress tensor may be kept in some particular phenomenological cases \cite{MunchYang}.
\end{itemize}

\subsection{\label{sub:Diff_Geometry}Brief digression concerning differential
geometry}

When dealing with higher order theories it is suitable to introduce (see also \cite{TheseSeppecher,dell2012beyond2} for details)
two second order tensors $T$ and $Q$ which are the two projectors
on the tangent plane and on the normal to the considered surface,
respectively. As it is well known from differential geometry, such
projectors actually allow to split a given vector or tensor field
in one part projected on the plane tangent to the considered surface
and one projected on the normal to such surface.
  We can
introduce the quoted projectors as
\[
T=\tau\otimes\tau+\nu\otimes\nu=\id-{n}\otimes{n},\qquad{Q}={n}\otimes{n}.
\]
It is easy to check that the following identities are verified by
the two introduced projectors%
\begin{equation}
T+Q=\id,\qquad T\cdot T=T,\qquad Q\cdot Q=Q,\qquad T\cdot Q=0,\qquad T=T^{T},\qquad Q=Q^{T}.\label{eq:Projectors_Properties}
\end{equation}
It is then straightforward that any vector $v$ can be represented in the
local basis $\left\{ \tau,{n},\nu\right\} $ as
\[
v=v\cdot(T+Q)=(T+Q)\cdot v=\underbrace{\langle v,\tau\rangle\,\tau+\langle v,\nu\rangle\,\nu}_{T\cdot v}+\underbrace{\langle v,n\rangle\, n}_{Q\cdot v}\,,
\]
or equivalently the components of $v$ can be written as
\[
v_{i}=v_{h}\, T_{hi}+v_{h}\, Q_{hi}=T_{ih}\, v_{h}+Q_{ih}\, v_{h}.
\]
Analogously, a second order tensor $ {B}$ can be represented in such
local coordinates as $ {B}=(T+Q)^{T}\cdot  {B}\cdot(T+Q)$ and this representation
can be also straightforwardly generalized to a tensor of generic order
$N$. To the sake of generality, from now on we introduce a global
orthonormal basis $\left\{ {e}^{1},{e}^{2},{e}^{3}\right\} $
in which all fields (included $\tau,\nu$ and $n$) will
be represented if not differently specified. This also implies that all the space differential
operations which will be performed in the following calculations are
all referred to the coordinates $\left(X_{1},X_{2},X_{3}\right)$
associated to such global basis.

From \cite[p.~58, ex.~7]{gurtin2010mechanics}, we have the following
variant of the surface divergence theorem \begin{proposition} Suppose
that $w$ satisfies $\langle w,n\rangle=0$ on the surface $S\subset \mathbb{R}^3$, then
\begin{align}
\int_{S}\langle T,\nabla w\rangle \,da=\int_{\partial S}\langle n\times w,\tau\rangle \,ds,
\end{align}
where $\tau$ is a vector tangent to $S$ and to $\partial S$. \end{proposition}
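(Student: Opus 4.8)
The plan is to reduce the statement to the classical surface divergence theorem for a tangential vector field, the two sides of the asserted identity being linked through the projector algebra \eqref{eq:Projectors_Properties} and the relation $\tau=n\times\nu$.

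First I would rewrite the integrand on the left. Since $T=\id-n\otimes n$ is symmetric,
\[
\langle T,\nabla w\rangle=\langle\id,\nabla w\rangle-\langle n\otimes n,\nabla w\rangle=\tr(\nabla w)-\langle\nabla w\cdot n,n\rangle={\rm div}\,w-\langle\nabla w\cdot n,n\rangle=:{\rm div}_S\,w,
\]
which is precisely the surface (tangential) divergence of $w$. The key point is that this quantity only involves derivatives of $w$ in directions tangent to $S$, hence it is intrinsic to the surface and independent of any extension of $w$ off $S$.

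Next I would rewrite the boundary integrand. Because $\tau=n\times\nu$ with $\nu$ tangent to $S$ and normal to $\partial S$, the Lagrange identity $\langle a\times b,c\times d\rangle=\langle a,c\rangle\langle b,d\rangle-\langle a,d\rangle\langle b,c\rangle$ gives
\[
\langle n\times w,\tau\rangle=\langle n\times w,n\times\nu\rangle=\langle n,n\rangle\langle w,\nu\rangle-\langle n,\nu\rangle\langle w,n\rangle=\langle w,\nu\rangle,
\]
since $\langle n,n\rangle=1$ and $\langle n,\nu\rangle=0$. Thus the asserted formula is equivalent to $\int_S{\rm div}_S\,w\,da=\int_{\partial S}\langle w,\nu\rangle\,ds$. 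This I would obtain from the general surface divergence theorem
\[
\int_S{\rm div}_S\,w\,da=\int_{\partial S}\langle w,\nu\rangle\,ds+\int_S({\rm div}_S\,n)\,\langle w,n\rangle\,da,
\]
in which ${\rm div}_S\,n$ is twice the mean curvature of $S$: the hypothesis $\langle w,n\rangle=0$ on $S$ annihilates the curvature correction and leaves exactly the desired equality.

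The real work — and the step I expect to be the main obstacle — is the surface divergence theorem itself. I would prove it with a partition of unity subordinate to local parametrizations $\varphi\colon U\subset\R^2\to S$: pulling ${\rm div}_S\,w$ back to the parameter domain turns it into an ordinary planar divergence weighted by the area element $\sqrt{\det g}$, after which Green's theorem on $U$ produces the boundary term. The delicate bookkeeping is to verify that the metric and Jacobian factors recombine so that the boundary integrand is exactly $\langle w,\nu\rangle\,ds$, with $\nu$ the outward conormal and $ds$ the arclength, and that the interior remainder is precisely $({\rm div}_S\,n)\langle w,n\rangle$. Since for our tangential $w$ this remainder vanishes, one may equally prove the simpler tangential version directly, which avoids the curvature computation altogether.
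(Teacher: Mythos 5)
Your argument is correct. Note first that the paper itself offers no proof of this proposition: it is quoted directly from Gurtin, Fried and Anand \cite[p.~58, ex.~7]{gurtin2010mechanics}, so there is no internal proof to match your attempt against. On its own merits your reduction is sound: $\langle T,\nabla w\rangle=\tr(\nabla w\cdot T)$ is the tangential (surface) divergence of $w$ and depends only on $w\big|_{S}$; the Lagrange identity together with $\tau=n\times\nu$ turns the boundary integrand into $\langle w,\nu\rangle$ (consistent with the paper's own convention $\nu=\tau\times n$); and the hypothesis $\langle w,n\rangle=0$ annihilates the mean-curvature term of the general surface divergence theorem, whose partition-of-unity proof you sketch correctly. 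A shorter, self-contained route --- very likely the one intended by the cited exercise --- is to apply the classical Stokes circulation theorem to the field $n\times w$, giving $\int_{\partial S}\langle n\times w,\tau\rangle\,ds=\int_{S}\langle \mathrm{curl}\,(n\times w),n\rangle\,da$, and then to verify the pointwise identity $\langle \mathrm{curl}\,(n\times w),n\rangle=\mathrm{div}\,w-\langle \nabla w\cdot n,n\rangle=\langle T,\nabla w\rangle$ by means of $\mathrm{curl}\,(a\times b)=a\,\mathrm{div}\,b-b\,\mathrm{div}\,a+\nabla a\cdot b-\nabla b\cdot a$: the term $b\,\mathrm{div}\,a$ contributes nothing because $\langle w,n\rangle=0$, and $\langle n,\nabla n\cdot w\rangle=\tfrac12\langle\nabla(\|n\|^{2}),w\rangle=0$ once $n$ is extended as a unit field. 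This avoids both the curvature computation and the chart-by-chart bookkeeping, but your version buys the more general statement with the curvature correction, which is occasionally useful elsewhere.
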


Taking now $w=T\cdot v$, for arbitrary $v\in\mathbb{R}^{3}$ and
using that
\[
\langle n\times\left(T\cdot v\right),\tau\rangle=\langle\tau\times n,T\cdot v\rangle=:\langle\nu,T\cdot v\rangle=\langle T\cdot\nu,v\rangle
\]
we obtain
\begin{align}
\int_{S}\langle T,\nabla\left(T\cdot v\right)\rangle \,da=\int_{\partial S}\langle n\times(T\cdot v),\tau\rangle \,ds=\int_{\partial S}\langle T\cdot\nu,v\rangle \,ds=\int_{\partial S}\langle\nu,v\rangle\, ds,
\end{align}
for $v\in\mathbb{R}^{3}$, since $T\cdot\nu=\nu\cdot T=\nu$. We explicitly
remark that the vector $\nu$ is tangent to the surface $S$ but orthogonal
to its boundary $\partial S$ (it points inward or outwards, depending
on the choice of the orientation of $n$ and $\text{\ensuremath{\tau}}$:
usually, $n$ is chosen to be the outward unit normal to $S$ and
the orientation of $\tau$ is chosen in order to have that the vector
$\nu$, orthogonal to the border of $S$, is pointing outward the
border of the surface $S$ itself). With such notations, when considering
a vector field $v$ defined in the vicinity of the considered surface,
the surface divergence theorem can be applied to the projection of
$v$ on the tangent plane to the surface as follows (see e.g. \cite{Spivak})
\begin{equation}
\int_{S}\mathrm{Div}^{S}\left(v\right)\,da:=\int_{S}\langle T,\nabla\left(T\cdot v\right)\rangle \,da=\int_{\partial S}\langle T\cdot v,{\nu}\rangle \,da=\int_{\partial S}\langle v,T\cdot{\nu}\rangle \,da=\int_{\partial S}\langle v,{\nu}\rangle \,ds,\label{eq:Surface_Div_Th}
\end{equation}
where, clearly, in the last equality the notation
\[
\mathrm{Div}^{S}\left({v}\right):=\langle\nabla\left(T\cdot v\right),T\rangle
\]
for the surface divergence of a generic vector ${v}$ has been used.
Equivalently, in index notation the surface divergence theorem reads
\begin{equation}
\int_{S}\left(T_{ij}v_{j}\right)_{,k}T_{ik}\,da=\int_{\partial S}v_{i}T_{ik}\nu_{k}\,ds=\int_{\partial S}v_{i}\,\nu_{i}\, ds.\label{eq:Surface_Div_Th-1}
\end{equation}
This definition of surface divergence can be extended to higher order
tensors, in particular, for a second order tensor $ {B}$, its surface
divergence is introduced as the vector of components $\left[\mathrm{Div}^{S}( {B})\right]_{i}=(T\cdot  {B})_{ij,k}T_{jk}$.

\begin{remark} We explicitly remark that if $S$ coincides with the
boundary $\partial\Omega$ of the considered body and $\Gamma$ is an open subset of $\partial \Omega$, then the surface
divergence theorem \eqref{eq:Surface_Div_Th} implies (see Fig. \ref{capture-bc}
and eq. \eqref{eq:Jump})
\begin{align}
\int_{\partial\Omega}\mathrm{Div}^{S}\left(v\right)\,da&:=\int_{\partial\Omega}\langle T,\nabla\left(T\cdot v\right)\rangle \, da=\int_{\partial\Omega\setminus\Gamma}\mathrm{Div}^{S}\left(v^{-}\right)\,da+\int_{\Gamma}\mathrm{Div}^{S}\left(v^{+}\right)\,da
\label{eq:Surface_Div_Th-2}\\
&\,\, =\int_{\partial\left(\partial\Omega\setminus\Gamma\right)}\langle v^{-},{\nu^{-}}\rangle \,ds+\int_{\partial\Gamma}\langle v^{+},{\nu^{+}}\rangle \, ds=\int_{\partial\Gamma}\left\jump{\langle \:\: v\text{ },\:\nu\right.\rangle}\, ds.\nonumber
\end{align}
 Clearly, we can equivalently write in index notation
\begin{equation}
\int_{\partial\Omega}\left(T_{ij}v_{j}\right)_{,k}T_{ik}\,da=\int_{\partial\Gamma}\jump{\: v_{j}\:\nu_{j}}\,ds.\label{eq:Surface_Div_Th-1-1}
\end{equation}
\end{remark}

\subsection{Weakly and strongly independent surface fields}\label{newsubsectionmadeo}

Since it will be useful in the following, we give in this subsection some definitions which will be used throughout the paper. In particular, we introduce the notions of "strongly" and "weakly independent" vector fields defined on suitable regular surfaces.
\begin{definition}{\rm [Weakly  independent surface fields]}
	Given two vector fields $u$ and $v$ defined on a suitable regular surface $S$, we say that they are "weakly independent" if we can arbitrarily assign $u$ and $v$, independent of each other, by choosing two vectors $\overline{u}$ and $\overline{v}$ and a function $f$ such that\begin{align}
	u=\overline{u} \qquad \qquad \text{and}\qquad\qquad v=f(\overline{u},\overline{v}).
	\end{align}
	By means of the notion of weak independence, we can arbitrarily fix the vectors $u$ and $v$ on the surface $S$, but a variation of the chosen $\overline{u}$ induces a variation on one part of $v$. Nevertheless, the vector $v$ can still be arbitrarily chosen thanks to the arbitrariness  of $\overline{v}$ and $f$.
\end{definition}

\begin{definition}{\rm [Strongly independent surface fields]}
	Given two vector fields $u$ and $v$ defined on a suitable regular surface $S$, we say that they are "strongly independent" if we can arbitrarily assign $u$ and $v$, independent of each other,   by choosing two vectors $\overline{u}$ and $\overline{v}$ such that\begin{align}
	u=\overline{u} \qquad \qquad \text{and}\qquad\qquad v=\overline{v}.
	\end{align}
	The notion of strong independence allows to arbitrarily fix the two vector fields $u$ and $v$ on the surface $S$ and a variation in the choice of the first vector $u=\overline{u}$ does not induce a variation in the vector $v$. 
\end{definition}

Considering the power of external actions for a second gradient continuum
$$ \mathcal{P}^{ext}=\int_{\partial \Omega} \langle t^{ext},\delta u \rangle + \int_{\partial \Omega} \langle g^{ext}, D^1(\delta u) \rangle,$$
where $ D^1(\delta u) $ is a suitable first order space differential operator, we say, by extension of the former definition, that $ t^{ext}$ and $ g^{ext}$ are strongly and weakly independent tractions whenever they are the conjugates of strongly or weakly independent surface vector fields $\delta u$ and $ D^1(\delta u) $.

\vskip0.5cm

An example of strong and weak independence which will be pertinent in the framework of the present paper is that which can be established between the displacement assigned on the surface and its $\mathrm{curl}$ or, alternatively, its normal derivative.
Indeed, if for convenience we choose an orthonormal local basis $\{\tau,\nu,n\}$ on $S$, where $\tau$ and $\nu$ are unit tangent vectors to $S$, while $n$ is the outward unit normal vector, we can recognize that

\begin{equation}\nabla u \cdot n= \left( \frac{\partial u_\tau}{\partial x_n}, \ \frac{\partial u_\nu}{\partial x_n}, \ \frac{\partial u_n}{\partial x_n} \right)^T  \quad \text{and} \quad  
\mathrm{curl}\,u=\left( \frac{\partial u_n}{\partial x_\nu}- \frac{\partial u_\nu}{\partial x_n}, \ \frac{\partial u_\tau}{\partial x_n}- \frac{\partial u_n}{\partial x_\tau}, \ \frac{\partial u_\nu}{\partial x_\tau}- \frac{\partial u_\tau}{\partial x_\nu} \right)^T , \label{Curl_NormDer}
\end{equation}
where we clearly indicated by $u_\tau$, $u_\nu$ and $u_n$ the components of the displacement field in such local basis and $x_\tau$, $x_\nu$ and $x_n$ the space coordinates of a generic material point in the same reference frame.
 
It is known that the fact of fixing the displacement field $u$ on the surface does not fix its normal derivatives $\partial u_\tau/\partial x_n, \, \partial u_\nu/\partial x_n, \,\partial u_n/\partial x_n$, while it fixes the tangential derivatives $\partial u_n/\partial x_\nu, \, \partial u_n/\partial x_\tau, \, \partial u_\nu/\partial x_\tau, \, \partial u_\tau/\partial x_\nu$. In this optic, we can say that $u$ and $\nabla u \cdot n$ are strongly independent surface vector fields, while $u$ and $\mathrm{curl}\, u$ are weakly independent surface vector fields. Indeed, even if when fixing the displacement $u$ on the surface its tangential derivatives result to be fixed, the arbitrariness of the normal derivatives still allows to globally choose $\mathrm{curl}\, u$ in an arbitrary way. 

Therefore (regarding the $\curl u$), it is impossible to fix as constant the $\curl u$ but varying only $u$. In contrast, it is possible to fix as constant $\nabla u.\, n$ and to vary $u$ arbitrarily. In summary in this example
\begin{center}
\begin{itemize}
	\item[] \qquad\qquad\qquad $u$ \qquad and \qquad $\nabla u.\, n$ \qquad are strongly independent,
	\item[] \qquad\qquad\qquad $u$ \qquad and \qquad $\curl u$ \qquad are (only) weakly independent.
\end{itemize}
\end{center}
\section{Bulk equations and (Mindlin's weakly independent) boundary conditions in the indeterminate couple stress model}\label{directsection}\setcounter{equation}{0}
In this section we re-propose the results presented by Mindlin and Tiersten but explicitly using a variational procedure. More particularly, we try to explicitly present some reasonings that can be made to obtain their bulk equations and boundary conditions. Nevertheless, how we will show in due detail in the remainder of this paper, such reasonings are not in complete agreement with other sets of boundary conditions which can be assigned when dealing with a couple stress theory, and which are equally legitimate.
\subsection{Equilibrium and constitutive equations of the indeterminate couple stress model}
Since we consider that the solution belongs to ${\rm H}^1(\Omega)$,  we take free variations $\delta u\in C^\infty(\overline{\Omega})$ of the energy
\begin{align}W(\nabla u,\nabla  \curl u)=W_{\rm lin}( \nabla u)+\widetilde{W}_{\rm curv}(\nabla  \curl u), \label{energy_couple_stress} \end{align} where
\begin{align}\label{gradeq11}
W_{\rm lin}(\nabla u)=&\,\mu\, \|\sym \nabla u\|^2+\frac{\lambda}{2}\, [\tr(\nabla u)]^2=\mu\, \|\dev \sym \nabla u\|^2+\frac{2\, \mu+3\,\lambda}{6}\, [\tr(\nabla u)]^2,\notag\\
\widetilde{W}_{\rm curv}(\nabla  \curl u)=&\, \alpha_1\, \|\dev\sym \nabla [\axl (\skw \nabla u)]\|^2+
\alpha_2\, \|\skw\nabla [\axl (\skw \nabla u)]\|^2,
\end{align}
 and we obtain the first variation of the action functional as
 \begin{align}\label{gradeq211}
\delta\mathcal{A}=\delta \int_\Omega -W(\nabla u,\nabla\axl(\skw\nabla u))\,dv=&-\int_\Omega\bigg[ 2\, \mu\,\langle\sym \nabla u, \sym \nabla \delta u \rangle+\lambda \tr(\nabla u)\,\tr( \nabla   \delta u)\notag\\&+2\,  \alpha_1\,[ \langle \dev\sym \nabla [\axl (\skw \nabla u)],\dev\sym \nabla [\axl (\skw \nabla \delta u)]\rangle \\&+
2\, \alpha_2\, \langle \skw\nabla [\axl (\skw \nabla u)],\skw\nabla [\axl (\skw \nabla \delta u)]\rangle] \bigg]\, dv. \notag
\end{align}
Or equivalently, applying  the classical divergence theorem
 \begin{gather}
\delta\mathcal{A}=\int_\Omega \langle \mathrm{Div} \left(2\,\mu\, \sym \nabla u+\lambda \tr (\nabla u) \mathds{1} \right), \delta u \rangle \,dv-\int_{\partial \Omega} \langle \left(2\mu\, \sym \nabla u+\lambda \tr (\nabla u) \mathds{1} \right)\cdot n,\delta u\rangle\,da \nonumber \\
-2\int_\Omega \bigg\langle\alpha_{1}\,\dev\sym[\nabla\axl(\skw\nabla u)]+\alpha_{2}\,\skw[\nabla\axl(\skw\nabla u)],\nabla\axl(\skw\nabla\delta u)]\bigg\rangle\,dv. \label{gradeq211}
\end{gather}
The classical divergence theorem can be applied again to the last term appearing in the previous equation. In particular, we can notice that the following chain of equalities holds
\begin{align}
 & \int_{\Omega}\bigg\langle\alpha_{1}\,\dev\sym[\nabla\axl(\skw\nabla u)]+\alpha_{2}\,\skw[\nabla\axl(\skw\nabla u)],\nabla\axl(\skw\nabla\delta u)]\bigg \rangle\,dv\qquad\notag\\
&=\int_{\Omega}\bigg\langle- {\rm Div}\bigg(\alpha_{1}\,\dev\sym[\nabla\axl(\skw\nabla u)]+\alpha_{2}\,\skw[\nabla\axl(\skw\nabla u)]\bigg),\axl(\skw\nabla\delta u)]\bigg\rangle\,dv\notag\\
  &\qquad+\int_{\partial\Omega}\bigg \langle\bigg[\alpha_{1}\,\dev\sym[\nabla\axl(\skw\nabla u)]+\alpha_{2}\,\skw[\nabla\axl(\skw\nabla u)]\bigg]\cdot n,\axl(\skw\nabla\delta u) \bigg \rangle \,da\notag\\
&=\frac{1}{2}\int_{\Omega}\bigg  \langle-\anti\{{\rm Div}(\alpha_{1}\,\dev\sym[\nabla\axl(\skw\nabla u)]+\alpha_{2}\,\skw[\nabla\axl(\skw\nabla u)])\},\skw\nabla\delta u\bigg \rangle\,dv\notag\\
 &\qquad +\int_{\partial\Omega}\bigg\langle\bigg[\alpha_{1}\,\dev\sym[\nabla\axl(\skw\nabla u)]+\alpha_{2}\,\skw[\nabla\axl(\skw\nabla u)]\bigg]\cdot n,\axl[\skw\nabla\delta u]\bigg \rangle \,da\notag\\
&=\frac{1}{2}\int_{\Omega}\bigg  \langle-\skw\anti\{{\rm Div}(\alpha_{1}\,\dev\sym[\nabla\axl(\skw\nabla u)]+\alpha_{2}\,\skw[\nabla\axl(\skw\nabla u)])\},\nabla\delta u\bigg\rangle\,dv\notag\\
 &\qquad +\int_{\partial\Omega}\bigg \langle\bigg[\alpha_{1}\,\dev\sym[\nabla\axl(\skw\nabla u)]+\alpha_{2}\,\skw[\nabla\axl(\skw\nabla u)]\bigg]\cdot n,\axl(\skw\nabla\delta u)\bigg\rangle \,da\notag\\
&=\frac{1}{2}\int_{\Omega}  \bigg\langle\Div[\skw\anti\{{\rm Div}(\alpha_{1}\,\dev\sym[\nabla\axl(\skw\nabla u)]+\alpha_{2}\,\skw[\nabla\axl(\skw\nabla u)])\}],\delta u\bigg\rangle\,dv\notag\\
 &\qquad -\frac{1}{2}\int_{\partial \Omega}\bigg\langle[\skw\anti\{{\rm Div}(\alpha_{1}\,\dev\sym[\nabla\axl(\skw\nabla u)]+\alpha_{2}\,\skw[\nabla\axl(\skw\nabla u)])\}]\cdot n,\delta u\bigg\rangle\,da\notag\\
 &\qquad +\int_{\partial\Omega}\bigg\langle\bigg[\alpha_{1}\,\dev\sym[\nabla\axl(\skw\nabla u)]+\alpha_{2}\,\skw[\nabla\axl(\skw\nabla u)]\bigg]\cdot n,\axl(\skw\nabla\delta u)\bigg\rangle \,da,\notag
\end{align}
where $n$ is the unit outward normal vector at the surface $\partial\Omega$.
Hence, recalling that $\skw\anti(\cdot)=\anti(\cdot)$, the first variation \eqref{gradeq211} of the action functional can be finally rewritten as
\begin{align}
\delta\mathcal{A}=\int_{\Omega}\langle\Div(\sigma-\widetilde{\tau}),\delta u\rangle\, dv \nonumber -\int_{\partial\Omega}\langle(\sigma-\widetilde{\tau})\cdot n,\delta u\rangle\, dv-\int_{\partial\Omega} & \langle\widetilde{m}\cdot n,\axl(\skw\nabla\delta u)\rangle \,da,\label{germaneq3110}
\end{align}
for all virtual displacements $\delta u\in C^{\infty}(\overline{\Omega})$, where
$\sigma$ is the symmetric local force-stress tensor
\begin{align}
\sigma=2\,\mu\,\sym\nabla u+\lambda\,\tr(\nabla u)\id\in{\rm Sym}(3),
\end{align} $\widetilde{\tau}$ represents the second order nonlocal force-stress tensor
(which here is automatically skew-symmetric)
\begin{align}
\widetilde{\tau} & ={\alpha_{1}}\,\anti{\rm Div}(\dev\sym[\nabla\axl(\skw\nabla u)])+{\alpha_{2}}\,\anti{\rm Div}(\skw[\nabla\axl(\skw\nabla u)]\\
 & =\frac{\alpha_{1}}{2}\anti{\rm Div}(\dev\sym[\nabla\curl u])+\frac{\alpha_{2}}{2}\,\anti{\rm Div}(\skw[\nabla\curl u])\notag\\
 & =\anti{\rm Div}\left[\frac{\alpha_{1}}{2}\dev\sym(\nabla\curl u)+\frac{\alpha_{2}}{2}\,\skw(\nabla\curl u)\right]\notag\\
 & =\frac{1}{2}\anti{\rm Div}\left[\widetilde{m}\right]\in\so(3),\notag
\end{align}
and
\begin{align}
\widetilde{m} & =2\,{\alpha_{1}}\dev\sym(\nabla\axl(\skw\nabla u))+2\,{\alpha_{2}}\,\skw(\nabla\axl(\skw\nabla u)) \notag\\
&=[{\alpha_{1}}\dev\sym(\nabla\curl u)+{\alpha_{2}}\,\skw(\nabla\curl u)] \label{miutilde}\\
 & =\left[\frac{\alpha_{1}+\alpha_{2}}{2}\ \nabla\curl\, u+\frac{\alpha_{1}-\alpha_{2}}{2}\ \left[\nabla\curl\, u\right]^{T}\right]  \notag
\end{align}
is the second order hyperstress tensor (couple stress) which may or may not be
symmetric, depending on the material parameters.  The asymmetry of force stress is a hidden constitutive assumption, compared to the development in \cite{NeffGhibaMadeoMunch}.
Postulating a particular form of the power of external actions, the equilibrium equation can therefore be written as
 \begin{align}\label{ec11}
 \Div \,\widetilde{\sigma}_{\rm total}+f^{\rm ext}=0,
 \end{align}
where we clearly set $\widetilde{\sigma}_{\rm total}=\sigma-\widetilde{\tau}$.

\subsection{Classical (weakly independent) boundary conditions in the indeterminate couple stress model}\label{MindlinReas}
In the previous section, we have shown that  the power of internal actions can be finally written
in compact form as
\begin{equation}
\mathcal{P}^{\rm int}=\delta \mathcal{A}=\int_{\Omega}\langle\mathrm{Div}(\sigma-\widetilde{\tau}),\delta u\rangle \,dv-\int_{\partial\Omega}\langle(\sigma-\widetilde{\tau})\cdot n,\delta u\rangle\, da-\frac{1}{2}\int_{\partial\Omega}\langle\widetilde{m}\cdot n,\curl\delta u\rangle\, da\,.\label{integralasuprafata}
\end{equation}

Such form of the power of internal actions seems to suggest 6 possible independent
prescriptions of mechanical boundary conditions; three for the normal
components of the total force stress $(\sigma-\widetilde{\tau})\cdot n$
and three for the normal components of the couple stress tensor. The
possible Dirichlet boundary conditions on $\Gamma\subset\partial\Omega$
would seem to be the 6 conditions%
\footnote{as indeed proposed by Grioli \cite{Grioli60} in concordance with
the Cosserat kinematics for independent fields of displacements and
microrotation.%
}
\begin{align}
u={u}^{\rm ext},\qquad\axl(\skw\nabla u)=\frac{1}{2}\mathrm{curl}\, u=\widetilde{a}^{\rm ext}\quad(\text{or equivalently}\ \ \curl u=2\,\widetilde{a}^{\rm ext}),\label{bc11}
\end{align}
for two given functions ${u}^{\rm ext},\widetilde{a}^{\rm ext}:\mathbb{R}^{3}\rightarrow\mathbb{R}^{3}$
at the boundary $\Gamma\subset\partial\Omega$ (3+3 boundary conditions).

However, following Koiter \cite{Koiter64} we must note that the following remark
 holds true: \begin{remark} Assume that $u\in C^{\infty}(\overline{\Omega})$
and $u\Big|_{\Gamma}$ is known. Then ${\rm curl}\, u\Big|_{\Gamma}$
exists and for all open subsets $\Gamma\subset\partial\Omega$
the integral $\int_{\Gamma}\langle{\rm curl}\,u,n\rangle\, da$
is already known, while $\int_{\Gamma}\langle\curl u,\tau\rangle\, da$
is still free, where $\tau$ is any tangential vector field on $\Gamma\subset\partial\Omega$.
This fact follows using Stokes' circulation theorem
\begin{align}
\int_{\Gamma}\langle{\rm curl}\, u,n\rangle\, da=\int_{\partial\Gamma}\langle u(\gamma(t)),\gamma^{\prime}(t)\rangle\, ds,
\end{align}
where $\tau$ is a continuous unit vector field tangent to the curve
$\partial\Gamma=\{\gamma(t)\,|\, t\in[a,b]\}$ compatible
with the unit vector field $n$ normal to the surface $\Gamma$.
\end{remark}

This leads us to the next correct observation: \begin{remark}\label{remarkcc} Only the two tangential
components of ${\rm curl}\, u$ may be independently prescribed on
an open subset of the boundary. However, we may have six independent
conditions in one point on $\Gamma$, but not on an open subset of
the boundary. \end{remark}

Already Mindlin and Tiersten \cite{Mindlin62} have also correctly remarked
that  in this formulation only 5 mechanical boundary conditions
can be prescribed. Using our notations, their argument runs as follows:
\begin{align}&
\frac{1}{2}\langle\widetilde{m} \cdot\, n,\curl\delta u\rangle=\frac{1}{2}\langle(T+Q)\cdot(\widetilde{m}\cdot n),\curl\delta u\rangle=\frac{1}{2}\langle T\cdot(\widetilde{m}\cdot n),\curl\delta u\rangle+\frac{1}{2}\langle Q\cdot(\widetilde{m}\cdot n),\curl\delta u\rangle\notag\label{MEs}\\
 & =\frac{1}{2}\langle T\cdot(\widetilde{m}\cdot n),\curl\delta u\rangle+\frac{1}{2}\langle(n\otimes n)\cdot\left(\widetilde{m}\cdot n\right),\curl\delta u\rangle=\frac{1}{2}\langle T\cdot(\widetilde{m}\cdot n),\curl\delta u\rangle+\frac{1}{2}\langle\left(\langle(\mathrm{sym}\,\widetilde{m})\cdot n,n\rangle\right)n,\curl\delta u\rangle\notag\\
 & =\frac{1}{2}\langle T\cdot(\widetilde{m}\cdot n),\curl\delta u\rangle+\frac{1}{2}\langle n,\left(\langle(\mathrm{sym}\,\widetilde{m})\cdot n,n\rangle\right)\curl\delta u\rangle\\
 & =\frac{1}{2}\langle T\cdot(\widetilde{m}\cdot n),\curl\delta u\rangle+\frac{1}{2}\langle n,\curl\left[\left(\langle(\mathrm{sym}\,\widetilde{m})\cdot n,n\rangle\right)\delta u\right]\rangle-\frac{1}{2}\langle n,\nabla\left(\langle(\mathrm{sym}\,\widetilde{m})\cdot n,n\rangle\right)\times\delta u\rangle\notag\\
 & =\frac{1}{2}\langle T\cdot(\widetilde{m}\cdot n),\curl\delta u\rangle+\frac{1}{2}\langle n,\curl\left[\left(\langle(\mathrm{sym}\,\widetilde{m})\cdot n,n\rangle\right)\delta u\right]\rangle-\frac{1}{2}\langle n\times\nabla\left(\langle(\mathrm{sym}\,\widetilde{m})\cdot n,n\rangle\right),\delta u\rangle\notag\\
 & =\frac{1}{2}\langle T\cdot(\widetilde{m}\cdot n),\curl\delta u\rangle+\frac{1}{2}\langle n,\curl\left[\left(\langle(\mathrm{sym}\,\widetilde{m})\cdot n,n\rangle\right)\delta u\right]\rangle-\frac{1}{2}\langle n\times\nabla\left(\langle(\mathrm{sym}\,\widetilde{m})\cdot n,n\rangle\right),\delta u\rangle,\notag
\end{align}
where $\otimes$ denotes the dyadic product of two vectors, we have
used the property $(\eta\otimes\xi)\cdot a=\eta\,\langle\xi,a\rangle$
(for vectors $\eta$, $\xi$ and $a$), the formula $\curl(\psi\,\,\delta u)=\nabla\psi\,\times\delta u+\psi\,\,\curl\delta u$
(for any scalar field $\psi\,$) and the fact that $n\otimes n$ is
a symmetric second order tensor. The power of internal actions \eqref{integralasuprafata}
can hence be rewritten as
\begin{align*}
\mathcal{P}^{\rm int}=&\int_{\Omega}\langle\mathrm{Div}(\sigma-\widetilde{\tau}),\delta u\rangle\, dv-\int_{\partial\Omega}\langle(\sigma-\widetilde{\tau})\cdot n,\delta u\rangle\, da\\
&-\int_{\partial\Omega}  \left\{ \frac{1}{2}\langle T\cdot(\widetilde{m}\cdot n),\curl\delta u\rangle +\frac{1}{2}\langle n,\curl\left[\left(\langle(\mathrm{sym}\,\widetilde{m})\cdot n,n\rangle\right)\delta u\right]\rangle-\frac{1}{2}\langle n\times\nabla\left(\langle(\mathrm{sym}\,\widetilde{m})\cdot n,n\rangle\right),\delta u\rangle\right\}\, da.
\end{align*}
It can be remarked that the second term in the last surface integral
can be rewritten as a bulk integral by means of the divergence theorem,
so that the power of internal actions can also be rewritten
in a further equivalent form in addition to the one already established in \eqref{integralasuprafata}
\begin{align}
\mathcal{P}^{\rm int}= & \int_{\Omega}\langle\mathrm{Div}(\sigma-\widetilde{\tau}),\delta u\rangle\, dv-\int_{\Omega}\frac{1}{2}\,\mathrm{div}\left\{ \curl\left[\left(\langle(\mathrm{sym}\,\widetilde{m})\cdot n,n\rangle\right)\delta u\right]\right\}\, dv
\label{eq:Pint_Mindlin}\\
 & -\int_{\partial\Omega}\langle(\sigma-\widetilde{\tau})\cdot n-\frac{1}{2}n\times\nabla\left(\langle(\mathrm{sym}\,\widetilde{m})\cdot n,n\rangle\right),\delta u\rangle\, da-\int_{\partial\Omega}\frac{1}{2}\langle(\widetilde{m}\cdot n),T\cdot\curl\delta u\rangle\, da,\nonumber
\end{align}
where the fact that the tangential projector $T$ is symmetric has
also be used.

Mindlin and Tiersten \cite{Mindlin62} concluded that 3 boundary conditions
derive from the first surface integral  and two other from
the second surface integral, since \cite[p.~432]{Mindlin62} \textit{``the
normal component of the couple stress vector {\rm [}$\langle\widetilde{m}\cdot n,n\rangle=\langle(\mathrm{sym}\,\widetilde{m})\cdot n,n\rangle${\rm ]}
on $\partial\Omega$ enters only in the combination with the force-stress
vector shown in the coefficient of $\delta u$ in the surface integral ...''}.
Indeed, Mindlin and Tiersten are assuming to assign arbitrarily the displacement and
the tangential components of its curl on the surface $\partial\Omega$. 

As we will deeply discuss in 
 Section \ref{sectaxlb}, this choice leads to a possible set of boundary conditions in the indeterminate couple stress model. Nevertheless, this choice is not unique and assigning at the boundary different virtual fields as the virtual displacement and its normal derivative will lead us to a set of boundary conditions that are not  equivalent to those proposed by Mindlin and Tiersten.

\subsubsection{Geometric (essential, or kinematical) weakly independent boundary conditions}

Based on the expression \eqref{eq:Pint_Mindlin} of the power of internal
actions, Mindlin and Tiersten \cite{Mindlin62} concluded that the
geometric boundary conditions on $\Gamma\subset\partial\Omega$ are
the five independent conditions
\begin{align}\label{bcme1}
 \begin{array}{rcll}
\qquad u\big|_{\Gamma} & = & u^{\rm ext},&\qquad\qquad\qquad\qquad\qquad{\rm (3bc)}\vspace{1.5mm}\\
\qquad(\id-n\otimes n)\cdot\curl u\big|_{\Gamma} & = & (\id-n\otimes n)\cdot \widetilde{a}^{\rm ext}, &\qquad\qquad\qquad\qquad\qquad{\rm (2bc)}
\end{array}
\end{align}
for  given functions $u^{\rm ext}, \widetilde{a}^{\rm ext}:\mathbb{R}^{3}\rightarrow\mathbb{R}^{3}$
on the portion $\Gamma$ of the boundary.

An equivalent form of the above boundary condition is
\begin{align}\label{bcme10}
\hspace{-0.6cm}\begin{array}{rcll}
u\big|_{\Gamma} & = & u^{\rm ext},&\qquad\qquad\qquad{\rm (3bc)}\vspace{1.5mm}\\
(\id-n\otimes n)\cdot\axl(\skw \nabla\, u)\big|_{\Gamma} & = &\frac{1}{2} (\id-n\otimes n)\cdot \widetilde{a}^{\rm ext},&\qquad\qquad\qquad{\rm (2bc)}
\end{array}
\end{align}
for  given functions $u^{\rm ext}, \widetilde{a}^{\rm ext}:\mathbb{R}^{3}\rightarrow\mathbb{R}^{3}$
at the boundary.
The latter condition prescribes only the tangential
component of $\axl(\skew\nabla u)=\frac{1}{2}\,\mathrm{curl}\, u$. Therefore,
we may prescribe only 3+2 independent geometric boundary conditions. Regarding
this formulation, an existence result was proven in \cite{Neff_JeongMMS08}.

In order to give a first comparison with the boundary conditions which are coming from the full strain gradient approach, let us remark that:
\begin{lemma}\label{lemmadesprecurlaxl} {\rm {[}Equivalence of geometric
boundary conditions{]}} We consider a vector field\break  $u:\mathbb{R}^{3}\rightarrow\mathbb{R}^{3}$,
$u\in C^{\infty}(\Omega)$. The following sets of boundary conditions
are equivalent:
\begin{align} \label{equivalentBCs}
\hspace{-0.6cm}\left.\begin{array}{rcl}
u\big|_{\Gamma} \!\!\!& = &\!\!\!  u^{\rm ext},\vspace{1.5mm}\\
(\id-n\otimes n)\cdot{\rm curl}\, u\big|_{\Gamma} \!\!\!& = &\!\!\! (\id-n\otimes n)\cdot \widetilde{a}^{\rm ext}
\end{array}\!\!\!\right\} \ \Leftrightarrow\ \left\{\!\!\! \begin{array}{rcl}
u\big|_{\Gamma}\!\!\!& = &\!\!\!  u^{\rm ext},\vspace{1.5mm}\\
(\id-n\otimes n)\cdot\nabla u\cdot n\big|_{\Gamma} \!\!\!& = &\!\!\! (\id-n\otimes n)\cdot  a^{\rm ext}
\end{array}\right.
\end{align}
in the sense that one set of boundary conditions defines completely
the other set of boundary conditions, where $n$ is the unit outward
normal vector at the surface ${\Gamma}\subset\partial\Omega$ and $\widetilde{a}^{\rm ext}$ and $a^{\rm ext}$ can be a priori related. \end{lemma}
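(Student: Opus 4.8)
The plan is to work in the local orthonormal frame $\{\tau,\nu,n\}$ adapted to $\Gamma$ and to read off the two tangential objects directly from the component formulas \eqref{Curl_NormDer}. First I would apply the tangential projector $\id-n\otimes n$, which annihilates the $n$-component and retains only the $\tau$- and $\nu$-components. From \eqref{Curl_NormDer} this yields
\begin{align}
(\id-n\otimes n)\cdot\curl u &= \Big(\tfrac{\partial u_n}{\partial x_\nu}-\tfrac{\partial u_\nu}{\partial x_n}\Big)\,\tau+\Big(\tfrac{\partial u_\tau}{\partial x_n}-\tfrac{\partial u_n}{\partial x_\tau}\Big)\,\nu,\label{projcurl}\\
(\id-n\otimes n)\cdot\nabla u\cdot n &= \tfrac{\partial u_\tau}{\partial x_n}\,\tau+\tfrac{\partial u_\nu}{\partial x_n}\,\nu.\label{projnorm}
\end{align}

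Second, the crucial structural observation — already isolated in the discussion of weakly and strongly independent surface fields preceding the lemma — is that prescribing $u\big|_{\Gamma}=u^{\rm ext}$ fixes \emph{every} tangential derivative of $u$ along $\Gamma$, in particular $\partial u_n/\partial x_\tau$ and $\partial u_n/\partial x_\nu$, while leaving the normal derivatives $\partial u_\tau/\partial x_n$ and $\partial u_\nu/\partial x_n$ entirely free. Inserting this into \eqref{projcurl}--\eqref{projnorm} shows that, once $u^{\rm ext}$ is given, the tangential curl and the tangential normal derivative differ only by quantities that are completely determined by $u^{\rm ext}$: the $\tau$-component of \eqref{projcurl} equals $\partial u_n^{\rm ext}/\partial x_\nu$ minus the $\nu$-component of \eqref{projnorm}, and the $\nu$-component of \eqref{projcurl} equals the $\tau$-component of \eqref{projnorm} minus $\partial u_n^{\rm ext}/\partial x_\tau$.

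Third, these two scalar relations constitute an invertible affine map between the pair $\big((\id-n\otimes n)\cdot\curl u\big)\big|_{\Gamma}$ and the pair $\big((\id-n\otimes n)\cdot\nabla u\cdot n\big)\big|_{\Gamma}$, the ``shift'' being the known tangential derivatives of $u_n^{\rm ext}$. I would therefore \emph{define} $a^{\rm ext}$ and $\widetilde{a}^{\rm ext}$ to be related by exactly these two relations (with $u_n$ replaced by $u_n^{\rm ext}$); then, given $u\big|_{\Gamma}=u^{\rm ext}$, prescribing $(\id-n\otimes n)\cdot\curl u=(\id-n\otimes n)\cdot\widetilde{a}^{\rm ext}$ forces $(\id-n\otimes n)\cdot\nabla u\cdot n=(\id-n\otimes n)\cdot a^{\rm ext}$, and conversely. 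This establishes the equivalence in the stated sense that one set of boundary conditions defines the other completely, and simultaneously exhibits the a priori relation between $\widetilde{a}^{\rm ext}$ and $a^{\rm ext}$.

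The only genuinely delicate point, which I would treat with care, is the validity of the component formula \eqref{Curl_NormDer} underlying \eqref{projcurl}: it presupposes a right-handed orthonormal frame $\{\tau,\nu,n\}$ so that the three curl components carry the correct signs, and since this frame varies along the curved surface, all derivatives must be understood as evaluated at a fixed point using the frozen local frame there. Everything else is bookkeeping; the entire conceptual content resides in the statement that surface data fix the tangential derivatives of $u$ but not its normal derivative, so that the two sets of data carry exactly the same information modulo the known quantities $\partial u_n^{\rm ext}/\partial x_\tau$ and $\partial u_n^{\rm ext}/\partial x_\nu$.
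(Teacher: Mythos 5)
Your argument is correct, but it takes a different route from the paper's own proof. The paper (Appendix on the proof of the lemma) proceeds by a chain of abstract equivalences: it shows that each set of boundary conditions is equivalent to the common intermediate statement \(\{u\big|_{\Gamma}=u^{\rm ext},\ (\skew\nabla u)\cdot\tau_i\big|_{\Gamma}=\text{known}\}\), using Stokes' theorem to recover the normal component of \(\curl u\) from the surface data, decomposing \(\nabla u\cdot\tau_i\) into symmetric and skew parts, and then eliminating the redundant condition \((\skew\nabla u)\cdot n=\text{known}\) via \(\langle(\skew\nabla u)\cdot n,\tau_i\rangle=-\langle n,(\skew\nabla u)\cdot\tau_i\rangle\). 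You instead work concretely in the adapted frame \(\{\tau,\nu,n\}\) and read off from the component formula \eqref{Curl_NormDer} an explicit invertible affine map between the two tangential vectors, with shift given by the tangential derivatives \(\partial u_n^{\rm ext}/\partial x_\tau\) and \(\partial u_n^{\rm ext}/\partial x_\nu\), both determined by \(u^{\rm ext}\). Both proofs rest on the same structural fact (surface data fix tangential but not normal derivatives); what yours buys is that it exhibits the a priori relation between \(\widetilde{a}^{\rm ext}\) and \(a^{\rm ext}\) in closed form, which the paper's proof leaves implicit, while the paper's version is frame-free and avoids the (correctly flagged) care needed with a frame that varies along the curved surface. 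One small point worth making explicit in your write-up: \(u_n=\langle u,n\rangle\) is determined on \(\Gamma\) by \(u^{\rm ext}\) together with the known geometry of the surface, so its tangential derivatives are indeed data and not unknowns.
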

\begin{proof} 
	The proof is included in Appendix \ref{appendixlemmadesprecurlaxl}.
	\end{proof}

In summary, we can say that if purely kinematical boundary conditions are assigned in the indeterminate couple stress model, in virtue of the previous Lemma \ref{lemmadesprecurlaxl}, it is equivalent to assign on one portion of the boundary the displacement and the tangential part of its $\mathrm{curl}$ {\bf or} the displacement and the tangential part of its normal derivative. As we will see, things become much more complicated when one wants to deal with traction or mixed boundary conditions, since it is not straightforward to individuate the equivalence between different sets of boundary conditions which are nevertheless equally legitimate.

\subsubsection{Classical (weakly independent) traction boundary conditions}

Always considering the expression \eqref{eq:Pint_Mindlin} of the
power of internal actions, the possible traction boundary conditions
on $\partial\Omega\setminus\overline{\Gamma}$ given by Mindlin and
Tiersten \cite{Mindlin62} in the axial formulation are
\begin{align}
&(\sigma-\widetilde{\tau})\cdot n-\frac{1}{2}n\times\nabla\left(\langle(\mathrm{sym}\,\widetilde{m})\cdot n,n\rangle\right)  =\widetilde{t}^{\rm ext},\qquad\qquad\quad\qquad\ \ \text{traction \qquad\qquad\qquad\qquad\ (3 bc)}\notag\label{bcme2}\\
&\qquad\qquad\qquad\qquad\ \ (\id-n\otimes n)\cdot\widetilde{m}\cdot n  =(\id-n\otimes n)\cdot \widetilde{g}^{\rm ext}, \qquad 
 \text{double force traction\ \,  \qquad  (2 bc) }
\end{align}
for prescribed functions $\widetilde{t}^{\rm ext}, \widetilde{g}^{\rm ext}:\mathbb{R}^{3}\rightarrow\mathbb{R}^{3}$
on the portion $\partial \Omega\setminus \overline{\Gamma}$ of the boundary. 

Since $\delta u$ and $(\id-n\otimes n)\cdot\curl\delta u$ are  weakly independent, at this point we are tempted to conclude that the equality
\begin{align}
& \int_{\partial\Omega\setminus \overline{\Gamma}}\langle\widetilde{t},\delta u\rangle\, da
+\int_{\partial\Omega\setminus \overline{\Gamma}}\langle\widetilde{g},(\id-n\otimes n)\cdot{\rm curl}\,\delta u\rangle\, da=0
\end{align}
  for all  $\delta u\in C^2(\overline{\Omega})$ does not imply that $\widetilde{t}\Big|_{\partial\Omega\setminus \overline{\Gamma}}=0$ and $(\id-n\otimes n)\, \widetilde{g}\Big|_{\partial\Omega\setminus \overline{\Gamma}}=0$. 
However, this holds true after using the Lemmas   included in  Appendix \ref{appendixlemma}.

We want also to explicitely point out that \eqref{bcme1} and \eqref{bcme2} correctly describe the
maximal number of independent boundary conditions in the indeterminate
couple stress model but even if these conditions have been re-derived again
and again by Yang et al. \cite{Yang02}, Park and Gao \cite{Park07},
\cite{lubarda2003effects}, etc. among others they are not the only possible choice in the couple stress model.
This is explained in the following two subsections. Prescribing $\delta u$
and $(\id-n\otimes n)\cdot\curl\delta u$ on the boundary means that
we have prescribed independent geometrical boundary conditions, this
is also the argumentation of Mindlin and Tiersten \cite{Mindlin62},
Koiter \cite{Koiter64}, Sokolowski\cite{Sokolowski72}, etc. However,
the prescribed traction conditions are {\bf not stronlgy independent} but only weakly independent in the sense established in Section  \ref{newsubsectionmadeo}. For this reason we claim
that, in order to prescribe strongly independent geometric boundary conditions
and their corresponding energetic conjugate, we have to prescribe $ u$
and $(\id-n\otimes n)\cdot\nabla u\cdot n$. In other words,
as it is well assessed in the framework of full second gradient theories
(see also the expression of the external power given in \eqref{eq:Pext}),
we prescribe on the boundary the following part of the power of external
actions
\begin{align}
\int_{\partial\Omega}\langle \widetilde{t}^{\rm ext},\delta u\rangle\, da+\int_{\partial\Omega}\langle \widetilde{g}^{\rm ext},(\id-n\otimes n)\cdot\nabla\delta u\cdot n\rangle \,da,
\end{align}
in which now $\delta u$ and $(\id-n\otimes n)\cdot\nabla\delta u\cdot n$
are strongly independent and $\widetilde{g}^{\rm ext}$ does not produce (anymore hidden) work against $\delta u$.
This type of strongly independent boundary conditions are also correctly considered
already by Bleustein \cite{bleustein1967note}, but for the full strain
gradient elasticity case only. We give more details in the following section.

\section{Through  second gradient elasticity towards the  indeterminate couple stress theory: a direct approach}\label{fromgradientbc}

Independently of the method that one wants to choose to set up
the correct set of bulk equations and associated boundary conditions
for the indeterminate couple stress model, such set of equations must
be compatible with a variational principle based on the form \eqref{energy_couple_stress}
of the strain energy density. We present two different ways of performing
such variational treatment: the first one passes through a full second
gradient approach and the second one, which we call direct approach,
is based on the fact that the curvature energy is regarded as a function
of the second order tensor $\nabla\mathrm{curl}\, u$ instead than
of the third order tensor $\nabla\nabla u$ or $\nabla \sym \nabla u$.

Worthless to say, as expected, we will find that such two approaches are
 equivalent and we will explicitly establish this equivalence in eqs. \eqref{eq:Bulk_Fin}-\eqref{eq:Boundary_Fin_3}.
\subsection{\label{Second gradient}Second gradient model: general variational
setting}\setcounter{equation}{0}

In this section, we  show how the couple stress model can be regarded as a particular case of the second gradient model.
\subsubsection{First variation of the action functional: power of internal actions}
Let us consider the second gradient strain energy density 
$W(\nabla u,\nabla\nabla u)$ and the associated action functional
in the static case (no inertia considered here)
\[
\mathcal{A}=-\int_{\Omega}W(\nabla u,\nabla\nabla u)\, dv.
\]
The first variation of the action functional can be interpreted as
the power of internal actions $\mathcal{P}^{\rm int}$ of the considered
system and can be computed as follows
\[
\mathcal{P}^{\rm int}=\delta\mathcal{A}=-\int_{\Omega}\left(\frac{\partial W}{\partial u_{i,j}}\,\delta u_{i,j}+\frac{\partial W}{\partial u_{i,jk}}\,\delta u_{i,jk}\right)\, dv,
\]
where we used Levi-Civita index notation together with Einstein notation
of sum over repeated indices. Integrating a first time by parts and
using the divergence theorem we get%
\[
\delta\mathcal{A}=-\int_{\partial\Omega}\frac{\partial W}{\partial u_{i,j}}\, n_{j}\delta u_{i}\,da+\int_{\Omega}\left(\frac{\partial W}{\partial u_{i,j}}\right)_{,j}\delta u_{i}\,dv-\int_{\partial\Omega}\frac{\partial W}{\partial u_{i,jk}}\, n_{k}\delta u_{i,j}\,da+\int_{\Omega}\left(\frac{\partial W}{\partial u_{i,jk}}\right)_{,k}\delta u_{i,j}\, dv.
\]
Integrating again by parts the last bulk term we get
\[
\delta\mathcal{A}=-\int_{\partial\Omega}\left(\frac{\partial W}{\partial u_{i,j}}-\left(\frac{\partial W}{\partial u_{i,jk}}\right)_{,k}\right)\, n_{j}\delta u_{i}\,da+\int_{\Omega}\left[\frac{\partial W}{\partial u_{i,j}}-\left(\frac{\partial W}{\partial u_{i,jk}}\right)_{,k}\right]_{,j}\delta u_{i}\,dv-\int_{\partial\Omega}\frac{\partial W}{\partial u_{i,jk}}\, n_{k}\delta u_{i,j}\,da,
\]
which can also be rewritten as
\begin{equation}
\delta\mathcal{A}=-\int_{\partial\Omega}\left(\sigma_{ij}-\mathfrak{m}_{ijk,k}\right)\, n_{j}\delta u_{i}\,da+\int_{\Omega}\left(\sigma_{ij}-\mathfrak{m}_{ijk,k}\right)_{,j}\delta u_{i}\,dv-\int_{\partial\Omega}\mathfrak{m}_{ijk}\, n_{k}\delta u_{i,j}\,da\,,\label{eq:Intermediate_var}
\end{equation}
if one sets
\[
\sigma_{ij}=\frac{\partial W}{\partial u_{i,j}},\qquad\qquad\mathfrak{m}_{ijk}=\frac{\partial W}{\partial u_{i,jk}},
\]
or equivalently, in compact notation:
\begin{equation}
\sigma=\frac{\partial W}{\partial\nabla u},\qquad\qquad\mathfrak{m}=\frac{\partial W}{\partial\nabla\nabla u}.\label{eq:Stress Couple Stress}
\end{equation}

\subsubsection{Surface integration by parts and independent variations}

At this point, it must be considered that expression \eqref{eq:Intermediate_var} can still be manipulated remarking that the tangential trace of
the gradient of virtual displacement can be integrated by parts once
again and that the surface divergence theorem can be applied to this
tangential part of $\nabla\delta {u}$. Using the brief digression concerning differential geometry and recalling
the properties (\ref{eq:Projectors_Properties}) of the tangential and normal projectors, we can now ulteriorly
manipulate the last term in eq. (\ref{eq:Intermediate_var}) as follows,
\begin{eqnarray}
\int_{\partial\Omega}\mathfrak{m}_{ijk}\, n_{k}\delta u_{i,j} \,da& =:&  \int_{\partial\Omega}B_{ij}\delta u_{i,h}\delta_{hj}\,da=\int_{\partial\Omega}B_{ij}\delta u_{i,h}\left(T_{hj}+Q_{hj}\right)\,da\notag\\
 & = & \int_{\partial\Omega}B_{ij}\delta u_{i,h}T_{hj}\,da+\int_{\partial\Omega}B_{ij}\delta u_{i,h}Q_{hj}\,da,\notag\\
 & = & \int_{\partial\Omega}\left(T_{hj}B_{ij}\right)\delta u_{i,h}\,da+\int_{\partial\Omega}\left(B_{ij}n_{j}\right)\delta\left(u_{i,h}\right)n_{h}\,da\\
 & = & \int_{\partial\Omega}\left(T_{hp}T_{pj}B_{ij}\right)\delta u_{i,h}\,da+\int_{\partial\Omega}\left(B_{ij}n_{j}\right)\delta\left(u_{i,h}n_{h}\right)\,da,\notag
\end{eqnarray}
where we clearly set
\begin{align}
B_{ij}=\mathfrak{m}_{ijk}\, n_{k}.
\end{align}
We can hence recognize in the last term of this formula that the virtual
variation of the normal derivative $u_{i,h}n_{h}=:u_{i}^{n}=(\nabla u\cdot n)_i$ of the
displacement field appears. As for the first term, it can be still manipulated
suitably integrating by parts and then using the surface divergence
theorem (\ref{eq:Surface_Div_Th-1-1}), so that we can  write
\begin{eqnarray}
\int_{\partial\Omega}\mathfrak{m}_{ijk}\, n_{k}\delta u_{i,j}\, da & = & \int_{\partial\Omega}T_{hp}\left[\left(T_{pj}B_{ij}\delta u_{i}\right)_{,h}-\left(T_{pj}B_{ij}\right)_{,h}\delta u_{i}\right]\, da+\int_{\partial\Omega}\left(B_{ij}n_{j}\right)\delta u_{i}^{n}\, da,
\label{eq:IPP_Surf}\\
 & = & \int_{\partial\Gamma}\jump{B_{ip}\nu_{p}\delta u_{i}}\, ds-\int_{\partial\Omega}\left(T_{pj}B_{ij}\right)_{,h}T_{hp}\delta u_{i}\, da+\int_{\partial\Omega}\left(B_{ij}n_{j}\right)\delta u_{i}^{n}\, da.\nonumber
\end{eqnarray}
The final variation of the second gradient action functional given
in (\ref{eq:Intermediate_var}), can therefore be  written as%
\begin{align}\label{eq:Pint}
\mathcal{P}^{\rm int}=\delta\mathcal{A}= & \int_{\Omega}\left(\sigma_{ij}-\mathfrak{m}_{ijk,k}\right)_{,j}\delta u_{i}\, dv-\int_{\partial\Omega}\left[\left(\sigma_{ij}-\mathfrak{m}_{ijk,k}\right)\, n_{j}-\left(T_{pj}\mathfrak{m}_{ijk}\, n_{k}\right)_{,h}T_{hp}\right]\delta u_{i}\, da\\
 & -\int_{\partial\Omega}\left(\mathfrak{m}_{ijk}\, n_{k}n_{j}\right)\delta u_{i}^{n}\,da-\int_{\partial\Gamma}\jump{\mathfrak{m}_{ijk}\, n_{k}\nu_{j}\delta u_{i}}\, ds,\notag
\end{align}
or equivalently in compact notation%
\begin{align}\label{eq:Pint-1}
\mathcal{P}^{\rm int}=\delta\mathcal{A}=&\int_{\Omega}\left.\langle  \mathrm{Div}\left(\sigma-\mathrm{Div}\:\mathfrak{m}\right),\delta u\right.\rangle \, dv -\int_{\partial\Omega}\left.\langle  \left(\sigma-\mathrm{Div}\:\mathfrak{m}\right)\cdot n-\nabla\left[(\mathfrak{m}\cdot n)\cdot T\right]:T,\delta u\right.\rangle \, da\notag\\
&-\int_{\partial\Omega}\left.\langle  \left(\mathfrak{m}\cdot n\right)\cdot n,\left(\nabla\delta u\right)\cdot n\right.\rangle \, da -\int_{\partial\Gamma}\jump{\left.\langle  \left(\mathfrak{m}\cdot n\right)\cdot\nu,\delta u\right.\rangle }\, ds.
\end{align}
If now one recalls the principle of virtual powers according to which
a given system is in equilibrium  if the power of internal
forces is equal to the power of external forces, it is straightforward
that the expression (\ref{eq:Pint}) naturally suggests which is the
correct expression for the power of external forces that a second
gradient continuum may sustain, namely:
\begin{equation}
\mathcal{P}^{\rm ext}=\int_{\Omega}f_{j}^{\rm ext}\,\delta u_{j}\, dv+\int_{\partial\Omega}t_{j}^{\rm ext}\,\delta u_{j}\, da+\int_{\partial\Omega}g_{j}^{\rm ext}\,\delta u_{j}^{n}\, da+\int_{\partial\Gamma}\jump{\pi_{j}^{\rm ext}\,\delta u_{j}}\, ds,\label{eq:Pext}
\end{equation}
where $f{}^{\rm ext}$ are external bulk forces (expending power on  displacement),
$t{}^{\rm ext}$ are external surface forces (expending power on displacement),
$g^{\rm ext}$ are external surface double-forces (expending power on
the normal derivative of displacement) and $\pi^{\rm ext}$ are external
line forces (expending power on displacement). Imposing that
\begin{equation}
\mathcal{P}^{\rm int}+\mathcal{P}^{\rm ext}=0\label{eq:Eq of Motion}
\end{equation}
and localizing, one can get the strong form of the equations of motion
and associated boundary conditions for a second gradient continuum.

Therefore, the equilibrium equation for a second gradient continuum is
\begin{align}\label{ecgrad}
\mathrm{Div}\left(\sigma-\mathrm{Div}\:\mathfrak{m}\right)+f^{\rm ext}=0.
\end{align}
This set of partial differential equation can be complemented with the following boundary conditions:
\begin{itemize}
\item \textbf{Strongly independent, second gradient, geometric boundary conditions}
\begin{align}\label{bgg}
\hspace{-0.6cm}\begin{array}{rcll}
u\big|_{\Gamma} & = & u^{\rm ext},&\qquad\qquad\qquad\qquad\qquad\qquad\qquad\qquad\qquad{\rm (3 bc)}\vspace{1.5mm}\\
\nabla u\cdot n\big|_{\Gamma} & = & a^{\rm ext},&\qquad\qquad\qquad\qquad\qquad\qquad\qquad\qquad\qquad{\rm (3 bc)}
\end{array}
\end{align}
for  given functions $u^{\rm ext}, a^{\rm ext}:\mathbb{R}^{3}\rightarrow\mathbb{R}^{3}$
on the portion $\Gamma$ of the boundary.
\item \textbf{Strongly independent, second gradient, traction boundary conditions}\\
Traction boundary condition on $\partial \Omega\setminus\overline{\Gamma}$:
\begin{align}
\left(\sigma-\mathrm{Div}\:\mathfrak{m}\right)\cdot n-\nabla\left[(\mathfrak{m}\cdot n)\cdot T\right]:T \big|_{\partial \Omega\setminus\overline{\Gamma}}&=t^{\rm ext},\qquad\ \quad\text{traction \ \qquad\quad\qquad\qquad(3 bc)}\notag\label{bgt}\\
\left(\mathfrak{m}\cdot n\right)\cdot n\big|_{\partial \Omega\setminus\overline{\Gamma}} & = g^{\rm ext},\quad\quad\ \ \text{``double force  traction''\qquad  (3 bc)}
\end{align}
for prescribed functions $t^{\rm ext}, g^{\rm ext}:\mathbb{R}^{3}\rightarrow\mathbb{R}^{3}$
at the boundary.

Traction boundary condition on $\partial {\Gamma}$:
\begin{align}\label{bgt2}
\qquad\qquad\qquad\quad\qquad\quad\quad \jump{\left(\mathfrak{m}\cdot n\right)\cdot\nu}\big|_{\partial {\Gamma}}  = {\pi}^{\rm ext}\qquad\qquad \text{``line force traction''\quad\quad\quad (3 bc)}
\end{align}
for a prescribed function $ {\pi}^{\rm ext}:\mathbb{R}^{3}\rightarrow\mathbb{R}^{3}$ on
$\partial {\Gamma}$.
\end{itemize}

We want to stress the fact that in the framework of a second gradient
theory the test functions that can be arbitrarily assigned on the
boundary $\partial\Omega$ are the virtual displacement $\delta u$
and the normal derivative of the virtual displacement $\nabla(\delta u)\cdot n$.
This means that one has $3+3=6$ independent geometric boundary conditions
that can be assigned on the boundary of the considered second gradient
medium. Analogously one can think to assign $3+3=6$ traction conditions
on the force (in duality of $\delta u$) and double force (in duality
of $\nabla(\delta u)\cdot n$) respectively. Hence, in a complete
second gradient theory $6$ independent scalar conditions must be
assigned on the boundary in order to have a well-posed problem.

\subsection{\label{sub:The-full-approach}The  indeterminate couple stress
model viewed as a subclass of the second gradient elasticity model}

As in the previous case, we consider a particular
strain energy density of the type
\[
W=W_{{\rm lin}}(\nabla u)+W_{{\rm curv}}(\nabla[\mathrm{axl}(\mathrm{skew}\nabla u)])=W_{{\rm lin}}(\nabla u)+\widetilde{W}_{{\rm curv}}(\nabla\mathrm{curl}\, u),
\]
where $W_{{\rm lin}}(\nabla u)$ is given in eq.~\eqref{gradeq11},
while the curvature energy $\widetilde{W}_{{\rm curv}}(\nabla\mathrm{curl}\, u)$ also discussed in the previous
section, is given by
\begin{align}
\widetilde{W}_{{\rm curv}}(\nabla\mathrm{curl}\, u) & =\frac{\alpha_{1}}{4}\left\Vert \,\mathrm{sym}\,\nabla\mathrm{curl}\, u\,\right\Vert ^{2}+\frac{\alpha_{2}}{4}\left\Vert \,\,\mathrm{skew}\,\nabla\mathrm{curl}\, u\,\right\Vert ^{2}
\label{eq:CoupleStressEnergy}\\
 & =:\frac{\alpha_{1}}{4}\left\Vert \, S\,\right\Vert ^{2}+\frac{\alpha_{2}}{4}\left\Vert \, A\,\right\Vert ^{2}=\frac{\alpha_{1}}{4}S_{lm}S_{lm}+\frac{\alpha_{2}}{4}A_{lm}A_{lm},\nonumber
\end{align}
where we set
\begin{equation}
S_{pq}:=\left(\mathrm{sym}\,\nabla\mathrm{curl}\, u\right)_{pq}=\frac{\epsilon_{prs}u_{s,rq}+\epsilon_{qrs}u_{s,rp}}{2},\qquad A_{pq}:=\left(\mathrm{skew}\,\nabla\mathrm{curl}\, u\right)_{pq}=\frac{\epsilon_{prs}u_{s,rq}-\epsilon_{qrs}u_{s,rp}}{2}.\label{eq:SymSkew}
\end{equation}
This decomposition of the curvature energy is equivalent to that which
can be found in Mindlin and Tiersten \cite{Mindlin62} and presented by us in eq. \eqref{energyindet}.

Regarding the curvature energy \eqref{eq:CoupleStressEnergy} as a
particular case of second gradient energy, we can directly calculate
the particular form of the third order hyperstress tensor  as
\begin{align}
\widetilde{\mathfrak{m}}_{ijk} & =\frac{\partial\widetilde{W}_{\mathrm{curv}}}{\partial u_{i,jk}}=\frac{\partial\widetilde{W}_{\mathrm{curv}}}{\partial S_{pq}}\frac{\partial S_{pq}}{\partial u_{i,jk}}+\frac{\partial\widetilde{W}_{\mathrm{curv}}}{\partial A_{pq}}\frac{\partial A_{pq}}{\partial u_{i,jk}}=\frac{\alpha_{1}}{2}S_{pq}\frac{\partial S_{pq}}{\partial u_{i,jk}}+\frac{\alpha_{2}}{2}A_{pq}\frac{\partial A_{pq}}{\partial u_{i,jk}}.\label{eq:CoupStress}
\end{align}
It can be checked that, from  \eqref{eq:SymSkew},
one gets
$$
\frac{\partial S_{pq}}{\partial u_{i,jk}}=\frac{1}{2}\left(\epsilon_{pji}\delta_{qk}+\epsilon_{qji}\delta_{pk}\right),\qquad\qquad\frac{\partial A_{pq}}{\partial u_{i,jk}}=\frac{1}{2}\left(\epsilon_{pji}\delta_{qk}-\epsilon_{qji}\delta_{pk}\right).
$$
Replacing these expressions in \eqref{eq:CoupStress}, using the definitions
\eqref{eq:SymSkew} together with the identities
$$\epsilon_{pji}\epsilon_{prs}=\delta_{jr}\delta_{is}-\delta_{js}\delta_{ir}$$
and $$\epsilon_{pji}\epsilon_{krs}=\delta_{pk}\delta_{jr}\delta_{is}-\delta_{pk}\delta_{js}\delta_{ir}-\delta_{pr}\delta_{jk}\delta_{is}
+\delta_{pr}\delta_{js}\delta_{ik}+\delta_{ps}\delta_{jk}\delta_{ir}-\delta_{ps}\delta_{jr}\delta_{ik},$$
the fact that $S_{pk}=S_{kp}$ and $A_{pk}=-A_{kp}$ and simplifying
gives
\begin{align}
\widetilde{\mathfrak{m}}_{ijk} & =\frac{\alpha_{1}}{4}\,\left(\epsilon_{pji}S_{pk}+\epsilon_{qji}S_{kq}\right)+\frac{\alpha_{2}}{4}\,\left(\epsilon_{pji}A_{pk}-\epsilon_{qji}A_{kq}\right)
=\frac{\alpha_{1}}{2}\,\epsilon_{pji}S_{pk}+\frac{\alpha_{2}}{2}\,\epsilon_{pji}A_{pk}\nonumber \nonumber \\
 & =\frac{\alpha_{1}}{4}\,\epsilon_{pji}(\epsilon_{prs}u_{s,rk}+\epsilon_{krs}u_{s,rp})+
 \frac{\alpha_{2}}{4}\,\epsilon_{pji}(\epsilon_{prs}u_{s,rk}-\epsilon_{krs}u_{s,rp})\label{eq:DoubleStress}\\
 & =\frac{\alpha_{1}}{2}\,\left(u_{i,jk}-u_{j,ik}\right)+\frac{1}{4}\,\left(\alpha_{1}-\alpha_{2}\right)\left[u_{p,ip}\delta_{jk}-u_{i,pp}\delta_{jk}+u_{j,pp}\delta_{ik}-u_{p,jp}\delta_{ik}\right].\nonumber
\end{align}
Such particular expression of the third order hyperstress  tensor can be also
written in compact form as
\begin{align}
\widetilde{\mathfrak{m}}  =&\,\alpha_{1}\nabla\left[\mathrm{skew}\left(\nabla u\right)\right]+\frac{1}{4}\left(\alpha_{1}-\alpha_{2}\right)\left[\nabla\left(\mathrm{div}\, u\right)\otimes\id-\mathrm{Div}\left(\text{\ensuremath{\nabla}}u\right)\otimes\id\right]
\label{eq:DoubleStressCompact}\\
 & +\frac{1}{4}\left(\alpha_{1}-\alpha_{2}\right)\left[\left(\mathrm{Div}\left(\text{\ensuremath{\nabla}}u\right)\otimes\id\right)^{T^{12}}
 -\left(\nabla\left(\mathrm{div}\, u\right)\otimes\id\right)^{T^{12}}\right],\nonumber
\end{align}
where we denote by the superscript $T^{12}$ the transposition over
the two first indices of the considered third order tensors. With
such definition of  the third order hyperstress  tensor $\widetilde{\mathfrak{m}}$
one can now write the principle of virtual powers for the considered
particular case in the form
\begin{align}
\int_{\Omega}\left(\sigma_{ij}-\widetilde{\mathfrak{m}}_{ijk,k}\right)_{,j}\delta u_{i}\, dv&-\int_{\partial\Omega}\left[\left(\sigma_{ij}-\widetilde{\mathfrak{m}}_{ijk,k}\right)\, n_{j}-\left(T_{pj}\widetilde{\mathfrak{m}}_{ijk}\, n_{k}\right)_{,h}T_{hp}\right]\delta u_{i}\, da\notag\\
&-\int_{\partial\Omega}\left(\widetilde{\mathfrak{m}}_{ijk}\, n_{k}n_{j}\right)\delta u_{i}^{n}\, da-\int_{\partial\Gamma}\jump{\widetilde{\mathfrak{m}}_{ijk}\, n_{k}\nu_{j}\delta u_{i}}\, ds\label{eq:PrincVirtPow}\\
&=-\int_{\Omega}f_{i}^{\rm ext}\delta u_{i}\, dv-\int_{\partial\Omega}t_{i}^{\rm ext}\delta u_{i}\, da-\int_{\partial\Omega}m_{i}^{\rm ext}\delta u_{i}^{n}\, da-\int_{\partial\Gamma}\pi_{j}^{\rm ext}\delta u_{j}\,ds\,.\nonumber
\end{align}
\begin{itemize}
\item We have to remark that the term
\begin{align}
\left(\widetilde{\mathfrak{m}}_{ijk}\, n_{k}n_{j}\right)\delta u_{i}^{n}=\Big[&\frac{\alpha_{1}}{2}(u_{i,jk}-u_{j,ik})n_{k}n_{j}\notag\\
&+\frac{1}{4}\left(\alpha_{1}-\alpha_{2}\right)\left(u_{p,ip}n_{j}n_{j}-u_{i,pp}n_{j}n_{j}+u_{j,pp}n_{j}n_{i}-u_{p,jp}n_{j}n_{i}\right)\Big]\delta u_{i}^{n}
\end{align}
is vanishing for some particular choices of the indices. In particular,
if, for the sake of simplicity, one considers the introduced quantities
to be all expressed in the local orthonormal basis $\left\{ n,\tau,\nu\right\} $,
then the aforementioned term can be rewritten as
\[
\left(\widetilde{\mathfrak{m}}_{ijk}\, n_{k}n_{j}\right)\delta u_{i}^{n}=\left[\frac{\alpha_{1}}{2}(u_{i,11}-u_{1,i1})+\frac{1}{4}\left(\alpha_{1}-\alpha_{2}\right)\left(u_{p,ip}-u_{i,pp}+u_{1,pp}n_{i}-u_{p,1p}n_{i}\right)\right]\delta u_{i}^{n}.
\]
It can be easily checked that such term is vanishing when $i=1$.
More precisely, we are saying that the normal component of the normal
derivative $\delta u_{i}^{n}$ does not contribute to the power of
internal forces when considering the indeterminate couple stress model.
This is equivalent to say that indeed only 2 geometric boundary conditions
can be imposed on the normal derivative of virtual displacement or,
equivalently, on its ``traction'' counterpart which is the double
force.
\end{itemize}
Hence, the governing equations of the considered system can also be
formally written in the form \begin{gather}
{\rm Div}(\sigma-{\rm Div}\,\widetilde{\mathfrak{m}})+f^{\rm ext}=0,\label{eq:Bulk_CS}\qquad
\text{{in\,\ duality\,\ of}}\quad \delta u
\end{gather}
together with the following  boundary conditions  induced
by \eqref{eq:Pint-1}\footnote{We recall that in the considered couple stress model expressed in the framework of a full second gradient theory the constitutive form for $m$ is given in eq. \eqref{eq:DoubleStress} or equivalently \eqref{eq:DoubleStressCompact}.}:
\begin{itemize}
\item \textbf{Strongly independent, geometric  boundary conditions for the couple stress model on $\Gamma$  (as derived by a full-gradient model) }
\begin{align}\label{bgg}
\hspace{-0.6cm}\begin{array}{rcll}
u\big|_{\Gamma} & = & u^{\rm ext},&\vspace{1.5mm}\\
T\cdot \nabla u\cdot n\big|_{\Gamma} & = & T\cdot a^{\rm ext},&
\end{array}
\end{align}
for given functions $u^{\rm ext},a^{\rm ext}:\mathbb{R}^{3}\rightarrow\mathbb{R}^{3}$
at the boundary.
\item \textbf{Strongly independent, traction boundary conditions on $\partial \Omega\setminus\overline{\Gamma}$  (as derived by a full-gradient model) }\\
\begin{align}
\left(\sigma-{\rm Div}\,\widetilde{\mathfrak{m}}\right)\cdot n-\nabla\left[\left(\widetilde{\mathfrak{m}}\cdot n\right)\cdot T\right]:T&=t^{\rm ext},\, \label{eq:Boundary_CS}\qquad\qquad
\text{{in\,\ duality\,\ of}}\quad\delta u
\\\
T\cdot\left[\left(\widetilde{\mathfrak{m}}\cdot n\right)\cdot n\right]&=T\cdot g^{\rm ext},\qquad\,\,\ 
\text{{in\,\ duality\,\ of}}\quad T\cdot \left(\nabla\delta u\right)\cdot n\notag
\end{align}
for prescribed functions $t^{\rm ext}, g^{\rm ext}:\mathbb{R}^{3}\rightarrow\mathbb{R}^{3}$
at the boundary.

Traction boundary condition on the curve $\partial {\Gamma}$:
\begin{gather}
\jump{  \left(\widetilde{\mathfrak{m}}\cdot n\right)\cdot\nu}=  \pi^{\rm ext},\label{eq:Boundary_CS-2}
\end{gather}
for a prescribed function $ {\pi}^{\rm ext}:\mathbb{R}^{3}\rightarrow\mathbb{R}^{3}$ on
$\partial {\Gamma}$.
\end{itemize}

\subsection{\label{sub:identification-1}Reduction
from the third  order hyperstress tensor $\widetilde{\mathfrak{m}}$ to  Mindlin's second order couple stress tensor $\widetilde{m}$ }

We want to prove here that the equations \eqref{ecgrad} and the traction
boundary conditions \eqref{bgg}-\eqref{eq:Boundary_CS}
can be equivalently rewritten using  Mindlin's second order couple stress tensor
\begin{align}
\widetilde{m} & =\frac{\alpha_{1}+\alpha_{2}}{2}\ \nabla\curl u+\frac{\alpha_{1}-\alpha_{2}}{2}\ (\nabla\curl u)^{T}\quad\qquad\qquad\qquad\text{\qquad\qquad{\bf Mindlin}}\label{eq:mMindlin}\\
 & ={\alpha_{1}}\dev\sym(\nabla\curl u)+{\alpha_{2}}\,\skw(\nabla\curl u)\ \ \qquad\qquad\qquad\qquad\qquad\text{{\bf equivalent form 1}}\notag\\
 & =2\,{\alpha_{1}}\dev\sym(\nabla\axl(\skw\nabla u))+2\,{\alpha_{2}}\,\skw(\nabla\axl(\skw\nabla u)),\notag\qquad\text{{\bf equivalent form 2} }\\
\widetilde{m}_{i {l}} & =\frac{\alpha_{1}+\alpha_{2}}{2}\ \epsilon_{ijk}u_{k,j {l}}+\frac{\alpha_{1}-\alpha_{2}}{2}\ \epsilon_{mjk}u_{k,ji}, \qquad\qquad\qquad\qquad\qquad\quad\ \,\!\!\text{{\bf index format}}\notag
\end{align}
instead of the third order tensor given in eq. \eqref{eq:DoubleStressCompact}.
Such second order hyper-stress tensor  has been introduced by Mindlin and Tiersten
\cite{Mindlin62} and we have shown in a previous section that it can
be obtained by means of a direct variational approach that does not
need the introduction of the third order couple stress tensor $\widetilde{\mathfrak{m}}$
(see eq. \eqref{miutilde}).

In order to be able to set up such equivalence, we have to remark that, for the choices \eqref{eq:DoubleStressCompact}
and \eqref{eq:mMindlin} of $\widetilde{\mathfrak{m}}$ and $\widetilde{m}$,
the following properties are verified (see Appendix \ref{sec:Some-useful-relationships-2}
for detailed calculations)\footnote{Using a classical notation $\Delta=\mathrm{Div} \nabla$ is the Laplacian operator.}
\begin{align}
{\rm Div}\underbrace{\widetilde{\mathfrak{m}}}_{\mathbb{R}^{3\times3\times3}}=\frac{1}{2}\anti\Div\underbrace{\widetilde{m}}_{\mathbb{R}^{3\times3}}=
\frac{\alpha_{1}+\alpha_{2}}{2}\,\Delta(\skew\nabla u),\label{eq:IdentBulk-1}
\end{align}
and
\begin{align}
\nabla\left[(\widetilde{\mathfrak{m}}\cdot n)\cdot T\right]:T= & \,\frac{1}{2}\nabla[\mathrm{anti}\left(\widetilde{m}\cdot n\right)\cdot T\,]:T,\label{eq:Ident1-1-1}\\
T\cdot[\left(\widetilde{\mathfrak{m}}\cdot n\right)\cdot n]= & \,\frac{1}{2}T\cdot\mathrm{anti}(\widetilde{m}\cdot n)\cdot n,\label{eq:Ident2-1-1}\\
\jump{(\widetilde{\mathfrak{m}}\cdot n)\cdot\nu}= & \,\frac{1}{2}\jump{[\mathrm{anti}(\widetilde{m}\cdot n)\cdot n]\cdot\nu},\label{eq:Ident3-1-1}
\end{align}
where
\begin{equation}
\widetilde{\mathfrak{m}}\cdot n=\frac{1}{2}\mathrm{anti}\left(\widetilde{m}\cdot n\right)=\alpha_{1}\left[\nabla\left(\mathrm{skew}\,\nabla u\right)\right]\cdot n+\frac{\left(\alpha_{1}-\alpha_{2}\right)}{2}\:\mathrm{skew}\left[\nabla\left(\mathrm{Div}\, u\right)\otimes n-\mathrm{Div}\left(\text{\ensuremath{\nabla}}u\right)\otimes n\right].\label{eq:Mn}
\end{equation}
Clearly, based upon such relationships, we can recognize the following equivalent forms for the bulk equations
\begin{gather}
{\rm Div}(\sigma-{\rm Div}\,\widetilde{\mathfrak{m}})+f^{\rm ext}=0\quad\Leftrightarrow\quad{\rm Div}\left(\sigma-\frac{1}{2}\anti\Div\widetilde{m}\right)+f^{\rm ext}=0,\label{eq:Bulk_Fin}\\
\text{{in\,\ duality\,\ of }}\,\delta u\nonumber
\end{gather}
together with the following equivalent forms of the traction boundary conditions\medskip{}
\begin{gather}
\left(\sigma-{\rm Div}\,\widetilde{\mathfrak{m}}\right)\cdot n-\nabla\left[\left(\widetilde{\mathfrak{m}}\cdot n\right)\cdot T\right]:T=t^{\rm ext}\ \Leftrightarrow\ \left(\sigma-\frac{1}{2}\anti\Div\widetilde{m}\right)\cdot n-\frac{1}{2}\nabla[\mathrm{anti}\left(\widetilde{m}\cdot n\right)\cdot T\,]:T=t^{\rm ext}\label{eq:Boundary_Fin_1}\\
\text{{in\,\ duality\,\ of }}\,\delta u,\nonumber
\end{gather}
\begin{gather}
T\cdot\left[\left(\widetilde{\mathfrak{m}}\cdot n\right)\cdot n\right]=T\cdot g^{\rm ext}\quad\Leftrightarrow\quad\frac{1}{2}T\cdot\mathrm{anti}(\widetilde{m}\cdot n)\cdot n=T\cdot g^{\rm ext}\label{eq:Boundary_Fin_2}\\
\text{{in\,\ duality\,\ of }}\, T\cdot \left(\nabla\delta u\right)\cdot n,\nonumber
\end{gather}
and finally the following equivalent conditions on the boundary of the boundary
$\partial\Gamma$ where traction is assigned\medskip{}
\begin{gather}
\jump{\left(\widetilde{\mathfrak{m}}\cdot n\right)\cdot\nu}=\pi^{\rm ext}\ \Leftrightarrow\ \frac{1}{2}\jump{\mathrm{anti}(\widetilde{m}\cdot n)\cdot\nu}=\pi^{\rm ext}\label{eq:Boundary_Fin_3}\\
\text{{in\,\ duality\,\ of }}\,\delta u\,.\nonumber
\end{gather}

\subsection{A direct way to obtain strongly independent boundary conditions in the \break indeterminate couple model}\label{sectionincomplete3}

Let us consider again the energy
\[
W=W_{{\rm lin}}(\nabla u)+\widetilde{W}_{{\rm curv}}(\nabla\mathrm{curl}\, u),
\]
with $W_{{\rm lin}}$ and $\widetilde{W}_{{\rm curv}}$ defined in equations
\eqref{gradeq11} and for which different equivalent forms of the curvature energy have been given in eq. \eqref{energyindet}.
To the sake of completeness, we derive in this section the equations
of motion and associated boundary conditions of the  couple stress
model by directly computing the first variation of the action functional
associated to the considered energy, without noticing that such energy
is indeed a very particular case of a second gradient energy. This
procedure follows what was done by Mindlin and Tiersten \cite{Mindlin62} and it is presented in Section \ref{directsection}.
Here, as done in \cite{Mindlin62}, the curvature energy
is regarded as a function of the second order tensor $\nabla\mathrm{curl}\,u$,
instead that of the third order tensor $\nabla\nabla u$. The difference of the calculation that we present here, with respect to what is done by Mindlin and Tiersten, is that we proceed further in the process of integration by parts up to the point of getting strongly independent quantities on the boundary. 
As it is
shown in Section \ref{sub:identification-1}, the two approaches can
be considered to be finally equivalent, provided that a suitable identification
of the second and third order tensors appearing in the governing equations
is performed.

 As usual, the power of internal actions is given by
the first variation of the action functional which can be directly
computed as
\begin{align}\label{internalnew}
\mathcal{P}^{\rm int} & =\delta\mathcal{A}=-\delta\int_{\Omega}\left[W_{{\rm lin}}(\nabla u)+\widetilde{W}_{\rm curv}\left(\nabla\mathrm{curl}\: u\right)\right]\,dv\notag
\\
 & =-\int_{\Omega}\left.\langle  \frac{\partial W_{{\rm lin}}}{\partial\nabla u},\:\delta\nabla u\right.\rangle\,dv -\int_{\Omega}\left.\langle  \frac{\partial\widetilde{W}_{\rm curv}}{\partial S},\:\delta S\right.\rangle \,dv-\int_{\Omega}\left.\langle  \frac{\partial\widetilde{W}_{\rm curv}}{\partial A},\:\delta A\right.\rangle\,dv
\\
 & =-\int_{\Omega}\frac{\partial W_{{\rm lin}}}{\partial u_{i,j}}\delta u_{i,j}\,dv-\int_{\Omega}\frac{\partial\widetilde{W}_{\rm curv}}{\partial S_{ij}}\delta S_{ij}\,dv-\int_{\Omega}\frac{\partial\widetilde{W}_{\rm curv}}{\partial A_{ij}}\delta A_{ij}\,dv.\notag
\end{align}

Using the expression  of $\widetilde{W}_{\rm curv}$ given in  \eqref{eq:CoupleStressEnergy} and then the definitions \eqref{eq:SymSkew} for $S$ and $A$ together with
the properties of Levi-Civita symbols, it can be checked that
\begin{align*}
&-\int_{\Omega}\frac{\partial\widetilde{W}_{\rm curv}}{\partial S_{ij}}\delta S_{ij}\,dv-\int_{\Omega}\frac{\partial\widetilde{W}_{\rm curv}}{\partial A_{ij}}\delta A_{ij}\,dv\\
&=-\frac{\text{\ensuremath{\alpha}}_{1}}{4}\int_{\Omega}S_{ij}\delta S_{ij}\, dv-\frac{\text{\ensuremath{\alpha}}_{2}}{4}\int_{\Omega}A_{ij}\delta A_{ij}\,dv=\\
&=-\frac{\text{\ensuremath{\alpha}}_{1}}{8}\int_{\Omega}\left(\epsilon_{ipq}u_{q,pj}+\epsilon_{jpq}u_{q,pi}\right)\left(\epsilon_{irs}\delta u_{s,rj}+\epsilon_{jrs}\delta u_{s,ri}\right)\,dv\\
&\qquad-\frac{\text{\ensuremath{\alpha}}_{2}}{8}\int_{\Omega}\left(\epsilon_{ipq}u_{q,pj}-\epsilon_{jpq}u_{q,pi}\right)\left(\epsilon_{irs}\delta u_{s,rj}-\epsilon_{jrs}\delta u_{s,ri}\right)\,dv\\
&=-\frac{\left(\text{\ensuremath{\alpha}}_{1}+\alpha_{2}\right)}{8}\int_{\Omega}\left(\epsilon_{ipq}\epsilon_{irs}u_{q,pj}\delta u_{s,rj}+\epsilon_{jpq}\epsilon_{jrs}u_{q,pi}\delta u_{s,ri}\right)\,dv\\
&\qquad-\frac{\left(\text{\ensuremath{\alpha}}_{1}-\alpha_{2}\right)}{8}\int_{\Omega}\left(\epsilon_{jpq}\epsilon_{irs}u_{q,pi}\delta u_{s,rj}+\epsilon_{ipq}\epsilon_{jrs}u_{q,pj}\delta u_{s,ri}\right)\,dv\\
&=-\frac{\left(\text{\ensuremath{\alpha}}_{1}+\alpha_{2}\right)}{4}\int_{\Omega}\left[\left(\delta_{pr}\delta_{qs}-\delta_{ps}\delta_{qr}\right)u_{q,pi}\delta u_{s,ri}\right]\,dv\\
&\qquad-\frac{\left(\text{\ensuremath{\alpha}}_{1}-\alpha_{2}\right)}{4}\int_{\Omega}\left(\left(\delta_{ij}\delta_{pr}\delta_{qs}-\delta_{ij}\delta_{ps}\delta_{qr}-\delta_{jr}\delta_{pi}\delta_{qs}+\delta_{jr}\delta_{ps}\delta_{qi}+\delta_{js}\delta_{pi}\delta_{qr}-\delta_{js}\delta_{pr}\delta_{qi}\right)u_{q,pi}\delta u_{s,rj}\right)\,dv\\
&=-\frac{\left(\text{\ensuremath{\alpha}}_{1}+\alpha_{2}\right)}{4}\int_{\Omega}\left(u_{s,r}-u_{r,s}\right)_{,i}\delta u_{s,ri}\,dv\\
&\qquad-\frac{\left(\text{\ensuremath{\alpha}}_{1}-\alpha_{2}\right)}{4}\int_{\Omega}\left(\left(u_{s,r}-u_{r,s}\right)_{,i}\delta u_{s,ri}+\left(u_{i,si}-u_{s,ii}\right)\delta u_{s,rr}+\left(u_{r,ii}-u_{i,ri}\right)\delta u_{s,rs}\delta u_{s,rs}\right)\,dv\\
&=-\frac{\alpha_{1}}{2}\int_{\Omega}\left(u_{s,r}-u_{r,s}\right)_{,i}\delta u_{s,ri}\,dv-\frac{\left(\text{\ensuremath{\alpha}}_{1}-\alpha_{2}\right)}{4}\int_{\Omega}\left(\left(u_{i,s}-u_{s,i}\right)_{,i}\delta u_{s,rr}+\left(u_{r,i}-u_{i,r}\right)_{,i}\delta u_{s,rs}\right)\,dv\,.
\end{align*}

Recalling also the results for the variation of the classical first gradient term given in \eqref{gradeq211} this last relation implies that the internal actions \eqref{internalnew} can be rewritten as
\begin{align*}
\mathcal{P}^{\rm int}=\delta\mathcal{A}&=-\int_{\Omega}\left(\mu\left(u_{i,j}+u_{j,i}\right)+\lambda\, u_{k,k}\delta_{ij}\right)\delta u_{i,j}\, dv-\frac{\text{\ensuremath{\alpha}}_{1}}{4}\int_{\Omega}S_{ij}\delta S_{ij}-\frac{\text{\ensuremath{\alpha}}_{2}}{4}\int_{\Omega}A_{ij}\delta A_{ij}\, dv
\\
&=-\int_{\Omega}\left(\mu\left(u_{i,j}+u_{j,i}\right)+\lambda\, u_{k,k}\delta_{ij}\right)\delta u_{i,j}\, dv-\frac{\alpha_{1}}{2}\int_{\Omega}\left(u_{s,r}-u_{r,s}\right)_{,i}\delta u_{s,ri}\, dv
\\
&\quad\ -\frac{\left(\text{\ensuremath{\alpha}}_{1}-\alpha_{2}\right)}{4}\int_{\Omega}\left[\left(u_{i,s}-u_{s,i}\right)_{,i}\delta u_{s,rr}+\left(u_{r,i}-u_{i,r}\right)_{,i}\delta u_{s,rs}\right]\, dv.
\end{align*}
Suitably integrating by parts we can hence write
\begin{align}
\mathcal{P}^{\rm int} & =\delta\mathcal{A}=-\int_{\partial\Omega}\sigma_{ij}n_{j}\delta u_{i}\, da+\int_{\Omega}\sigma_{ij,j}\delta u_{i}\, dv
\nonumber \\
 &\qquad\qquad +\int_{\Omega}\left[\frac{\alpha_{1}}{2}\left(u_{s,r}-u_{r,s}\right)_{,ii}
 +\frac{\left(\text{\ensuremath{\alpha}}_{1}-\alpha_{2}\right)}{4}\left(\left(u_{i,s}-u_{s,i}\right)_{,ir}+\left(u_{r,i}-u_{i,r}\right)_{,is}\right)\right]\delta u_{s,r}\, dv
\label{eq:PintMindlin}\\
 &\qquad\qquad -\int_{\partial\Omega}\left[\frac{\alpha_{1}}{2}\left(u_{s,r}-u_{r,s}\right)_{,i}n_{i}+
 \frac{\left(\text{\ensuremath{\alpha}}_{1}-\alpha_{2}\right)}{4}\left(\left(u_{i,s}-u_{s,i}\right)_{,i}n_{r}+
 \left(u_{r,i}-u_{i,r}\right)_{,i}n_{s}\right)\right]\delta u_{s,r}\, da
\nonumber \\
 & \quad\qquad=-\int_{\partial\Omega}\left(\sigma_{ij}-\widetilde{\tau}_{ij}\right)n_{j}\delta u_{i}\, da+\int_{\Omega}\left(\sigma_{ij}-\widetilde{\tau}_{ij}\right)_{,j}\delta u_{i}\, dv-\int_{\partial\Omega}B_{ij}\delta u_{i,j}\, da\,,\nonumber
\end{align}
where we set
\begin{align*}
\sigma_{ij} & =\left(\mu\left(u_{i,j}+u_{j,i}\right)+\lambda\, u_{k,k}\delta_{ij}\right),\\
\widetilde{\tau}_{ij} & =\left[\frac{\alpha_{1}}{2}\left(u_{i,jpp}-u_{j,ipp}\right)-\frac{\left(\text{\ensuremath{\alpha}}_{1}-\alpha_{2}\right)}{4}\left(u_{i,ppj}-u_{j,ppi}\right)\right]=\frac{\left(\alpha_{1}+\alpha_{2}\right)}{4}\left(u_{i,j}-u_{j,i}\right)_{,pp},
\end{align*}
and
\begin{equation}
B_{ij}=\frac{\alpha_{1}}{2}\left(u_{i,j}-u_{j,i}\right)_{,p}n_{p}+\frac{\left(\text{\ensuremath{\alpha}}_{1}-\alpha_{2}\right)}{4}\left(\left(u_{p,i}-u_{i,p}\right)_{,p}n_{j}+\left(u_{j,p}-u_{p,j}\right)_{,p}n_{i}\right).\label{eq:Btens}
\end{equation}
With reference to eqs. \eqref{eq:IdentBulk-1} and \eqref{eq:Mn},
it can be recognized that
\begin{align}
\widetilde{\tau}&=\mathrm{Div}\,\widetilde{\mathfrak{m}}=\frac{1}{2}\mathrm{anti}\,\mathrm{Div}\,\widetilde{m}=\frac{\alpha_{1}+\alpha_{2}}{2}\Delta(\skew\nabla u),\label{eq:Tensors_Direct_Approach}\\
B&=\widetilde{\mathfrak{m}}\cdot n=\frac{1}{2}\mathrm{anti}\mathrm{\:}\left(\widetilde{m}\cdot n\right)=\alpha_{1}\left[\nabla\left(\mathrm{skew}\,\nabla u\right)\right]\cdot n+\frac{\left(\alpha_{1}-\alpha_{2}\right)}{2}\:\mathrm{skew}\left[\nabla\left(\mathrm{Div}\, u\right)\otimes n-\mathrm{Div}\left(\text{\ensuremath{\nabla}}u\right)\otimes n\right],\label{eq:B_Def}
\end{align}
with $\widetilde{\mathfrak{m}}$ and $\widetilde{m}$ given in \eqref{eq:DoubleStressCompact}
and \eqref{eq:mMindlin} respectively.

\begin{remark}We explicitly remark at this point (and we will point
it out more precisely in the next section) that the results presented
by Mindlin and Tiersten \cite{Mindlin62} are compatible with a variational
procedure which stops at this point (eq. \eqref{eq:PintMindlin})
without proceeding further in the process of integration by parts. \end{remark}

Indeed, in the view of proceeding towards the determination of strongly independent virtual variations, the last term
in the expression \eqref{eq:PintMindlin} of the power of internal
forces can still be  manipulated according to the procedure \eqref{eq:IPP_Surf}
of surface integration by parts, so that one finally gets
\begin{eqnarray*}
\int_{\partial\Omega}B_{ij}\delta u_{i,j}\, da & = & \int_{\partial\Gamma}\jump{B_{ij}\nu_{p}\delta u_{i}}\, ds-\int_{\partial\Omega}\left(T_{pj}B_{ij}\right)_{,h}T_{hp}\delta u_{i}\, da+\int_{\partial\Omega}\left(B_{ij}n_{j}\right)\delta u_{i}^{n}\, da.
\end{eqnarray*}
Hence, supposing that the virtual displacement is continuous through
the curves $\partial \Gamma$, the power of internal forces of
the couple stress model calculated by means of a direct approach reads
\[
\mathcal{P}^{\rm int}=\int_{\Omega}\left(\sigma_{ij}-\widetilde{\tau}_{ij}\right)_{,j}\delta u_{i}\, dv-\int_{\partial\Omega}\left[\left(\sigma_{ij}-\widetilde{\tau}_{ij}\right)n_{j}-\left(T_{pj}B_{ij}\right)_{,h}T_{hp}\right]\delta u_{i}\, da-\int_{\partial\Omega}\left(B_{ij}n_{j}\right)\delta u_{i}^{n}\, da-\int_{\partial\Gamma}\jump{B_{ij}\nu_{p}}\delta u_{i}\, ds.
\]
It has already been proven in Subsection \ref{sub:The-full-approach}
that only the tangent part of the normal derivative $\delta u_{i}^{n}$
contributes to the power of internal actions when considering the
indeterminate couple-stress model, so that the power of internal
actions can be finally written as
\begin{align}
\mathcal{P}^{\rm int}=&\int_{\Omega}\left(\sigma_{ij}-\widetilde{\tau}_{ij}\right)_{,j}\delta u_{i}\, dv-\int_{\partial\Omega}\left[\left(\sigma_{ij}-\widetilde{\tau}_{ij}\right)n_{j}-\left(T_{pj}B_{ij}\right)_{,h}T_{hp}\right]\delta u_{i}\, da-\int_{\partial\Omega}\left(T_{ip}B_{pj}n_{j}\right)\left(T_{ih}\delta u_{h}^{n}\right)\, da\notag\\&-\int_{\partial\Gamma}\jump{B_{ij}\nu_{p}}\delta u_{i}\, ds,
\end{align}
or equivalently, in compact form
\begin{align}\label{sepMind}
\mathcal{P}^{\rm int}  =&\int_{\Omega}\left.\langle  \mathrm{Div}\left(\sigma-\widetilde{\tau}\right)\delta u\right.\rangle\, dv -\int_{\partial\Omega}\left.\langle  \left[\left(\sigma-\widetilde{\tau}\right)\cdot n-\left(\nabla\left(B\cdot T\right)\right):T\right],\delta u\right.\rangle \, da
\\
 & -\int_{\partial\Omega}\left.\langle  \left(T\cdot B\cdot n\right),T\cdot\delta(\nabla u\cdot n)\right.\rangle\, da -\int_{\partial\Gamma}\left.\langle  \jump{B\cdot\nu},\delta u\right.\rangle\, ds ,\notag
\end{align}
where we recall once again that the tensors $\widetilde{\tau}$ and $B$
are given by eqs. \eqref{eq:Tensors_Direct_Approach}, \eqref{eq:B_Def}.

\noindent Considering the power of external actions to take the form
\eqref{eq:Pext} imposing $\mathcal{P}^{\rm int}+\mathcal{P}^{\rm ext}=0$
and localizing, one gets the bulk equations and associated traction
boundary conditions for the couple stress model by means of a direct
approach
\begin{gather}
\mathrm{Div}\left(\sigma-\widetilde{\tau}\right)+f^{\rm ext}=0\qquad\text{{in\,\ duality\,\ of }}\,\delta u,\label{eq:Bulk_CS-1}
\end{gather}
together with the following traction boundary conditions on the portion
of the boundary $\partial\Omega\setminus\overline{\Gamma}$

\begin{align}
\left(\sigma-\widetilde{\tau}\right)\cdot n-\left[\nabla\left(B\cdot T\right)\right]:T&=t^{\rm ext}\qquad\qquad\ \text{{in\,\ duality\,\ of }}\,\delta u,\label{eq:Boundary_CS-3}
\\
T\cdot B\cdot n&=T\cdot g^{\rm ext}\qquad\,\, \text{{in\,\ duality\,\ of }}\,\left(\nabla\delta u\right)\cdot n\label{eq:Boundary_CS-1-1}
\end{align}
and finally the following condition on the boundary of the boundary
$\partial\Gamma$ where traction is assigned
\begin{gather}
\jump{B\cdot\nu}=\pi^{\rm ext}\qquad\text{{in\,\ duality\,\ of }}\,\delta u.\label{eq:Boundary_CS-2-1}
\end{gather}
Given the identification of the tensors $\widetilde{\tau}$ and $B$ with
the tensors $\widetilde{\mathfrak{m}}$ and $\widetilde{m}$ as specified
in eqs. \eqref{eq:Tensors_Direct_Approach}, \eqref{eq:B_Def}, the bulk equations and
traction boundary conditions \eqref{eq:Bulk_CS-1}-\eqref{eq:Boundary_CS-2-1}
as derived by means of a direct approach are completely equivalent
to eqs. \eqref{eq:Bulk_Fin}-\eqref{eq:Boundary_Fin_3}.

\subsection{The geometric and traction, strongly independent, boundary conditions for the indeterminate
couple stress model}\label{bcbune}

We have proven up to now that, independently of the method that one wants to choose to obtain
the correct set of bulk equations and associated boundary conditions, passing  through a full second
gradient approach or a direct approach based on second order tensors instead the third order ones, one finally arrives at the following complete set of boundary conditions which can be used to complement the bulk equilibrium equation \eqref{eq:Bulk_CS-1} of the couple stress model:
\subsubsection{Geometric (essential or kinematical), strongly independent, boundary conditions on $\Gamma$}
\begin{align}
u & = u^{\rm ext}\qquad\qquad\qquad\qquad\quad\quad\,\qquad\qquad\qquad\qquad\qquad(3\ \text{bc})\label{bc1111}\\
(\id-n\otimes n)\cdot (\nabla {u}\cdot n) & =(\id-n\otimes n)\cdot a^{\rm ext}\qquad\qquad\quad\ \ \quad  \qquad\qquad\qquad\qquad(2\ \text{bc})\notag
\end{align}
where $ u^{\rm ext}, a^{\rm ext}:\mathbb{R}^{3}\rightarrow\mathbb{R}^{3}$
are prescribed functions on the subportion $\Gamma$ of the boundary $\partial \Omega$, where kinematical boundary conditions are assigned.

\subsubsection{Traction, strongly independent, boundary conditions on $\partial \Omega\setminus\overline{\Gamma}$ }
Correspondingly  to the geometric boundary conditions, we may prescribe
the following traction boundary conditions based on \eqref{eq:Boundary_CS-3} (or equivalently \eqref{eq:Boundary_Fin_1})
\begin{align}
\left.\begin{array}{rcl}
(\sigma-\widetilde{\tau})\cdot n-\frac{1}{2}\nabla[\anti(\widetilde{m}\cdot n)\cdot T]:\, T & = & t^{\rm ext},\vspace{1.2mm}\\
\dd(\id-n\otimes n)\cdot\anti(\widetilde{m}\cdot n)\cdot n & = & (\id-n\otimes n)\cdot g^{\rm ext},
\end{array}\right\} \qquad & \text{on}\ \partial{\Omega}\setminus\overline{\Gamma}\qquad\begin{array}{r}
(3\ \text{bc})\vspace{1.2mm}\\
(2\ \text{bc})
\end{array}\label{bc1002}\\
\begin{array}{rrl}
\frac{1}{2}\jump{\anti(\widetilde{m}\cdot n)\cdot \nu}&  & =\  {\pi}^{\rm ext},\qquad\qquad\qquad\ \,\ \,
\end{array} \qquad & \text{on}\ \partial{\Gamma}\quad\qquad\ \ \begin{array}{r}
(3\ \text{bc})
\end{array}\notag
\end{align}
where $t^{\rm ext}, g^{\rm ext}:\mathbb{R}^{3}\rightarrow\mathbb{R}^{3}$
are prescribed functions on $\partial\Omega\setminus\overline{\Gamma}$, while
$ {\pi}^{\rm ext}$ is prescribed on $\partial\Gamma$ and leads to
3 boundary conditions on the curve $\partial \Gamma$.

It can be shown (see Appendix \ref{identities_tractions} for the proof of the needed identities \eqref{propr2} and \eqref{propr3}) that such set of traction boundary conditions can be ulteriorly simplified in the following form

\begin{align}
\left.\begin{array}{rcl}
(\sigma-\widetilde{\tau})\cdot n-\frac{1}{2}\nabla[\anti(\widetilde{m}\cdot n)\cdot T]:\, T & = & t^{\rm ext},\vspace{1.2mm}\\
\dd(\id-n\otimes n)\cdot\anti( (\id-n\otimes n)\cdot \widetilde{m}\cdot n)\cdot n & = & (\id-n\otimes n)\cdot g^{\rm ext},
\end{array}\right\} \qquad & \text{on}\ \partial{\Omega}\setminus\overline{\Gamma}\qquad\begin{array}{r}
(3\ \text{bc})\vspace{1.2mm}\\
(2\ \text{bc})
\end{array}\label{bc1002_bis}\\
\begin{array}{rcl}
\frac{1}{2}\jump{\anti((\id-n\otimes n)\cdot \widetilde{m}\cdot n)\cdot \nu}&  & =\    {\pi}^{\rm ext},\qquad\qquad\qquad\ \,\ \,
\end{array} \qquad & \text{on}\ \partial{\Gamma}\quad\qquad\ \ \begin{array}{r}
(3\ \text{bc})
\end{array}\notag
\end{align} 
where $t^{\rm ext}, g^{\rm ext}:\mathbb{R}^{3}\rightarrow\mathbb{R}^{3}$
are prescribed functions on $\partial\Omega\setminus\overline{\Gamma}$, while
$ {\pi}^{\rm ext}$ is prescribed on $\partial\Gamma$ and leads to
3 boundary conditions on the curve $\partial \Gamma$.

\section{Assessment of the strongly independent boundary conditions for the
indeterminate couple stress model in a form directly comparable to Mindlin and Tiersten's ones}\label{sectaxlb}\setcounter{equation}{0}

Given that the bulk equations \eqref{eq:Bulk_CS-1} that we obtained are the same as Mindlin and Tiersten's ones, the delicate point is now to compare our boundary
conditions \eqref{bc1111}-\eqref{bc1002} with those provided by Mindlin and Tiersten in \cite{Mindlin62}.
 If a proof of the equivalence of the purely kinematical boundary conditions \eqref{bc1111} with those proposed by Mindlin and Tiersten has already been provided in Lemma \ref{lemmadesprecurlaxl}, the equivalence between traction boundary conditions as derived with our and Mindlin's approach is not straightforward.   This is why we need here to rewrite the boundary conditions  \eqref{bc1002_bis} in a suitable form.

\subsection{Towards a direct comparison with Mindlin's traction boundary conditions}\label{toMbc}

In order to be able to directly compare the traction boundary conditions for
the  indeterminate couple stress model which we obtained both
passing through a second gradient theory and by means of a direct
approach with those proposed by Mindlin, we need to rewrite our equations
in a suitable form. In this section we show some calculations which are needed in order
to reach this goal.

\begin{proposition}\label{prop:ComparisonToMindlin} For all $\widetilde{m}\in \mathbb{R}^{3\times 3}$  and for all smooth surfaces $\Sigma=\{(x_{1},x_{2},x_{3})\in \mathbb{R}^3\, |\, F(x_{1},x_{2},x_{3})=0\}$,\break  $F:\omega\subset \mathbb{R}^3\rightarrow\mathbb{R}^3$ of class $C^2$,  the following identity is satisfied:
\begin{align}
\frac{1}{2}\nabla[\mathrm{anti}\left(\widetilde{m}\cdot n\right)\cdot T\,]:T=\frac{1}{2}n\times\nabla\left[\left.\langle  n,(\mathrm{sym}\,\widetilde{m})\cdot n\right.\rangle \right]+\frac{1}{2}\nabla\left[\mathrm{anti}\left(T\cdot\widetilde{m}\cdot n\right)\cdot T\right]:T.
\end{align}
\end{proposition}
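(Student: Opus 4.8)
The plan is to reduce the identity to a statement about the normal--normal part of $\widetilde m$ alone, exploiting linearity of all the operations involved. First I would split the vector field $b:=\widetilde m\cdot n$ into its tangential and normal parts on $\Sigma$,
\[
b = T\cdot\widetilde m\cdot n + \langle n, b\rangle\, n = T\cdot\widetilde m\cdot n + c\, n, \qquad c:=\langle n,(\sym\widetilde m)\cdot n\rangle,
\]
where I use that $\langle n,\widetilde m\cdot n\rangle = \langle n,(\sym\widetilde m)\cdot n\rangle$ since the skew part of $\widetilde m$ contributes nothing to the normal--normal component. Since $\anti(\cdot)$ is linear and the surface operation $X\mapsto \nabla[X\cdot T]:T$ (which in indices sends a second-order tensor $X$ to the vector $(X_{ij}T_{jp})_{,h}T_{hp}$) is linear in $X$, applying it to $\anti(b)=\anti(T\cdot\widetilde m\cdot n)+c\,\anti(n)$ immediately produces the term $\tfrac12\nabla[\anti(T\cdot\widetilde m\cdot n)\cdot T]:T$ appearing on the right-hand side, plus a remainder $\tfrac12\nabla[c\,\anti(n)\cdot T]:T$. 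Thus everything reduces to proving the single scalar-field identity
\[
\nabla[c\,\anti(n)\cdot T]:T = n\times\nabla c.
\]

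Then I would compute the remainder in Levi-Civita index notation, writing $(\anti(n)\cdot T)_{ip} = -\epsilon_{ijk}n_k T_{jp}$ so that its $i$-th surface-divergence component is $(-c\,\epsilon_{ijk}n_k T_{jp})_{,h}T_{hp}$, and expand by Leibniz into three terms according to whether $\nabla$ hits $c$, $n$, or $T$. For the term in which $\nabla$ hits $c$ I would use $T_{jp}T_{hp}=T_{jh}$ (idempotency and symmetry of $T$), $T_{jh}=\delta_{jh}-n_jn_h$, and $\epsilon_{ijk}n_jn_k=0$ to obtain exactly $-\epsilon_{ijk}c_{,j}n_k=(n\times\nabla c)_i$. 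The term where $\nabla$ hits $T$ I would kill using $T_{jp,h}=-(n_{j,h}n_p+n_jn_{p,h})$ together with $n_pT_{hp}=0$ and once more $\epsilon_{ijk}n_jn_k=0$, so that it vanishes identically.

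The main obstacle is the remaining term, in which $\nabla$ falls on $n$: it reads $-c\,\epsilon_{ijk}n_{k,h}T_{jh}$, and its vanishing is the heart of the computation. Here I would observe that differentiating $|n|^2=1$ gives $n_k n_{k,h}=0$, so that $n_{k,h}=T_{kl}n_{l,h}$, whence $n_{k,h}T_{jh}=T_{jh}n_{l,h}T_{lk}=:\Pi_{jk}$ is precisely the fully tangential surface gradient of $n$, i.e.\ (up to sign and normalization) the second fundamental form of $\Sigma$. Using the representation $n=\nabla F/|\nabla F|$ one checks $\Pi_{jk}=|\nabla F|^{-1}T_{jh}F_{,hl}T_{lk}$, which is \emph{symmetric} in $j,k$ because $F_{,hl}=F_{,lh}$ (this is exactly where the $C^2$ regularity of $F$ enters). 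Contracting the symmetric $\Pi_{jk}$ with the antisymmetric $\epsilon_{ijk}$ yields zero, so this term drops out and the remainder equals $n\times\nabla c$, completing the proof. The only delicate points to state carefully are the well-definedness of the ambient extension $n=\nabla F/|\nabla F|$ in a neighborhood of $\Sigma$ and the symmetry of the shape operator; everything else is routine index bookkeeping.
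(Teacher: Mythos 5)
Your proof is correct and follows the same skeleton as the paper's own argument (Appendix \ref{appendixprop:ComparisonToMindlin}): the identical decomposition of $\widetilde m\cdot n$ into its tangential part $T\cdot\widetilde m\cdot n$ plus $\langle n,(\sym\widetilde m)\cdot n\rangle\, n$, the same use of linearity of $\anti(\cdot)$ and of $X\mapsto\nabla[X\cdot T]:T$, the same Leibniz expansion of the remainder $\nabla[c\,\anti(n)\cdot T]:T$ into three terms, the same identification of the $\nabla c$ term with $n\times\nabla c$ via $T_{jp}T_{hp}=T_{jh}$ and $\epsilon_{ijk}n_jn_k=0$, and the same annihilation of the term where the derivative hits $T$. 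The one point of genuine divergence is how the term carrying $\nabla n$ is shown to vanish: the paper isolates this as Lemma \ref{sub:vanishing}, namely $\nabla[\anti(n)]:T=0$, and verifies it by a direct index computation with $n=\nabla F/\|\nabla F\|$ that terminates in the cancellation $(\epsilon_{ilj}+\epsilon_{ijl})F_{,lk}F_{,j}F_{,k}=0$; you instead project $n_{k,h}$ using $n_kn_{k,h}=0$, identify $T_{jh}n_{l,h}T_{lk}=\|\nabla F\|^{-1}T_{jh}F_{,hl}T_{lk}$ as the (symmetric) second fundamental form, and conclude by contracting a symmetric tensor against $\epsilon_{ijk}$. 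Both arguments ultimately rest on the same ingredient --- the symmetry of the Hessian of $F$, i.e.\ the $C^2$ regularity of the level-set function --- but your version makes the geometric reason for the cancellation explicit, whereas the paper's lemma is a self-contained brute-force verification. Your closing caveats (well-definedness of the ambient extension of $n$ near $\Sigma$, symmetry of the shape operator) are exactly the hypotheses the paper also uses implicitly, so nothing is missing.
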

\begin{proof}
The proof is included in Appendix \ref{appendixprop:ComparisonToMindlin}
\end{proof}

\begin{remark}
From the above proposition, it follows that the first of the boundary conditions \eqref{bc1002}
(or equivalently \eqref{eq:Boundary_Fin_1}) can be finally re-written
in the form
\begin{align}
\left(\sigma-\widetilde{\tau}\right)\cdot n-\frac{1}{2}n\times\left[\nabla\left(\left.\langle  n,(\mathrm{sym}\,\widetilde{m})\cdot n\right.\rangle \right)\right]-\frac{1}{2}\nabla\left[\mathrm{anti}\left(T\cdot\widetilde{m}\cdot n\right)\cdot T\right]:T=t^{\rm ext}.\label{conclusionMTU}
\end{align}
\end{remark}
In  Section \ref{MindlinReas} we have recalled the argument of Mindlin and Tiersten
and we have remarked, see \eqref{bcme2}, that the term $-\frac{1}{2}\nabla\left[\mathrm{anti}\left(T\cdot\widetilde{m}\cdot n\right)\cdot T\right]:T$
is absent in their formulation since it remains somehow hidden in duality of $\mathrm{curl}(\delta  u)$ which is not manipulated further in their formulation.

\subsection{Final form of the strongly independent, geometric and traction boundary conditions for the indeterminate
couple stress model}\label{toMincom}

Basing ourselves on the previously results obtained in Subsection \ref{sub:identification-1},
we can now establish which is the set of geometric and traction
boundary conditions to be used in the  indeterminate couple stress
model, alternatively to those proposed by Mindlin and Tiersten. As we will better point out in the remainder of this section, the boundary conditions that we derive by our direct approach are as legitimate as those proposed by Mindlin and Tiersten. Nevertheless, if in one case one can equivalently pass from one set of imposed boundary conditions to the other one, such equivalence cannot be stated for the case of mixed boundary conditions.

\subsubsection{Geometric (kinematical, essential) strongly independent boundary conditions for the indeterminate couple
stress model}

As for the geometric boundary conditions, we recall that one can assign on $\Gamma\subset\partial\Omega$
the following conditions
\begin{align}
u& = u^{\rm ext}\qquad\qquad\quad\quad\quad\quad\quad\quad\qquad\qquad\qquad\qquad\ (3\ \text{bc})\notag\label{bc1110}\\
{}(\id-n\otimes n)\cdot\nabla u\cdot n\, & =(\id-n\otimes n)\cdot  a^{\rm ext}\,\qquad\quad\quad\ \qquad\qquad\qquad\qquad(2\ \text{bc})
\end{align}
where $ u^{\rm ext},a^{\rm ext}:\mathbb{R}^{3}\rightarrow\mathbb{R}^{3}$
are prescribed functions.
Such conditions are the geometric boundary conditions which are known
to be valid in the framework of second gradient theories, with the
peculiarity that here only the tangent part of the normal derivative
of displacement can be assigned here.

We have already shown that the fact of assigning the tangent part of $\nabla u\cdot n$ is indeed  equivalent to assigning the tangent part of $\mathrm{curl}\, u$, so that such set of geometric boundary conditions can be seen to be equivalent to Mindlin and Tiersten one's according to Lemma \ref{equivalentBCs}.

\subsubsection{ Traction strongly independent boundary conditions for the indeterminate
couple stress model}

As far as the traction boundary conditions are concerned, considering
 the manipulated form \eqref{conclusionMTU} of equation \eqref{bc1002_bis}$_1$,
the strongly independent boundary conditions \eqref{bc1002_bis} for the indeterminate couple stress model can be finally rewritten as
\begin{align}
\left.\begin{array}{rcl}
\label{bc1001}[(\sigma-\widetilde{\tau})\cdot n-\frac{1}{2}n\times\nabla[\langle n,(\sym\widetilde{m})\cdot n\rangle]\hspace{2cm} &  & \vspace{1.2mm}\\
-\frac{1}{2}\nabla[(\anti[(\id-n\otimes n)\cdot\widetilde{m}\cdot n])\cdot T]:T & = & t^{\rm ext},\vspace{1.2mm}\\
\dd(\id-n\otimes n)\cdot\anti[(\id-n\otimes n)\cdot\widetilde{m}\cdot n]\cdot n & = & (\id-n\otimes n)\, g^{\rm ext}
\end{array}\right\}  &\quad \text{on}\quad\partial{\Omega}\setminus\overline{\Gamma}\quad\begin{array}{r}
(3\ \text{bc})\vspace{1.2mm}\\
(2\ \text{bc})
\end{array}\\
\begin{array}{rcl}
\dd\jump{\anti[(\id-n\otimes n)\cdot\widetilde{m}\cdot n]\cdot\nu} &  & =\hspace{0.15cm} {\pi}^{\rm ext},\hspace{2.15cm}\end{array} & \quad \text{on}\quad \partial{\Gamma}\quad\quad\ \ \begin{array}{r}
(3\ \text{bc}),\end{array}\notag
\end{align}
where $t^{\rm ext}, g^{\rm ext}:\mathbb{R}^{3}\rightarrow\mathbb{R}^{3}$
are prescribed functions on $\partial\Omega\setminus\overline{\Gamma}$,
while $ {\pi}^{\rm ext}:\mathbb{R}^{3}\rightarrow\mathbb{R}^{3}$ is
prescribed on $\partial\Gamma$ and leads to 3 boundary conditions.

In this section we have deduced the strongly independent traction boundary conditions
which are coming in a natural way from second gradient elasticity and we have compared them to those presented by Mindlin and Tiersten thus showing their apparent disagreement.

\section{Are Mindlin and Tiersten's weakly independent boundary conditions equivalent to our strongly independent ones?}

Up to this point, we have shown that the boundary conditions  derived by Mindlin and Tiersten \cite{Mindlin62} for the indeterminate couple stress model are not directly superposable to those that we obtain by means of a standard variational approach in the spirit of second gradient theories.

Even if these sets of boundary conditions are formally not the same, they both follow from the same strain energy density.
The only difference that we can point out in the two approaches is related to the process of integration by parts which is perfomed on the action functional based upon the considered strain energy density.
Indeed, Mindlin and Tiersten's boundary conditions are only "weakly independent", while those obtained by means of our direct approach can be considered to be "strongly independent" in the sense established in Subsection \ref{newsubsectionmadeo}.

To the sake of compacteness, we use in the sequel the following notations for the internal tractions and hypertractions respectively as obtained by Mindlin and Tiersten's and our approach
\begin{align}
\widetilde{t}^{\rm int}&:=\ \left(\sigma-\frac{1}{2}\,\anti({\rm Div}\,\widetilde{m})\right)\cdot\, n-\frac{1}{2}n\times\nabla[\langle n,(\sym\widetilde{{m}})\cdot n\rangle],&\text{Mindlin-Tiersten's formulation}\notag\\
t^{\rm int} &:=\ \Big(\sigma  -\frac{1}{2}\,\anti({\rm Div}\,\widetilde{m})\Big)\cdot n-\frac{1}{2}n\times\nabla[\langle n,(\sym\widetilde{{m}})\cdot n\rangle] & \text{our formulation}\notag\\
&\qquad-\frac{1}{2}\nabla\left[\:\anti\left(\:(\id-n\otimes n)\cdot\widetilde{{m}}\cdot n\:\right)\cdot(\id-n\otimes n)\:\right]:(\id-n\otimes n),& \label{definitions_tractions} \\
\widetilde{g}^{ \rm int}&:=(\id-n\otimes n)\cdot \widetilde{{m}}\cdot n, &\text{Mindlin-Tiersten's formulation}\notag\\
g^{\rm int}&:=(\id-n\otimes n)\cdot\anti[(\id-n\otimes n)\cdot\widetilde{{m}}\cdot n]\cdot n=\anti[(\id-n\otimes n)\cdot\widetilde{{m}}\cdot n]\cdot n. &\text{our formulation}\notag
\end{align}

In the last equality for $g^{\rm int}$ we have used the fact that $g^{\rm int}$ is indeed the dual of $T\cdot \nabla (\delta u) \cdot n$ which is a vector tangent to the boundary, which is equivalent to say that the normal part of $g^{\rm int}$ does not intervene in the balance equations.

To the sake of of simplicity, we summarize the two sets of possible geometric and traction boundary conditions as obtained by Mindlin and Tiersten and by ourselves in the following summarizing box
\begin{center}
	{\small {\bf Table 1. Possible  sets of boundary conditions in the indeterminate couple stress model. \medskip\medskip}}
	\begin{tabular}{|l|l|l|}
		\hline& &\\
		&{\bf Mindlin and Tiersten} &{\bf  Our approach}
		\\
		\hline& &
		\\
		{\bf 	Geometric} & (I) \ \  ${u}=\widetilde{u}^{\rm ext}$, & (III) \ $
		{u}=u^{\rm ext}
		$
		\\
		& (II) \  $ T\cdot\curl u= T\cdot \widetilde{a}^{\rm ext}$ & (IV) $
		T\cdot\nabla u \cdot n= T\cdot a^{\rm ext}
		$
		\\
		\hline& &
		\\
		{\bf Traction }&(A) \ \ $ \widetilde{t}^{\ \rm int}=\widetilde{t}^{\rm ext}$, &   (C)\ \  $
	{t}^{\rm int}={t}^{\rm ext}
		$
		\\
	& (B)\ \  $ \widetilde{g}^{\rm int}=\widetilde{g}^{\rm ext}$ & (D)\ \ ${g}^{\rm int}={g}^{\rm ext}
		$
		\\

		\hline& &
		\\
		{\bf 	Mixed BCs 1} & (I) \ \  ${u}=\widetilde{u}^{\rm ext}$, & (III) \ $
		{u}=u^{\rm ext}
		$
		\\
		& (B) \   $ \widetilde{g}^{\rm int}=\widetilde{g}^{\rm ext}$  & (D) ${g}^{\rm int}={g}^{\rm ext}
		$
		\\
		\hline& &
		\\
		{\bf 	Mixed BCs 2} & (II) \  $  T\cdot\curl u= T\cdot \widetilde{a}^{\rm ext}$, & (IV) $
		T\cdot\nabla u \cdot n= T\cdot a^{\rm ext}
		$
		\\
		& (A) \ \ $ \widetilde{t}^{\ \rm int}=\widetilde{t}^{\rm ext}$  &  (C)\ \  $
	{t}^{\rm int}={t}^{\rm ext}
		$
		\\
		\hline
	\end{tabular}
\end{center}

The problem now arises to establish the equivalence between analogous sets of boundary conditions in the two approaches. Since all the presented boundary conditions arise from the same strain energy density, we would naively  expect a complete equivalence between the two models. We will instead show that, if a direct equivalence can be established in some cases, this is not indeed feasible for all possible sets of boundary conditions that may be introduced in couple-stress continua. 
More particularly, we individuate different possible sets of boundary conditions that are allowed in couple-stress continua being compatible with the Principle of Virtual Powers as settled in Mindlin's and Tiersten's and our approach respectively:

\begin{itemize}
\item \textbf{Fully geometric boundary conditions}. The boundary conditions (I) and (II) (Mindlin and Tiersten) \textbf{or} (III) and (IV) (our approach) are simultaneously assigned on the same portion of the boundary.
\item \textbf{Fully traction boundary conditions}. The boundary conditions (A) and (B) (Mindlin and Tiersten) \textbf{or} (C) and (D) (our approach) are simultaneously assigned on the same portion of the boundary.
\item \textbf{Mixed 1: displacement/double-force boundary conditions}. The boundary conditions (I) and (B) (Mindlin and Tiersten) \textbf{or} (III) and (D) (our approach) are simultaneously assigned on the same portion of the boundary.
\item \textbf{Mixed 2:  force/$D^1(u)$ boundary conditions}. The boundary conditions (II) and (A) (Mindlin and Tiersten) \textbf{or} (IV) and (C) (our approach) are simultaneously assigned on the same portion of the boundary. We recall that by $D^1(u)$ we compactly indicate the operator $T\cdot \mathrm{curl}\,u$ when we consider Mindlin and Tiersten's approach {\bf or}  the operator $T\cdot \nabla u\cdot n$ when considering our approach.
\end{itemize}
We explicitly remark that, in order to be consistent with the introduced Principle of Virtual Powers, when the first sets of conditions is applied on a portion $\Gamma$ of the boundary, the second ones must be assigned on the portion $\partial \Omega \setminus \Gamma$. Analogously, when assigning the third set of boundary conditions on  $\Gamma$, the fourth one must be assigned on $\partial \Omega \setminus \Gamma$.

In the following subsections we carefully study the four introduced cases by establishing whether Mindlin and Tiersten's approach is equivalent with our formulation of the indeterminate couple-stress model.

\subsection{Fully kinematical boundary conditions}
We have already shown (see Lemma \ref{lemmadesprecurlaxl}) that it is equivalent to simultaneously assign the displacement and the tangential part of its curl \textbf{or} the displacement and the tangential part of its normal derivative \textbf{on the same portion of the boundary}.  This means that Mindlin and Tiersten's boundary conditions (I)+(II) are completely equivalent to our conditions (III)+(IV) in the sense that one system of equations can be directly obtained from the other and vice-versa. We would like to thank an unknown reviewer for pointing out to us this equivalency.

\subsection{Fully traction boundary conditions} \label{fullyTraction}
We consider here the case in which forces and double forces are simultaneously applied \textbf{on the same portion of the boundary}. More particularly, this means that we are simultaneously applying on the same portion of the boundary conditions (A) and (B) (Mindlin and Tiersten's) \textbf{or} conditions (C) and (D) (our approach). 

We start by showing that conditions (B) and (D) are equivalent. To do so, we notice that, given two vectors $v$ and $n$, one can check that, according to definitions \ref{eq:AxlAnti_indices}, the following equalities hold $$(\mathrm{anti}\,v)_{ij} \, n_j=-\epsilon_{ijk} \, v_k \, n_j=-(n\times n)_i=(v\times n)_i.$$
This means that we can write:
$$g^{\rm int}=\anti[(\id-n\otimes n)\cdot\widetilde{{m}}\cdot n]\cdot n=((\id-n\otimes n)\cdot\widetilde{{m}}\cdot n)\times n,$$
so that the boundary condition (D) can be rewritten as $((\id-n\otimes n)\cdot\widetilde{{m}}\cdot n)\times n=g^{\rm ext}$. Comparing this last equation with equation (B), we can finally conclude by direct inspection that equations (B) and (D) are equivalent when setting 
\begin{equation}
g^{\rm ext}=\widetilde g^{\rm ext}\times n. \label{ident_df}
\end{equation}

On the other hand, eq. (C) can be rewritten as

$$ \Big(\sigma  -\frac{1}{2}\,\anti({\rm Div}\,\widetilde{m})\Big)\cdot n-\frac{1}{2}n\times\nabla[\langle n,(\sym\widetilde{{m}})\cdot n\rangle]=t^{\rm ext}+\frac{1}{2}\nabla\left[\:\anti\left(\:T\cdot\widetilde{{m}}\cdot n\:\right)\cdot T \:\right]:T,$$
which, considering eq. (B) can also be rewritten as
$$ \Big(\sigma  -\frac{1}{2}\,\anti({\rm Div}\,\widetilde{m})\Big)\cdot n-\frac{1}{2}n\times\nabla[\langle n,(\sym\widetilde{{m}})\cdot n\rangle]=t^{\rm ext}+\!\!\!\!\!\!\!\!\!\!\!\!\!\!\!\!\!\!\!\!\!\!\!\!\!\!\!\!\!\!\!\!\!\!\underbrace{\frac{1}{2}\nabla\left[\:\anti\left(\:\widetilde g^{\rm ext}\:\right)\cdot T \:\right]:T}_{\text{already known since only the tangetial derivatives are considered}}.$$
It is easy to check that this last equation is equivalent to eq. (A) when setting
\begin{equation}
t^{ \rm ext}=\widetilde t^{\rm ext}-\frac{1}{2}\nabla\left[\:\anti\left(\:\widetilde g^{\rm ext}\:\right)\cdot T \:\right]:T. \label{ident_force}
\end{equation}
We have thus proved that, in the case of fully traction boundary conditions, given a couple of tractions $(\widetilde t^{\rm ext},\, \widetilde g^{\rm ext})$ in Mindlin and Tiersten's model, one can always ``a priori'' find a corresponding pair of  tractions $(t^{\rm ext},g^{\rm ext})$ in our model such that the two sets of boundary conditions (A)+(B) and (C)+(D) are equivalent thanks to the relationships \eqref{ident_df} and \eqref{ident_force}. The converse is clearly also true.

\subsection{Mixed 1: displacement/double-force boundary conditions}
We treat here the case in which we simultaneously assign the displacement and the double force on the same portion of the boundary. This is equivalent to say that one is assigning eqs. (I) and (B) in the Mindlin and Tiersten's approach \textbf{or} (III) and (D) when considering our approach.
As already proven in the previous subsection, the equivalence between equations (B) and (D) can be obtained by setting the relationship \eqref{ident_df} between $ g^{\rm ext}$ and  $ \widetilde g^{\rm ext}$. Moreover eqs. (I) and (III) are clearly equivalent when $  u^{\rm ext}= \widetilde u^{\rm ext}.$

\subsection{Mixed 2: force/$D^1(u)$ boundary conditions}
We have shown up to now that for the three preceeding cases of boundary conditions an \textit{a priori} equivalence can be established between Mindlin and Tiersten's and our couple stress model. More particularly, this means that, given a boundary value problem stemming from Mindlin and Tiersten's model, we can set up (thanks to suitable identifications between tractions and double tractions in the two models) another boundary value problem which give rise to the same solution. We show here that the establishment of such an \textit{a priori} equivalence is not possible in this last "Mixed 2" case in which forces and higher derivatives of the displacement field are simultaneously assigned on the same portion of the boundary.
We start by noticing that if the displacement field is not assigned on the boundary, then  Lemma \ref{lemmadesprecurlaxl}) is not valid any more and, as a consequence, equations (II) and (IV) are no longer equivalent. Indeed, starting from the equations \eqref{Curl_NormDer}, it is possible to easily deduce that the tangent part of the  normal derivative and of the $\mathrm{curl}$ of the displacement field are respectively given by 
\begin{equation}T\cdot \nabla u \cdot n= \left( \frac{\partial u_\tau}{\partial x_n}, \ \frac{\partial u_\nu}{\partial x_n}, \ 0 \right)^T  \quad \text{and} \quad  
T \cdot \mathrm{curl}\,     u=\left( \frac{\partial u_n}{\partial x_\nu}- \frac{\partial u_\nu}{\partial x_n}, \ \frac{\partial u_\tau}{\partial x_n}- \frac{\partial u_n}{\partial x_\tau}, \ 0 \right)^T .
\end{equation}
When, as in this case, the displacement is not assigned on the boundary, both its normal and tangential derivatives are free, so that we cannot establish an \textit{a priori} equivalence between equations (II) and (IV). The same is true when one wants to compare equations (A) and (C). In fact, we can always recognize that eq. (C) can be rewritten as:
$$ \Big(\sigma  -\frac{1}{2}\,\anti({\rm Div}\,\widetilde{m})\Big)\cdot n-\frac{1}{2}n\times\nabla[\langle n,(\sym\widetilde{{m}})\cdot n\rangle]=t^{\rm ext}+\frac{1}{2}\nabla\left[\:\anti\left(\:\widetilde g^{\rm ext}\:\right)\cdot T \:\right]:T.$$ Nevertheless, contrary to the case treated in Subsection \ref{fullyTraction}, this last equation cannot be claimed to be equivalent to eq. (A) just by setting
\begin{equation}
t^{ \rm ext}=\widetilde t^{\rm ext}-\frac{1}{2}\nabla\left[\:\anti\left(\: T\cdot \widetilde m \cdot n\:\right)\cdot T \:\right]:T=\widetilde t^{\rm ext}-\frac{1}{2}\nabla\left[\:\anti\left(\:\widetilde g^{\rm int}(u)\:\right)\cdot T \:\right]:T. \label{ident_force1}
\end{equation}
As a matter of fact, the term $g^{\rm int}(u)$ appearing in equation \eqref{ident_force1} and defined in eq. \eqref{definitions_tractions}$_3$ depends on the displacement field $u$ through the couple-stress tensor $\widetilde m$ defined in \eqref{eq:mMindlin}. This means that, actually, formula \eqref{ident_force1} does not allow to calculate \textit{a priori} the force $t^{ \rm ext}$ in our model which is equivalent to an assigned $\widetilde t^{\rm ext}$ in Mindlin and Tiersten's model. Of course, one could think to assign, e.g., the boundary conditions (II) and (A) on the same boundary, solve the associated boundary value problem so finding its solution $u^*$ and then calculate the quantities $a^{\rm ext}$ and $t^{\rm ext}$ to be assigned in our model in order to give rise to the same solution $u^*$. The converse operation of imposing our boundary conditions (IV) and (C) in our model and then calculate \textit{a posteriori} the quantities $\widetilde t^{\rm ext}$ and $\widetilde a^{\rm ext}$ to be assigned in Mindlin and Tiersten's model to obtain the same solution, can also be envisaged. In other words, we are saying that only an \textit{a posteriori} equivalence is eventually possible in the case that the considered force/$D^1(u)$ boundary conditions are applied on the boundary. This fact poses, at least, some philosophical problems by giving rise to the question: \textit{how to chose among Mindlin and Tiersten's or our mixed boundary conditions?}. This indeterminacy leaves many questions open concerning the physical transparency of higher gradient theories. 

\section{Conclusions}
The present paper gives a comprehensive analysis of the indeterminate couple stress model and of the  boundary conditions arising in this theory. We have seen the indeterminate couple stress model as a special case of the full strain gradient elasticity, we have directly derived the equilibrium equations and the boundary conditions in the same spirit as in the general strain gradient elasticity approach and we have compared our approach with that proposed by Mindlin and Tiersten \cite{Mindlin62}.

As a balance for the present paper we can state that an apparent inconsistency is found between the classical (Mindlin and Tiersten's) approach and our direct approach to the indeterminate couple stress model.

Indeed, if an "a priori" equivalence can be found in most cases between the two models, we point out that this is not the case when considering "mixed" boundary conditions for which "forces" and suitable combinations of first order derivatives of displacement are simultaneously assigned on the same portion of the boundary.

It turns out that for such particular mixed boundary conditions an "a priori" equivalence cannot be established between the two models.

This fact poses serious conceptual and philosophical problems concerning the transparency of the physical meaning of the boundary conditions in higher gradient models. The question remains open  whether the use of one model would be preferable to the other, at least for the quoted case of mixed boundary conditions.

\section*{Acknowledgement}  We thank  Samuel Forest  and for the discussions on   a prior version of the paper and to the unknown reviewers for the helpful suggestions and comments.  The work of the second author was partial  supported by a grant of the Romanian National Authority for Scientific Research and Innovation, CNCS-UEFISCDI, project number PN II-ID-PCE-2011-3-0521,
contract nr. 144/2011.

\bibliographystyle{plain} %plain
\addcontentsline{toc}{section}{References}
\begin{footnotesize}

\end{footnotesize}

\begin{footnotesize}
\appendix
%dummy comment inserted by tex2lyx to ensure that this paragraph is not empty%dummy comment inserted by tex2lyx to ensure that this paragraph is not empty%dummy comment inserted by tex2lyx to ensure that this paragraph is not empty
\section*{Appendix}
\addcontentsline{toc}{section}{Appendix}
\setcounter{section}{1}

\subsection{First variation of a second gradient action functional and principle
of virtual power in compact form}\setcounter{equation}{0}

In this section we basically propose again the calculations concerning
the first variation of a second gradient action functional by means
of a compact notation instead of using Levi-Civita index notation
as done instead in Section \ref{Second gradient}. To this purpose,
let us consider the second gradient energy $W(\nabla\mathbf{u},\nabla\nabla\mathbf{u})=W(\nabla u)+W_{{\rm curv}}(\nabla\nabla u)$
and the associated action functional in the static case (no inertia
considered here)
\[
\mathcal{A}=-\int_{\Omega}W(\nabla u,\nabla\nabla u)\,dv=-\int_{\Omega}[W(\nabla u)+W_{{\rm curv}}(\nabla\nabla u)]\,dv.
\]
The first variation of the action functional can be interpreted as
the power of internal actions $\mathcal{P}^{\rm int}$ of the considered
system and can be computed as follows
\[
\mathcal{P}^{\rm int}=\delta\mathcal{A}=-\int_{\Omega}\left(\langle D_{\nabla u}W(\nabla u),\nabla\delta u\rangle+\langle D_{\nabla\nabla u}W_{{\rm curv}}(\nabla\nabla u),\nabla\nabla\delta u\rangle\right)\,dv,
\]
Integrating a first time by parts and using the divergence theorem
we get
\begin{align*}
\delta\mathcal{A}= & \int_{\Omega}\langle{\rm Div}[D_{\nabla u}W(\nabla u)],\delta u\rangle \,dv-\int_{\partial\Omega}\langle D_{\nabla u}W(\nabla u)\cdot n,\delta u\rangle \,da\\
 & -\int_{\partial\Omega}\langle D_{\nabla\nabla u}W_{{\rm curv}}(\nabla\nabla u)\cdot n,\nabla\delta u\rangle \,da+\int_{\Omega}{\rm Div}[D_{\nabla\nabla u}W_{{\rm curv}}(\nabla\nabla u)],\nabla\delta u\rangle \,dv.
\end{align*}
Integrating again by parts the last bulk term we get
\begin{align*}
\delta\mathcal{A}= & -\int_{\partial\Omega}\langle\left\{ D_{\nabla u}W(\nabla u)-{\rm Div}[D_{\nabla\nabla u}W_{{\rm curv}}(\nabla\nabla u)]\right\} \cdot n,\delta u\rangle \,da\\
 & +\int_{\Omega}\langle{\rm Div}\left\{ D_{\nabla u}W(\nabla u)-{\rm Div}[D_{\nabla\nabla u}W_{{\rm curv}}(\nabla\nabla u)]\right\} ,\delta u\rangle \,dv-\int_{\partial\Omega}\langle D_{\nabla\nabla u}W_{{\rm curv}}(\nabla\nabla u)\cdot n,\nabla\delta u\rangle \,da,
\end{align*}
which can also be rewritten as
\begin{align}
\delta\mathcal{A}= & -\int_{\partial\Omega}\langle\left\{ \sigma-{\rm Div}[\mathfrak{m}]\right\} \cdot n,\delta u\rangle \,da+\int_{\Omega}\langle{\rm Div}\left\{ \sigma-{\rm Div}[\mathfrak{m}]\right\} ,\delta u\rangle \,dv-\int_{\partial\Omega}\langle\mathfrak{m}\cdot n,\nabla\delta u\rangle \,da\label{eq:Intermediate_var-1}
\end{align}
if one sets
\[
\sigma=D_{\nabla u}W(\nabla u),\quad\qquad\mathfrak{m}=D_{\nabla\nabla u}W_{{\rm curv}}(\nabla\nabla u).
\]
Using the brief digression concerning differential geometry (see Section
\ref{sub:Diff_Geometry}) and recalling the properties (\ref{eq:Projectors_Properties}),
we can now ulteriorly manipulate the last term in eq. (\ref{eq:Intermediate_var-1})
as follows
\begin{align}
\langle\mathfrak{m}\cdot n,\nabla\delta u\rangle & =\langle\mathfrak{m}\cdot n,\nabla\delta u\,\id\rangle=\langle\mathfrak{m}\cdot n,\nabla\delta u\,(T+Q)\rangle=\langle(\mathfrak{m}\cdot n)\cdot T,\nabla\delta u\rangle+\langle\mathfrak{m}\cdot n,(\nabla\delta u)\cdot Q\rangle\\
 & =\langle(\mathfrak{m}\cdot n)\cdot T\cdot T,\nabla\delta u\rangle+\langle\mathfrak{m}\cdot n,(\nabla\delta u)\cdot Q\rangle\notag =\langle\, T,T\cdot(\mathfrak{m}\cdot n)^{T}\cdot\nabla\delta u\rangle+\langle\mathfrak{m}\cdot n,(\nabla\delta u)\cdot\left(n\otimes n\right)\rangle\notag \\
 &=\langle\, T,T\cdot(\mathfrak{m}\cdot n)^{T}\cdot\nabla\delta u\rangle+\langle\{n\otimes[(\nabla\delta u)\cdot n]\}\mathfrak{m}\cdot n,\id\rangle\notag =\langle\, T,T\cdot(\mathfrak{m}\cdot n)^{T}\cdot\nabla\delta u\rangle+\langle n\otimes\{(\mathfrak{m}\cdot n)^{T}\cdot(\nabla\delta u)\cdot n\},\id\rangle\notag\\
 &=\langle\, T,T\cdot(\mathfrak{m}\cdot n)^{T}\cdot\nabla\delta u\rangle+\langle n,(\mathfrak{m}\cdot n)^{T}\cdot(\nabla\delta u)\cdot n\rangle\notag=\langle\, T,T\cdot(\mathfrak{m}\cdot n)^{T}\cdot\nabla\delta u\rangle+\langle(\mathfrak{m}\cdot n)\cdot n,(\nabla\delta u)\cdot n\rangle.\notag
\end{align}
We can hence recognize in the last term of this formula that the normal
derivative $\left(\frac{\partial\delta u}{\partial n}\right)_{i}=\left[(\nabla\delta u)\cdot n\right]_{i}=\delta u_{i,h}n_{h}$
of the displacement field appears. As for the other term, it can be
still manipulated, suitably integrating by parts for $\Gamma$ an open
subset of $\partial\Omega$ and then using the surface divergence
theorem (\ref{eq:Surface_Div_Th-2}), so that we can finally write
\begin{align}
\int_{\partial\Omega}\langle & \mathfrak{m}\cdot n,\nabla\delta u\rangle \,da=\int_{\partial\Omega}\langle\, T,\nabla[T\cdot(\mathfrak{m}\cdot n)^{T}\cdot\delta u]-\nabla[T\cdot(\mathfrak{m}\cdot n)^{T}]\cdot\delta u\rangle \,da+\int_{\partial\Omega}\langle(\mathfrak{m}\cdot n)\cdot n,(\nabla\delta u)\cdot n\rangle\, da\notag\\
 & =\int_{\partial\Gamma}\jump{\langle(\mathfrak{m}\cdot n)^{T}\cdot\delta u,\nu\rangle}\,ds-\int_{\partial\Omega}\langle\, T,\nabla[T\cdot(\mathfrak{m}\cdot n)^{T}]\cdot\delta u\rangle \,da+\int_{\partial\Omega}\langle(\mathfrak{m}\cdot n)\cdot n,(\nabla\delta u)\cdot n\rangle \,da\notag\\
 & =\int_{\partial{\Gamma}}\jump{\langle(\mathfrak{m}\cdot n)\cdot\nu,\delta u\rangle}\,ds-\int_{\partial\Omega}\langle\,\nabla[(\mathfrak{m}\cdot n)\cdot T]:T,\delta u\rangle \,da+\int_{\partial\Omega}\langle(\mathfrak{m}\cdot n)\cdot n,(\nabla\delta u)\cdot n\rangle \,da.
\end{align}
The final variation of the second gradient action functional given
in (\ref{eq:Intermediate_var-1}), can hence be finally written as
\begin{align}
\delta\mathcal{A} & =\int_{\Omega}\langle{\rm Div}\left\{ \sigma-{\rm Div}[\mathfrak{m}]\right\} ,\delta u\rangle \,dv\notag\label{eq:Pint-2}-\int_{\partial\Omega}\langle\left\{ \sigma-{\rm Div}[\mathfrak{m}]\right\} \cdot n-\nabla[(\mathfrak{m}\cdot n)\cdot T]:T,\delta u\rangle \,da\\
 & \qquad-\int_{\partial\Omega}\langle(\mathfrak{m}\cdot n)\cdot n,(\nabla\delta u)\cdot n\rangle \,da-\int_{\partial{\Gamma}}\jump{\langle(\mathfrak{m}\cdot n)\cdot\nu,\delta u\rangle}\,ds,
\end{align}
which is directly comparable with eq. \eqref{eq:Pint-1} obtained
by means of calculations via Levi-Civita index notation. If now one recalls
the principle of virtual powers according to which a given system
is in equilibrium if and only if the power of internal forces is equal
to the power of external forces, it is straightforward that the expression
(\ref{eq:Pint-2}) naturally suggests which is the correct expression
for the power of external forces that a second gradient continuum
may sustain, namely:
\[
\mathcal{P}^{{\rm ext}}=\int_{\Omega}\langle f^{{\rm ext}},\delta u\rangle \,dv+\int_{\partial\Omega\setminus\overline{\Gamma}}\langle t^{{\rm ext}},\delta u\rangle \,da+\int_{\partial\Omega\setminus\overline{\Gamma}}\langle M^{{\rm ext}},(\nabla\delta u)\cdot n\rangle \,da+\int_{\partial{\Gamma}}\langle\pi^{{\rm ext}},\delta u\rangle \,ds,
\]
where
\begin{itemize}
\item $f^{{\rm ext}}$ are external bulk forces (expending power on displacement),
\item $t^{{\rm ext}}$ are external surface forces (expending power on displacement),
\item $M^{{\rm ext}}$ are external surface double-forces (expending power
on the normal derivative of displacement) and
\item $\pi^{{\rm ext}}$ are external line forces (expending power on displacement).
\end{itemize}
Imposing that $\mathcal{P}^{{\rm int}}=-\mathcal{P}^{{\rm ext}}$
and localizing, one can get the strong form of the equations of motion
and associated boundary conditions for a second gradient continuum.

\subsection{\label{sec:Some-useful-relationships-2}Some useful relationships
between the third order hyperstress tensor and the second order
couple stress tensor for the  indeterminate couple stress model }

Let us consider the third order couple stress tensor
\begin{align}
\widetilde{\mathfrak{m}}  =&\, \alpha_{1}\nabla\left[\mathrm{skew}\left(\nabla u\right)\right]+\frac{1}{4}\left(\alpha_{1}-\alpha_{2}\right)\left[\nabla\left(\mathrm{Div}\, u\right)\otimes\id-\mathrm{Div}\left(\text{\ensuremath{\nabla}}u\right)\otimes\id\right]
\label{eq:DoubleStressCompact-1}\\
 & +\frac{1}{4}\left(\alpha_{1}-\alpha_{2}\right)\left[\left(\mathrm{Div}\left(\text{\ensuremath{\nabla}}u\right)\otimes\id\right)^{T_{12}}-\left(\nabla\left(\mathrm{Div}\, u\right)\otimes\id\right)^{T_{12}}\right],\nonumber
\end{align}
together with the second order couple stress tensor
\begin{align}
\widetilde{m} & =\frac{\alpha_{1}+\alpha_{2}}{2}\ \nabla\curl u+\frac{\alpha_{1}-\alpha_{2}}{2}\ \nabla(\curl u)^{T}={\alpha_{1}}\dev\sym(\nabla\curl u)+{\alpha_{2}}\,\skw(\nabla\curl u).\label{eq:mMindlin-1}
\end{align}
It can be checked that, suitably deriving eq. \eqref{eq:DoubleStress}
one gets
\begin{align}
(\mathrm{Div}\,\widetilde{\mathfrak{m}})_{ij}=\,\widetilde{\mathfrak{m}}_{ijk,k} & =\frac{\alpha_{1}}{2}\left(u_{i,jkk}-u_{j,ikk}\right)+\frac{1}{4}\left(\alpha_{1}-\alpha_{2}\right)\left[u_{p,ipj}-u_{i,ppj}+u_{j,ppi}-u_{p,jpi}\right]=\frac{\alpha_{1}+\alpha_{2}}{4}\left(u_{i,j}-u_{j,i}\right)_{,kk},\label{eq:Int0}
\end{align}
or equivalently in compact form
\begin{align}
\mathrm{Div}\,\widetilde{\mathfrak{m}}= & \frac{\alpha_{1}+\alpha_{2}}{2}\Delta(\skew\nabla u).\label{eq:Int1}
\end{align}

On the other hand, one has that
\begin{align*}
\frac{1}{2}\left[\mathrm{anti}\:\mathrm{Div}\,\widetilde{m}\right]_{ij}  =&\,-\frac{1}{2}\epsilon_{ijp}\widetilde{m}_{pm,m}=-\frac{1}{2}\epsilon_{ijp}\left(\frac{\alpha_{1}+\alpha_{2}}{2}\ \epsilon_{pqk}u_{k,qmm}+\frac{\alpha_{1}-\alpha_{2}}{2}\ \epsilon_{mqk}u_{k,qpm}\right)
\\
  =&\,\frac{\alpha_{1}+\alpha_{2}}{4}(\delta_{qj}\delta_{ik}-\delta_{qj}\delta_{jk})\: u_{k,qmm}\\
 & +\frac{\alpha_{1}-\alpha_{2}}{4}\left(-\delta_{mi}\delta_{qj}\delta_{kp}+\delta_{mi}\delta_{qp}\delta_{kj}+\delta_{mj}\delta_{qi}\delta_{kp}-\delta_{mj}\delta_{qp}\delta_{ki}-\delta_{mp}\delta_{qi}\delta_{kj}+\delta_{mp}\delta_{qj}\delta_{ki}\right)\: u_{k,qpm}
\\
  =&\,\frac{\alpha_{1}+\alpha_{2}}{4}\left(u_{i,jmm}-u_{j,imm}\right)+\frac{\alpha_{1}-\alpha_{2}}{4}\left(-u_{p,jpi}+u_{j,ppi}+u_{p,ipj}-u_{i,ppj}-u_{j,ipp}+u_{i,jpp}\right)\\
  =&\,\frac{\alpha_{1}+\alpha_{2}}{4}\left(u_{i,j}-u_{j,i}\right)_{mm}.
\end{align*}
This relationship, together with \eqref{eq:Int0} and \eqref{eq:Int1}
implies the relationship \eqref{eq:IdentBulk-1}.

As for proving the equalities \eqref{eq:IdentBulk-1}-\eqref{eq:Ident3-1-1},
we start remarking that
\begin{align}
\left(\widetilde{\mathfrak{m}}\cdot n\right)_{ij}=&\,\frac{\alpha_{1}}{2}\left(u_{i,jk}-u_{j,ik}\right)n_{k}+\frac{1}{4}\left(\alpha_{1}-\alpha_{2}\right)\left[u_{p,ip}\delta_{jk}-u_{i,pp}\delta_{jk}+u_{j,pp}\delta_{ik}-u_{p,jp}\delta_{ik}\right]n_{k}\nonumber \label{eq:IntermSec}\\
=&\,\frac{\alpha_{1}}{2}\left(u_{i,j}-u_{j,i}\right)_{,k}n_{k}+\frac{1}{4}\left(\alpha_{1}-\alpha_{2}\right)\left[u_{p,ip}n_{j}-u_{i,pp}n_{j}+u_{j,pp}n_{i}-u_{p,jp}n_{i}\right],\nonumber
\end{align}
which in compact form equivalently reads
\begin{align*}
\widetilde{\mathfrak{m}}\cdot n= & \,\alpha_{1}\left[\nabla\left(\mathrm{skew}\,\nabla u\right)\right]\cdot n+\frac{1}{4}\left(\alpha_{1}-\alpha_{2}\right)\left[\nabla\left(\mathrm{Div}\, u\right)\otimes n-\mathrm{Div}\left(\text{\ensuremath{\nabla}}u\right)\otimes n\right]
\\
= & -\frac{1}{4}\left(\alpha_{1}-\alpha_{2}\right)\left[\left(\nabla\left(\mathrm{Div}\, u\right)\otimes n\right)-\left(\mathrm{Div}\left(\text{\ensuremath{\nabla}}u\right)\otimes n\right)\right]^{T}\\
= &\, \alpha_{1}\left[\nabla\left(\mathrm{skew}\,\nabla u\right)\right]\cdot n+\frac{\left(\alpha_{1}-\alpha_{2}\right)}{2}\:\mathrm{skew}\left[\nabla\left(\mathrm{Div}\, u\right)\otimes n-\mathrm{Div}\left(\text{\ensuremath{\nabla}}u\right)\otimes n\right].
\end{align*}
On the other hand, using the notations introduced in \eqref{eq:AxlAnti_indices},
one has that
\begin{align*}
\left[\frac{1}{2}\mathrm{anti}\left(\widetilde{m}\cdot n\right)\right]_{ij}=&\,-\frac{1}{2}\epsilon_{ijk}\widetilde{m}_{km}n_{m}=-\frac{1}{2}\epsilon_{ijk}\left(\frac{\alpha_{1}+\alpha_{2}}{2}\ \epsilon_{kpq}u_{q,pm}+\frac{\alpha_{1}-\alpha_{2}}{2}\ \epsilon_{mpq}u_{q,pk}\right)n_{m},
\\
=&\,-\frac{\alpha_{1}+\alpha_{2}}{4}(\delta_{ip}\delta_{jq}-\delta_{iq}\delta_{jp})u_{q,pm}n_{m}\\
&+\frac{\alpha_{1}-\alpha_{2}}{4}\left(-\delta_{mi}\delta_{jp}\delta_{kq}+\delta_{mi}\delta_{qj}\delta_{kp}+\delta_{ip}\delta_{jm}\delta_{kq}-\delta_{ip}\delta_{qj}\delta_{km}-\delta_{kp}\delta_{qi}\delta_{mj}+\delta_{mk}\delta_{pj}\delta_{qi}\right)u_{q,pk}n_{m}\\
=&\,\frac{\alpha_{1}+\alpha_{2}}{4}(u_{i,j}-u_{j,i})_{,m}n_{m}+\frac{\alpha_{1}-\alpha_{2}}{4}\left(u_{i,jm}n_{m}-u_{j,im}n_{m}-u_{q,jq}n_{i}+u_{j,pp}n_{i}+u_{q,iq}n_{j}-u_{i,pp}n_{j}\right)\\
=&\,\frac{\alpha_{1}}{2}(u_{i,j}-u_{j,i})_{,m}n_{m}+\frac{\alpha_{1}-\alpha_{2}}{4}\left(u_{q,iq}n_{j}-u_{i,pp}n_{j}u_{j,pp}n_{i}-u_{q,jq}n_{i}\right).
\end{align*}
This relation, when compared to \eqref{eq:IntermSec} finally allows
to prove that
\begin{align*}
\widetilde{\mathfrak{m}}\cdot n=\frac{1}{2}\mathrm{anti}\left(\widetilde{m}\cdot n\right)= & \alpha_{1}\left[\nabla\left(\mathrm{skew}\,\nabla u\right)\right]\cdot n+\frac{\left(\alpha_{1}-\alpha_{2}\right)}{2}\:\mathrm{skew}\left[\nabla\left(\mathrm{Div}\, u\right)\otimes n-\mathrm{Div}\left(\text{\ensuremath{\nabla}}u\right)\otimes n\right].
\end{align*}
It is clear that, since also in this case $\widetilde{\mathfrak{m}}\cdot n=\frac{1}{2}\mathrm{anti}\left(\widetilde{m}\cdot n\right)$,
then \eqref{eq:IdentBulk-1}-\eqref{eq:Ident3-1-1} are straightforwardly
verified.

\subsection{Some alternative  calculations useful to rewrite  the governing equations
and boundary conditions in a form which is directly comparable to
Mindlin's one}\label{sec:ComparisonToMindlin}

\textcolor{black}{\small{}{}In this subsection we just report some
alternative calculations to obtain the same results as before, so
that they are not uncontournable to the understanding of the main
results of the paper. Indeed, the result of the first part of this
sections can also be re-obtained remarking that
\begin{align}
\nabla[(\anti[\widetilde{m}\cdot n])\cdot T] -\nabla[(\anti[(\id-n\otimes n)\cdot\widetilde{m}\cdot n])\cdot T]&=\nabla[(\anti[(n\otimes n)\cdot\widetilde{m}\cdot n])\cdot T]\notag\\
 =\nabla[(\anti[n\,\langle n,(\sym\widetilde{m})\cdot n\rangle])\cdot T]&=\nabla[\,\langle n,(\sym\widetilde{m})\cdot n\rangle\anti(n)\cdot T\,]
\end{align}
and
\begin{align}
\anti(n)\,(n\otimes n) & =(\anti(n)\, n)\otimes n=(n\times n)\otimes n=0,\notag\label{normalF3}\\
(n\otimes n)\,\anti(n) & =-n\otimes(\anti(n)\, n)=-n\otimes(n\times n)=0.
\end{align}
and also that
\begin{align}
\big(\nabla[\langle n, & (\sym\widetilde{m})\cdot n\rangle\anti(n)\cdot T\,]:T)_{i}=[\langle n,(\sym\widetilde{m})\cdot n\rangle\anti(n)\,(\id-n\otimes n)\,]_{ij,k}\, T_{jk}\notag\\
= & [\langle n,(\sym\widetilde{m})\cdot n\rangle\,\anti(n)]_{ij,k}\, T_{jk}-[\langle n,(\sym\widetilde{m})\cdot n\rangle\anti(n)\,(n\otimes n)\,]_{ij,k}\, T_{jk}\notag\\
= & [\langle n,(\sym\widetilde{m})\cdot n\rangle\,\anti(n)]_{ij,k}\, T_{jk}\\
= & [\langle n,(\sym\widetilde{m})\cdot n\rangle]_{,k}\,[\anti(n)]_{ij}\, T_{jk}+\langle n,(\sym\widetilde{m})\cdot n\rangle\,[\anti(n)]_{ij,k}\, T_{jk}\notag\\
= & [\langle n,(\sym\widetilde{m})\cdot n\rangle]_{,k}\,[\anti(n)]_{ij}\,\delta_{jk}-[\langle n,(\sym\widetilde{m})\cdot n\rangle]_{,k}\,[\anti(n)]_{ij}\, n_{j}n_{k}+\langle n,(\sym\widetilde{m})\cdot n\rangle\,[\anti(n)]_{ij,k}\, T_{jk}\notag\\
= & [\langle n,(\sym\widetilde{m})\cdot n\rangle]_{,k}\,[\anti(n)]_{ik}-[\langle n,(\sym\widetilde{m})\cdot n\rangle]_{,k}\,[\anti(n)]_{ij}\, n_{j}n_{k}+\langle n,(\sym\widetilde{m})\cdot n\rangle\,[\anti(n)]_{ij,k}\, T_{jk}\notag\\
= & -\{n\times\nabla[\langle n,(\sym\widetilde{m})\cdot n\rangle]\}_{i}-[\langle n,(\sym\widetilde{m})\cdot n\rangle]_{,k}\,[\anti(n)]_{ij}\,(n\otimes n)_{jk}+\langle n,(\sym\widetilde{m})\cdot n\rangle\,[\anti(n)]_{ij,k}\, T_{jk}\notag\\
= & -\{n\times\nabla[\langle n,(\sym\widetilde{m})\cdot n\rangle]\}_{i}-[\langle n,(\sym\widetilde{m})\cdot n\rangle]_{,k}\,[\anti(n)\,(n\otimes n)]_{ik}+\langle n,(\sym\widetilde{m})\cdot n\rangle\,[\anti(n)]_{ij,k}\, T_{jk}\notag\\
= & -\{n\times\nabla[\langle n,(\sym\widetilde{m})\cdot n\rangle]\}_{i}+\langle n,(\sym\widetilde{m})\cdot n\rangle\,[\anti(n)]_{ij,k}\, T_{jk}.\notag
\end{align}
}{\small \par}

\subsection{\label{sectionincomplete2}The missing steps in Mindlin and Tiersten's
classical approach}
In this section, we present once again the reasoning followed by Mindlin and Tiersten to obtain their set of bulk equations and boundary conditions trying to highlight the points in which their approach had to be further developed.

We start our analysis, remarking that the quantity $\langle\widetilde{m}\cdot n,(\id-n\otimes n)\cdot[\axl(\skw\nabla\delta u)]\rangle$
still does contain contributions performing work against $\delta u$
alone (even though there is a projection $\id-n\otimes n$ involved),
which can be assigned arbitrarily and are therefore somehow related
to independent variation $\delta u$. This case is not similar with
the Cosserat theory in which we assume a priori that displacement
$u$ and microrotation $\overline{A}\in\so(3)$ are independent kinematical
degrees of freedom. On the other hand, the indeterminate couple stress
model is not simply obtained as a constraint Cosserat model \cite{Sokolowski72},
i.e. assuming that $\overline{A}=\axl\skew\nabla u$. In the indeterminate
couple stress model the only independent kinematical degree of freedom
is $u$. We believe that the indeterminate couple stress model constructed
as a constraint Cosserat model represents only an approximation of
the indeterminate couple stress model, in the sense that the boundary
conditions are not correctly and completely considered.

Indeed, using the projectors  ($T=\id-n\otimes n$
and $Q=n\otimes n$) we obtain
\begin{align}
\langle[(\id-n\otimes n)\cdot\widetilde{m}  \cdot n],\curl\delta u\rangle&=\langle\anti[(\id-n\otimes n)\cdot\widetilde{m}\cdot n]\cdot\id,\nabla\delta u\rangle=\langle\anti[(\id-n\otimes n)\cdot\widetilde{m}\cdot n]\cdot(T+Q),\nabla\delta u\rangle\notag\\
 & =\langle\anti[(\id-n\otimes n)\cdot\widetilde{m}\cdot n]\cdot T,\nabla\delta u\rangle+\langle\anti[(\id-n\otimes n)\cdot\widetilde{m}\cdot n]\cdot Q,\nabla\delta u\rangle\notag\\
 & =\langle\anti[(\id-n\otimes n)\cdot\widetilde{m}\cdot n]\cdot T\cdot T,\nabla\delta u\rangle+\langle\anti[(\id-n\otimes n)\cdot\widetilde{m}\cdot n]\nabla\delta u\cdot Q\rangle \,da\\
 & =\langle T,T\cdot\anti[(\id-n\otimes n)\cdot\widetilde{m}\cdot n]^{T}\cdot\nabla\delta u\rangle+\langle\anti[(\id-n\otimes n)\cdot\widetilde{m}\cdot n],\nabla\delta u\cdot Q\rangle\notag\\
 & =-\langle T,T\cdot\anti[(\id-n\otimes n)\cdot\widetilde{m}\cdot n]\cdot\nabla\delta u\rangle+\langle\anti[(\id-n\otimes n)\cdot\widetilde{m}\cdot n],\nabla\delta u\cdot Q\rangle.\notag
\end{align}
On the other hand we have
\begin{align}
\langle\anti[(\id-n\otimes n)\cdot\widetilde{m}\cdot n],\nabla\delta u\cdot Q\rangle= & \,\langle\anti[(\id-n\otimes n)\cdot\widetilde{m}\cdot n],\nabla\delta u\cdot(n\otimes n)\rangle\notag\\=&\langle\anti[(\id-n\otimes n)\cdot\widetilde{m}\cdot n]\cdot n,\nabla\delta u\cdot n\rangle\notag\\
= & \,\langle(\id-n\otimes n)\cdot\anti[(\id-n\otimes n)\cdot\widetilde{m}\cdot n]\cdot n,\nabla\delta u\cdot n\rangle\\
 & +\langle(n\otimes n)\cdot\anti[(\id-n\otimes n)\cdot\widetilde{m}\cdot n]\cdot n,\nabla\delta u\cdot n\rangle\notag\\
= &\, \langle(\id-n\otimes n)\cdot\anti[(\id-n\otimes n)\cdot\widetilde{m}\cdot n]\cdot n,\nabla\delta u\cdot n\rangle\notag\\
 & -\langle n\otimes[\anti[(\id-n\otimes n)\cdot\widetilde{m}\cdot n]\cdot n]\cdot n,\nabla\delta u\cdot n\rangle\notag\notag\\
= &\, \langle(\id-n\otimes n)\cdot\anti[(\id-n\otimes n)\cdot\widetilde{m}\cdot n]\cdot n,\nabla\delta u\cdot n\rangle\notag\\
 & -\langle n\langle[\anti(\widetilde{m}\cdot n)\cdot n],n\rangle,\nabla\delta u\cdot n\rangle\notag\notag\\
= & \,\langle(\id-n\otimes n)\cdot\anti[(\id-n\otimes n)\cdot\widetilde{m}\cdot n]\cdot n,\nabla\delta u\cdot n\rangle\notag\\
 & -\langle\anti[(\id-n\otimes n)\cdot\widetilde{m}\cdot n]\cdot n,n\rangle\,\langle n,\nabla\delta u\cdot n\rangle\notag
\end{align}
and, moreover,
\begin{align}
\langle\underbrace{\anti[(\id-n\otimes n)\cdot\widetilde{m}\cdot n]}_{\in\so(3)}\cdot n,n\rangle=0.
\end{align}
Hence
\begin{align}
\langle\anti[(\id-n\otimes n)\cdot\widetilde{m}\cdot n],\nabla\delta u\cdot Q\rangle & =\langle(\id-n\otimes n)\cdot\anti[(\id-n\otimes n)\cdot\widetilde{m}\cdot n]\cdot n,\nabla\delta u\cdot n\rangle.
\end{align}
Thus, we deduce
\begin{align}
\langle\widetilde{m}\cdot n,\underbrace{(\id-n\otimes n)\cdot\curl\delta u}_{\begin{array}{c}
\text{\tiny{not completely independent}}\vspace{-1.5mm}\\
\text{\tiny{second order variation}}
\end{array}}\rangle= & -\langle\, T,T\cdot(\anti[(\id-n\otimes n)\cdot\widetilde{m}\cdot n])\cdot\nabla\delta u\rangle\\
 & +\langle(\id-n\otimes n)\cdot\anti[(\id-n\otimes n)\cdot\widetilde{m}\cdot n]\cdot n,\hspace{-1cm}\underbrace{\nabla\delta u\cdot n}_{\begin{array}{c}
\text{\tiny{completely} $\delta u$ } \text{\tiny{-independent second}}\vspace{-1mm}\\
\text{\tiny{order normal variation of gradient}}
\end{array}}\hspace{-1cm}\rangle.\notag
\end{align}
Since
\begin{align}
\{\nabla[ & \underbrace{T\cdot(\anti[(\id-n\otimes n)\cdot\widetilde{m}\cdot n])\cdot\delta u}_{\in\mathbb{R}^{3}}]\}_{ik}=\{T\cdot(\anti[(\id-n\otimes n)\cdot\widetilde{m}\cdot n])\cdot\delta u\}_{i,k}\notag\\
 & =\{\big(T\cdot(\anti[(\id-n\otimes n)\cdot\widetilde{m}\cdot n])\big)_{ij}(\delta u)_{j}\}_{,k}\notag\\
 & =\big(T\cdot(\anti[(\id-n\otimes n)\cdot\widetilde{m}\cdot n])\big)_{ij,k}(\delta u)_{j}+\big(T\cdot(\anti[(\id-n\otimes n)\cdot\widetilde{m}\cdot n])\big)_{ij}(\delta u)_{j,k}\\
 & =\big(T\cdot(\anti[(\id-n\otimes n)\cdot\widetilde{m}\cdot n])\big)_{ij,k}(\delta u)_{j}+\big(T\cdot(\anti[(\id-n\otimes n)\cdot\widetilde{m}\cdot n])\big)_{ij}(\nabla\delta u)_{jk}\notag\\
 & =\big(T\cdot(\anti[(\id-n\otimes n)\cdot\widetilde{m}\cdot n])\big)_{ij,k}(\delta u)_{j}+\{T\cdot(\anti[(\id-n\otimes n)\cdot\widetilde{m}\cdot n])\cdot\nabla\delta u\}_{ik}\notag
\end{align}
using the surface divergence theorem%
\footnote{The surface divergence theorem in this context reads $\int_{\partial\Omega}\langle\nabla(T\, v),T\rangle \,da=\int_{\partial\Gamma}\jump{\langle v,\nu\rangle}\,ds,$ see (\ref{eq:Surface_Div_Th}).%
} we obtain
\begin{align}
\,\int_{\partial\Omega}\langle\widetilde{m}\cdot n,(\id-n\otimes n)\cdot\curl\delta u\rangle \,da= & -\int_{\partial\Omega}\langle\, T,\nabla[T\cdot(\anti[(\id-n\otimes n)\cdot\widetilde{m}\cdot n])\cdot\delta u]\rangle \,da\notag\\
 & +\int_{\partial\Omega}\, T_{ik}\,\big(T\cdot(\anti[(\id-n\otimes n)\cdot\widetilde{m}\cdot n])\big)_{ij,k}(\delta u)_{j}\,da\notag\\
 & +\int_{\partial\Omega}\langle(\id-n\otimes n)\cdot\anti[(\id-n\otimes n)\cdot\widetilde{m}\cdot n]\cdot n,\nabla\delta u\cdot n\rangle \,da\notag\\
\overset{\tiny{\begin{array}{cc}
\text{surface}\\
\text{divergence}
\end{array}}}{=} & -\int_{\partial{\Gamma}}\jump{\langle\anti[(\id-n\otimes n)\cdot\widetilde{m}\cdot n])\delta u\,,\nu\rangle}\,ds\\
 & +\int_{\partial\Omega}\big(T\cdot(\anti[(\id-n\otimes n)\cdot\widetilde{m}\cdot n])\big)_{ij,k}\, T_{ik}\,(\delta u)_{j}\,da\notag\\
 & +\int_{\partial\Omega}\langle(\id-n\otimes n)\cdot\anti[(\id-n\otimes n)\cdot\widetilde{m}\cdot n]\cdot n,\nabla\delta u\cdot n\rangle \,da\notag\\
= & \int_{\partial\Omega}\langle(\id-n\otimes n)\cdot\anti[(\id-n\otimes n)\cdot\widetilde{m}\cdot n]\cdot n,\nabla\delta u\cdot n\rangle \,da\notag\\
 & +\int_{\partial\Omega}\big(T\cdot(\anti[(\id-n\otimes n)\cdot\widetilde{m}\cdot n])\big)_{ij,k}\, T_{ik}\,(\delta u)_{j}\,da\notag\\
 & +\int_{\partial{\Gamma}}\langle\jump{\anti[(\id-n\otimes n)\cdot\widetilde{m}\cdot n]\cdot\nu},\delta u\rangle \,ds.\notag
\end{align}
On the other hand, we have
\begin{align}
\big(T\cdot(\anti[(\id-n\otimes n)\cdot\widetilde{m}\cdot n])\big)_{ij} & =T_{il}(\anti[(\id-n\otimes n)\cdot\widetilde{m}\cdot n])_{lj}\notag\\
 & =-T_{il}(\anti[(\id-n\otimes n)\cdot\widetilde{m}\cdot n])_{jl}=-T_{li}(\anti[(\id-n\otimes n)\cdot\widetilde{m}\cdot n])_{jl}\\
 & =-(\anti[(\id-n\otimes n)\cdot\widetilde{m}\cdot n])_{jl}T_{li}=-\{(\anti[(\id-n\otimes n)\cdot\widetilde{m}\cdot n])\cdot T\}_{ji}.\notag
\end{align}
Hence, we deduce
\begin{align}
\big(T\cdot(\anti[(\id-n\otimes n)\cdot\widetilde{m}\cdot n])\big)_{ij,k}\, T_{ik} & =-\{(\anti[(\id-n\otimes n)\cdot\widetilde{m}\cdot n])\cdot T\}_{ji,k}\, T_{ik}\\
 & =-\big(\nabla[(\anti[(\id-n\otimes n)\cdot\widetilde{m}\cdot n])\cdot T]:T\big)_{j}.\notag
\end{align}

In view of the above computations, we deduce
\begin{align}
- & \int_{\partial\Omega}\langle(\sigma-\widetilde{\tau})\cdot n,\delta u\rangle\, da-\int_{\partial\Omega}\langle\widetilde{m}\cdot n,\axl(\skw\nabla\delta u)\rangle\, da\notag\label{axldeltan1}\\
= & -\int_{\partial\Omega}\langle\underbrace{(\sigma-\widetilde{\tau})\cdot n-\frac{1}{2}n\times\nabla[\langle n,(\sym\widetilde{m})\cdot n\rangle]}_{\text{classical first boundary term of Mindlin, Gao, Yang, etc.}},\delta u\rangle\, da\notag-\int_{\partial\Omega}\langle\underbrace{\widetilde{m}\cdot n,(\id-n\otimes n)\cdot[\axl(\skw\nabla\delta u)]\rangle}_{\begin{array}{c}
\text{\tiny{classical second boundary term of}}\vspace{-1.5mm}\\
\text{\tiny{Mindlin, Gao, Yang, etc.}}
\end{array}}\, da\notag\\
 & \qquad\qquad\text{ ...must be split further to obtain strongly independent  variations...}\\
= & -\int_{\partial\Omega}\langle(\sigma-\frac{1}{2}\anti{\rm Div}[\widetilde{m}])\cdot n-\frac{1}{2}n\times\nabla[\langle n,(\sym\widetilde{m})\cdot n\rangle]-\frac{1}{2}\nabla[(\anti[(\id-n\otimes n)\cdot\widetilde{m}\cdot n])\cdot T]:T,\delta u\rangle\, da\notag\\
 & -\frac{1}{2}\int_{\partial\Omega}\underbrace{\langle(\id-n\otimes n)\cdot\anti[(\id-n\otimes n)\cdot\widetilde{m}\cdot n]\cdot n,\nabla\delta u\cdot n\rangle}_{\begin{array}{c}
\text{\tiny{completely} $\delta u$} \text{\tiny{-independent second}}\vspace{-1mm}\\
\text{\tiny{order normal variation of gradient}}
\end{array}}\,da\notag\\
 & -\frac{1}{2}\int_{\partial{\Gamma}}\langle\jump{\anti[(\id-n\otimes n)\cdot\widetilde{m}\cdot n]\cdot\nu},\delta u\rangle \,ds
\end{align}
for all variations $\delta u\in C^{\infty}(\Omega)$. Note that $\partial(\partial\Omega\setminus\overline{\Gamma})=\partial\Gamma$.
Hence, there are indeed two terms
\[
(\sigma-\frac{1}{2}\anti{\rm Div}[\widetilde{m}])\cdot n-\frac{1}{2}n\times\nabla[\langle n,(\sym\widetilde{m})\cdot n\rangle]-\frac{1}{2}\nabla[(\anti[(\id-n\otimes n)\cdot\widetilde{m}\cdot n])\cdot T]:T
\]
and
\[
\jump{\anti[(\id-n\otimes n)\cdot\widetilde{m}\cdot n]\cdot\nu}
\]
which perform work against $\delta u$, while only the term
\[
(\id-n\otimes n)\cdot\anti[(\id-n\otimes n)\cdot\widetilde{m}\cdot n]\cdot n
\]
is related solely to the independent second order normal variation
of the gradient $\nabla\delta u\cdot n$. This split of the boundary
condition is not the one as obtained e.g. by Gao and Park \cite{park2008variational}
among others and seems to be entirely new in the context of the couple
stress model.

\subsection{The proof of Lemma \ref{lemmadesprecurlaxl}}\label{appendixlemmadesprecurlaxl}

In other words, this lemma implies that, for any tangential vector field $\tau_i$ on $\Gamma\subset \partial \Omega$
\begin{align}
\left.\begin{array}{rcl}
u\big|_{\Gamma} & = &  u^{\rm ext},\vspace{1.5mm}\\
\langle{\rm curl}\, u,\tau_{i}\rangle\big|_{\Gamma} & = & \text{known}
\end{array}\right\} \quad\Leftrightarrow\quad\left\{ \begin{array}{rcl}
u\big|_{\Gamma} & = &  u^{\rm ext},\vspace{1.5mm}\\
\langle\nabla u\cdot n,\tau_{i}\rangle\big|_{\Gamma} & = & \text{known}\,,
\end{array}\right..
\end{align}

Since $u\in C^{\infty}(\Omega)$, on the one hand we obtain
\begin{align}
\left.\begin{array}{rcl}
u\big|_{\Gamma} & = &  u^{\rm ext}\quad\overset{\text{(Stokes)}}{\Rightarrow}\quad\text{also}\quad\langle{\rm curl}\, u,n\rangle\big|_{\Gamma}=\text{known}),\vspace{1.5mm}\\
\langle{\rm curl}\, u,\tau_{i}\rangle\big|_{\Gamma} & = & \text{known}
\end{array}\right\} \quad & \Leftrightarrow\quad\left\{ \begin{array}{rcl}
u\big|_{\Gamma} & = &  u^{\rm ext},\vspace{1.5mm}\\
{\rm curl}\, u\,\big|_{\Gamma} & = & \text{known}
\end{array}\right.\\
\quad & \Leftrightarrow\quad\left\{ \begin{array}{rcl}
u\big|_{\Gamma} & = &  u^{\rm ext},\vspace{1.5mm}\\
{\rm skew}\,\nabla u\,\big|_{\Gamma} & = & \text{known}.
\end{array}\right.\notag
\end{align}
On the other hand we deduce
\begin{align}
& \left.\begin{array}{rcl}
u\big|_{\Gamma} & = &  u^{\rm ext}\quad({\Rightarrow}\quad\text{also}\quad\nabla\, u.\,\tau_{i}\big|_{\Gamma}=\text{known})\quad{\Leftrightarrow}\quad\left\{ \begin{array}{rcl}
u\big|_{\Gamma} & = &  u^{\rm ext},\vspace{1.5mm}\\
\langle\nabla\, u.\,\tau_{i},\tau_{i}\rangle\big|_{\Gamma} & = & \text{known}\vspace{1.5mm}\\
\langle\tau_{i},(\nabla\, u)^{T}.\,\tau_{i}\rangle\big|_{\Gamma} & = & \text{known},\vspace{1.5mm}\\
\langle\nabla\, u.\,\tau_{i},n\rangle\big|_{\Gamma} & = & \text{known},\vspace{1.5mm}\\
\langle\tau_{i},(\nabla\, u)^{T}\cdot n\rangle\big|_{\Gamma} & = & \text{known}
\end{array}\right.\vspace{1.5mm}\\
\langle\nabla u\cdot n,\tau_{i}\rangle\big|_{\Gamma} & = & \langle b_{0},\tau_{i}\rangle\quad{\Leftrightarrow}\quad\langle n,(\nabla u)^{T}\cdot\tau_{i}\rangle\big|_{\Gamma}=\langle b_{0},\tau_{i}\rangle
\end{array}\right\} \vspace{2mm}\\
& \Leftrightarrow\quad\left\{ \begin{array}{rcl}
u\big|_{\Gamma} & = &  u^{\rm ext},\vspace{1.5mm}\\
(\nabla u)^{T}\cdot\tau_{i}\big|_{\Gamma} & = & \text{known},\vspace{1.5mm}\\
(\nabla u)\cdot\tau_{i}\big|_{\Gamma} & = & \text{known}
\end{array}\right.\Leftrightarrow\qquad\left\{ \begin{array}{rcl}
u\big|_{\Gamma} & = &  u^{\rm ext},\vspace{1.5mm}\\
(\skew\nabla u)\cdot\tau_{i}\big|_{\Gamma} & = & \text{known},\vspace{1.5mm}\\
(\sym\nabla u)\cdot\tau_{i}\big|_{\Gamma} & = & \text{known}.
\end{array}\right.\notag
\end{align}
Until now, we have obtained
\begin{align}
& \left.\begin{array}{rcl}
u\big|_{\Gamma} & = &  u^{\rm ext},\vspace{1.5mm}\\
(\id-n\otimes n)\cdot{\rm curl}\, u\big|_{\Gamma} & = & \text{known}
\end{array}\right\} \qquad\Leftrightarrow\qquad\left\{ \begin{array}{rcl}
u\big|_{\Gamma} & = &  u^{\rm ext},\vspace{1.5mm}\\
\nabla u\cdot\tau_{i}\big|_{\Gamma} & = & \text{known},\vspace{1.5mm}\\
{\rm skew}\,\nabla u\,\big|_{\Gamma} & = & \text{known}
\end{array}\right.\notag\label{curleqn1}\\
& \Leftrightarrow\qquad\left\{ \begin{array}{rcl}
u\big|_{\Gamma} & = &  u^{\rm ext},\vspace{1.5mm}\\
\nabla u\cdot\tau_{i}\big|_{\Gamma} & = & \text{known},\vspace{1.5mm}\\
({\rm skew}\,\nabla u)\cdot\tau_{i}\big|_{\Gamma} & = & \text{known},\vspace{1.5mm}\\
({\rm skew}\,\nabla u)\cdot n\big|_{\Gamma} & = & \text{known}
\end{array}\right.\qquad\Leftrightarrow\qquad\left\{ \begin{array}{rcl}
u\big|_{\Gamma} & = &  u^{\rm ext},\vspace{1.5mm}\\
\nabla u\cdot\tau_{i}\big|_{\Gamma} & = & \text{known},\vspace{1.5mm}\\
({\rm sym}\,\nabla u)\cdot\tau_{i}\big|_{\Gamma} & = & \text{known},\vspace{1.5mm}\\
({\rm skew}\,\nabla u)\cdot\tau_{i}\big|_{\Gamma} & = & \text{known},\vspace{1.5mm}\\
({\rm skew}\,\nabla u)\cdot n\big|_{\Gamma} & = & \text{known},
\end{array}\right.
\end{align}
while
\begin{align}
\left.\begin{array}{rcl}
u\big|_{\Gamma} & = &  u^{\rm ext},\vspace{1.5mm}\\
(\id-n\otimes n)\cdot\nabla u\cdot n\big|_{\Gamma} & = & \text{known}
\end{array}\right\} \qquad\Leftrightarrow\qquad\left\{ \begin{array}{rcl}
u\big|_{\partial\Omega} & = &  u^{\rm ext},\vspace{1.5mm}\\
\nabla u\cdot\tau_{i}\big|_{\Gamma} & = & \text{known},\vspace{1.5mm}\\
\dd(\skew\nabla u)\cdot\tau_{i}\big|_{\Gamma} & = & \text{known},\vspace{1.5mm}\\
\dd(\sym\nabla u)\cdot\tau_{i}\big|_{\Gamma} & = & \text{known}.
\end{array}\right.\label{curleqn2}
\end{align}
The only one boundary condition which appears in the last term of
\eqref{curleqn1} but does not appear in the last term of \eqref{curleqn2}
is $({\rm skew}\,\nabla u)\cdot n\big|_{\Gamma}=\text{known}$. However,
$\dd(\skew\nabla u)\cdot\tau_{i}\big|_{\Gamma}=\text{known}$ implies
also $({\rm skew}\,\nabla u)\cdot n\big|_{\Gamma}=\text{known}$, since
\begin{align}
\left.\begin{array}{rcl}
\langle({\rm skew}\,\nabla u)\cdot n,n\rangle & = & 0\qquad\text{always},\vspace{1.5mm}\\
\langle({\rm skew}\,\nabla u)\cdot n,\tau_{i}\rangle & = & -\langle n,({\rm skew}\,\nabla u)\cdot\tau_{i}\rangle=\text{known}.
\end{array}\right\} \qquad\Leftrightarrow\qquad({\rm skew}\,\nabla u)\cdot n=\text{known}.
\end{align}
Therefore, by eliminating the redundant information we obtain
\begin{align}
& \left.\begin{array}{rcl}
u\big|_{\Gamma} & = &  u^{\rm ext},\vspace{1.5mm}\\
(\id-n\otimes n)\cdot{\rm curl}\, u\big|_{\Gamma} & = & \text{known}
\end{array}\right\} \qquad\Leftrightarrow\qquad\left\{ \begin{array}{rcl}
u\big|_{\Gamma} & = &  u^{\rm ext},\vspace{1.5mm}\\
({\rm skew}\,\nabla u)\cdot\tau_{i}\big|_{\Gamma} & = & \text{known},
\end{array}\right.\label{curleqn11}
\end{align}
and
\begin{align}
\left.\begin{array}{rcl}
u\big|_{\Gamma} & = &  u^{\rm ext},\vspace{1.5mm}\\
(\id-n\otimes n)\cdot\nabla u\cdot n\big|_{\Gamma} & = & \text{known}
\end{array}\right\} \qquad\Leftrightarrow\qquad\left\{ \begin{array}{rcl}
u\big|_{\Gamma} & = &  u^{\rm ext},\vspace{1.5mm}\\
\dd(\skew\nabla u)\cdot\tau_{i}\big|_{\Gamma} & = & \text{known},
\end{array}\right.\label{curleqn21}
\end{align}
and the proof is complete. 

\subsection{The proof of some identities to ulteriorly simplify the traction boundary conditions}\label{identities_tractions}

In this appendix we prove some identities which must be used in order to show the equivalence between some traction boundary conditions. 
We start by checking that the following identity holds
\begin{align}
\frac{1}{2}(\id-n\otimes n)\cdot\anti[(\id-n\otimes n)\cdot\widetilde{m}\cdot n]\cdot n=\frac{1}{2}(\id-n\otimes n)\cdot\anti(\widetilde{m}\cdot n)\cdot n. \label{propr2}
\end{align}
In other words, we have to check if
\begin{align*}
(\id-n\otimes n)\cdot\anti[(n\otimes n)\cdot\widetilde{m}\cdot n]\cdot n=0.
\end{align*}
But this fact follows immediately from
\begin{align*}
\anti[(n\otimes n)\cdot\widetilde{m}\cdot n]\cdot n=\anti[n\langle n,\widetilde{m}\cdot n\rangle]\cdot n=\langle n,\widetilde{m}\cdot n\rangle\,\anti[n]\cdot n=\langle n,\widetilde{m}\cdot n\rangle\, n\times n=\langle n,\widetilde{m}\cdot n\rangle\,0=0.
\end{align*}

The final step in order to be able to complete our comparison to Mindlin and Tiersten's boundary conditions, is to prove that
\begin{align}
\jump{\anti[\widetilde{m}\cdot n]\cdot\nu}= & \jump{\anti[(\id-n\otimes n)\cdot\widetilde{m}\cdot n]\cdot\nu} \label{propr3}
\end{align}
holds also true pointwise. To this aim, let us first remark that
\begin{align}
[\anti[(n\otimes n)\cdot\widetilde{m}\cdot n]]^{\pm}\cdot\nu^{\pm} & =[\anti[n\langle n,\widetilde{m}\cdot n\rangle]]^{\pm}\cdot\nu^{\pm}=\langle n,[\widetilde{m}\cdot n]^{\pm}\rangle[\anti(n)]^{\pm}\cdot\nu^{\pm}\notag\label{np}\\
 & =\langle n,[\widetilde{m}\cdot n]^{\pm}\rangle(n\times\nu^{\pm})=\langle n,[\widetilde{m}\cdot n]^{\pm}\rangle\,\tau^{\pm},
\end{align}
where $\tau^{\pm}=n\times\nu^{\pm}$ is tangent to curves $\partial\Gamma$,
according to the orientation on ${\partial\Omega\setminus\overline{\Gamma}}$
and $\Gamma$, respectively.

Hence, we deduce
\begin{align}
[\anti[(n\otimes n)\cdot\widetilde{m}\cdot n]]^{+}\cdot\nu^{+}+[\anti[(n\otimes n)\cdot\widetilde{m}\cdot n]]^{-}\cdot\nu^{-}=\langle n,[\widetilde{m}\cdot n]^{+}\rangle\,\tau^{+}+\langle n,[\widetilde{m}\cdot n]^{-}\rangle\,\tau^{-}.
\end{align}

Therefore, we have
\begin{align}
\jump{[\anti[\widetilde{m}\cdot n]\cdot\nu}= & \left([\anti[\widetilde{m}\cdot n]]^{+}-[\anti[\widetilde{m}\cdot n]]^{-}\right)\cdot\nu\notag\\
= & \left([\anti[(\id-n\otimes n)\cdot\widetilde{m}\cdot n]]^{+}-[\anti[(\id-n\otimes n)\cdot\widetilde{m}\cdot n]]^{-}\right)\cdot\nu\notag\\
 & -\big(\langle n,[\widetilde{m}\cdot n]^{+}\rangle\,\tau^{+}+\langle n,[\widetilde{m}\cdot n]^{-}\rangle\,\tau^{-}\big)\\
= & \,\jump{\anti[(\id-n\otimes n)\widetilde{m}\cdot n]\cdot\nu}-\big(\langle n,[\widetilde{m}\cdot n]^{+}\rangle\,\tau^{+}+\langle n,[\widetilde{m}\cdot n]^{-}\rangle\,\tau^{-}\big).\notag
\end{align}
Using Stokes theorem and the divergence theorem, we obtain
\begin{align}
\int_{\partial\Gamma} & \langle\langle n,[\widetilde{m}\cdot n]^{+}\rangle\,\tau^{+},\delta u\rangle\, ds+\int_{\partial\Gamma}\langle\langle n,[\widetilde{m}\cdot n]^{-}\rangle\,\tau^{-},\delta u\rangle\, ds\notag\\
 & =\int_{\partial\Gamma}\langle\tau^{+},\langle n,[\widetilde{m}\cdot n]^{+}\rangle\,\delta u\rangle\, ds+\int_{\partial\Gamma}\langle\tau^{-},\langle n,[\widetilde{m}\cdot n]^{-}\rangle\,\delta u\rangle\, ds\notag\\
 & =\int_{\partial\Omega\setminus\Gamma}\langle n,\curl(\langle n,\widetilde{m}\cdot n\rangle\,\delta u)\rangle\, da+\int_{\Gamma}\langle n,\curl(\langle n,\widetilde{m}\cdot n\rangle\,\delta u)\rangle\, da\\
 & =\int_{\partial\Omega}\langle n,\curl(\langle n,\widetilde{m}\cdot n\rangle\,\delta u)\rangle\, da=\int_{\partial\Omega}\dv[\curl(\langle n,\widetilde{m}\cdot n\rangle\,\delta u)]\, dv=0.\notag
\end{align}
Therefore, we have
\begin{align}
\int_{\partial\Gamma}\langle\jump{\anti[\widetilde{m}\cdot n]\cdot\nu}-\jump{\anti[(\id-n\otimes n)\cdot\widetilde{m}\cdot n]\cdot\nu},\delta u\rangle \,ds=0,
\end{align}
for all $\delta u\in C^{\infty}(\overline{\Omega})$. We choose
\begin{align}
\delta u=\jump{\anti[\widetilde{m}\cdot n]\cdot\nu}-\jump{\anti[(\id-n\otimes n)\cdot\widetilde{m}\cdot n]\cdot\nu}
\end{align}
and we obtain
\begin{align}
\int_{\partial\Gamma}\|\jump{\anti[\widetilde{m}\cdot n]\cdot\nu}-\jump{\anti[(\id-n\otimes n)\cdot\widetilde{m}\cdot n]\cdot\nu}\|^{2}\,ds=0.
\end{align}
Let us consider an arbitrary parametrization $\gamma:[a,b]\rightarrow\partial\Gamma$
of the curve $\partial\Gamma$. We obtain
\begin{align}
\int_{a}^{b}\Big( & \|\jump{\anti[\widetilde{m}\cdot n]\cdot\nu}-\jump{\anti[(\id-n\otimes n)\cdot\widetilde{m}\cdot n]\cdot\nu}\|(\gamma(s))\,|\gamma^{\prime}(s)|\Big)^{2}\,ds=0,\notag
\end{align}
which implies
\begin{align}
\|\jump{\anti[\widetilde{m}\cdot n]\cdot\nu}-\jump{\anti[(\id-n\otimes n)\cdot\widetilde{m}\cdot n]\cdot\nu}\|(\gamma(s))=0,\qquad \forall s\in[a,b].\notag
\end{align}
Hence
\begin{align}
\|\jump{\anti[\widetilde{m}\cdot n]\cdot\nu}-\jump{\anti[(\id-n\otimes n)\cdot\widetilde{m}\cdot n]\cdot\nu}\|=0\qquad\text{on}\quad\partial\Gamma\notag
\end{align}
which is equivalent to
\begin{align}
\jump{\anti[\widetilde{m}\cdot n]\cdot\nu}=\jump{\anti[(\id-n\otimes n)\cdot\widetilde{m}\cdot n]\cdot\nu}\quad\text{on}\quad\partial\Gamma.
\end{align}

\subsection{The proof of Proposition  \ref{prop:ComparisonToMindlin}}\label{appendixprop:ComparisonToMindlin}

First, we prove the following lemma:
\begin{lemma}\label{sub:vanishing} Let us consider a smooth level surface $\Sigma=\{(x_{1},x_{2},x_{3})\in \mathbb{R}^3\, |\, F(x_{1},x_{2},x_{3})=0\}$,\break  $F:\omega\subset \mathbb{R}^3\rightarrow\mathbb{R}^3$ of class $C^2$, then it holds true:
	\begin{align}
	\nabla[\anti(n)]:T=0.
	\end{align}
\end{lemma}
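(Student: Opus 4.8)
The plan is to exploit the fact that, since $\Sigma$ is a level surface, the unit normal can be written locally as $n=\nabla F/\phi$ with $\phi:=|\nabla F|=(F_{,m}F_{,m})^{1/2}>0$; this extends $n$ to a $C^1$ field on a neighborhood of $\Sigma$, which is all that is needed to form $\nabla[\anti(n)]$. Writing the contraction in components, $(\nabla[\anti(n)]:T)_a=[\anti(n)]_{ab,c}\,T_{bc}=-\epsilon_{abl}\,n_{l,c}\,T_{bc}$, and inserting $T_{bc}=\delta_{bc}-n_bn_c$, the statement reduces to showing that the vector with components $-\epsilon_{abl}n_{l,b}+\epsilon_{abl}n_{l,c}n_bn_c$ vanishes. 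Since $\anti(n)$ is skew-symmetric, the precise bookkeeping of which slot of $\anti(n)$ is held free is immaterial for a vanishing statement, so I need not worry about the orientation of the $:T$ convention.

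First I would decompose the gradient of $n$. Differentiating $n_l=F_{,l}/\phi$ gives $n_{l,c}=F_{,lc}/\phi-n_l\,\phi_{,c}/\phi=:S_{lc}-n_l\psi_c$, where $S_{lc}:=F_{,lc}/\phi$ is symmetric (mixed partials of the $C^2$ function $F$ commute) and $\psi_c:=\phi_{,c}/\phi$. The crucial algebraic identity is $\psi=Sn$, i.e. $\psi_c=S_{cm}n_m$: differentiating $\phi^2=F_{,m}F_{,m}$ yields $\phi\,\phi_{,c}=F_{,m}F_{,mc}$, and substituting $F_{,m}=\phi\,n_m$ gives $\psi_c=\phi_{,c}/\phi=n_mF_{,mc}/\phi=n_mS_{mc}=(Sn)_c$.

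With this decomposition the two surviving terms are computed separately. For the first term, $-\epsilon_{abl}n_{l,b}=-\epsilon_{abl}S_{lb}+\epsilon_{abl}n_l\psi_b=\epsilon_{abl}n_l\psi_b=-(n\times\psi)_a$, because $\epsilon_{abl}S_{lb}=0$ (antisymmetric contracted with symmetric); one may also recognize this as $-(\curl n)_a$. For the second term, using $n_{l,c}n_c=(Sn)_l-n_l(\psi\cdot n)=\psi_l-n_l(\psi\cdot n)$ (again by $\psi=Sn$) together with $\epsilon_{abl}n_bn_l=0$, one gets $\epsilon_{abl}n_{l,c}n_bn_c=\epsilon_{abl}n_b\psi_l=(n\times\psi)_a$. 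Hence the two contributions are exact negatives and their sum is zero. Equivalently, since $n\parallel\nabla F$, a direct computation $\curl n=\curl(\nabla F/\phi)=\nabla(1/\phi)\times\nabla F=n\times\psi$ identifies the first term as $-(\curl n)_a$ and the second as $(\curl n)_a$, giving the cancellation conceptually.

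The only genuinely nontrivial ingredient is the identity $\psi=Sn$: it is precisely this relation, expressing that the gradient of the normalizing factor equals the Hessian of $F$ applied to $n$, that forces the would-be leftover terms to cancel. Everything else is routine, resting on the two elementary facts that $\epsilon$ contracted with a symmetric tensor vanishes and that $\epsilon_{abl}n_bn_l=0$. I would therefore organize the write-up around establishing $\psi=Sn$ first, and then present the two short term-by-term computations, so that the cancellation is transparent.
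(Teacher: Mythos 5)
Your proof is correct and follows essentially the same route as the paper's: both represent $n=\nabla F/\|\nabla F\|$, expand $\nabla[\anti(n)]:T$ in Levi-Civita components, and obtain the cancellation from the symmetry of the Hessian $F_{,lk}$ together with $\epsilon_{abl}n_bn_l=0$. The only difference is packaging — you isolate the identity $\psi=Sn$ and exhibit the two contributions as $\mp (n\times\psi)$ (equivalently $\mp\curl n$), whereas the paper carries the raw $F$-derivatives through to the final identity $(\epsilon_{ilj}+\epsilon_{ijl})F_{,lk}F_{,j}F_{,k}=0$; your version is slightly more transparent but mathematically equivalent.
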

\begin{proof}
	Let us first remark that $\nabla[\anti(n)]:T$ is well defined, since only the tangential derivative of $\anti(n)$ are involved. The normal vector to the surface $\Sigma$
	is given by
	\begin{align}
	n=\frac{\nabla F}{\|\nabla F\|},\qquad(n_{1},n_{2},n_{3})=\frac{1}{\sqrt{F_{,l}F_{,l}}}\,(F_{,1},F_{,2},F_{,3}).
	\end{align}
	Let us remark that
	\begin{align}
	\big[\nabla[\anti(n)]:T\big]_{i} & =[\anti(n)]_{ij,k}\, T_{jk}=\left[\anti\left(\frac{\nabla F}{\|\nabla F\|}\right)\right]_{ij,k}\, T_{jk}=[\|\nabla F\|^{-1}\anti(\nabla F)]_{ij,k}\, T_{jk}\label{normalF1}\\
	& =[\|\nabla F\|^{-1}]_{,k}[\anti(\nabla F)]_{ij}\, T_{jk}+\|\nabla F\|^{-1}\,[\anti(\nabla F)]_{ij,k}\, T_{jk}.\notag
	\end{align}
	We compute
	\begin{align}
	[\|\nabla F\|^{-1}]_{,k} & =[(F_{,l}F_{,l})^{-1/2}]_{,k}=-\frac{1}{2}\,[(F_{,l}F_{,l})^{-3/2}]\,2\,(F_{,l}\, F_{,lk})=-\|\nabla F\|^{-3}\, F_{,l}\, F_{,lk}\,,\label{normalF2}\\
	{\anti(n)}_{ij,k} & =-[\epsilon_{ijl}F_{,l}]_{,k}=-\epsilon_{ijl}F_{,lk},\notag\\
	T_{jk} & =\delta_{jk}-n_{j}n_{k}=\delta_{jk}-\|\nabla F\|^{-2}F_{,j}F_{,k}\,.\notag
	\end{align}
	Moreover, using \eqref{normalF3}, we have
	\begin{align}
	[\anti(\nabla F)]_{ij}\, T_{jk} & =[\anti(\nabla F)\cdot T]_{ik}=\|\nabla F\|\,[\anti(n)\cdot T]_{ik}\notag\label{normalF4}\\
	& =\|\nabla F\|\,[\anti(n)\cdot(\id-n\otimes n)]_{ik}=\|\nabla F\|\,[\anti(n)]_{ik}=-\|\nabla F\|\,\epsilon_{ikl}n_{l}=-\epsilon_{ikl}F_{,l}.
	\end{align}
	Using \eqref{normalF2}--\eqref{normalF4} in \eqref{normalF1}, we
	obtain
	\begin{align}
	\big[\nabla[\anti(n)]:T\big]_{i} & =\left(-\|\nabla F\|^{-3}\, F_{,l}\, F_{,lk}\right)\,\left(-\epsilon_{iks}F_{,s}\right)+\|\nabla F\|^{-1}\left(-\epsilon_{ijl}F_{,lk}\right)\,\left(\delta_{jk}-\|\nabla F\|^{-2}F_{,j}F_{,k}\right)\notag\label{normalF5}\\
	& =\|\nabla F\|^{-3}\left(\epsilon_{iks}\, F_{,s}\, F_{,l}\, F_{,lk}\,+\epsilon_{ijl}F_{,lk}F_{,j}F_{,k}\right)-\|\nabla F\|^{-1}\,\epsilon_{ijl}F_{,lk}\delta_{jk}\\
	& =\|\nabla F\|^{-3}\left(\epsilon_{ikj}\, F_{,j}\, F_{,l}\, F_{,lk}\,+\epsilon_{ijl}F_{,lk}F_{,j}F_{,k}\right)-\|\nabla F\|^{-1}\,\epsilon_{ikl}F_{,lk}\notag\\
	& =\|\nabla F\|^{-3}\left(\epsilon_{ilj}\, F_{,j}\, F_{,k}\, F_{,kl}\,+\epsilon_{ijl}F_{,lk}F_{,j}F_{,k}\right)\notag\\
	& =\|\nabla F\|^{-3}\left(\epsilon_{ilj}+\epsilon_{ijl}\right)F_{,lk}F_{,j}F_{,k}=0,\notag
	\end{align}
	and the proof is complete.
\end{proof}
\noindent{We now proceed with \it the proof of Proposition \ref{prop:ComparisonToMindlin}}:
	We start remarking that,
	using the properties of the projectors $T$ and $Q$ introduced in
	\eqref{eq:Projectors_Properties} and the definition \eqref{eq:B_Def}
	of the tensor $B$ one has
	\begin{align*}
	B & =\frac{1}{2}\mathrm{anti}\left(\widetilde{m}\cdot n\right)=\frac{1}{2}\mathrm{anti}\left((T+Q)\cdot\widetilde{m}\cdot n\right)=\frac{1}{2}\mathrm{anti}\left(T\cdot\widetilde{m}\cdot n\right)+\frac{1}{2}\mathrm{anti}\left(Q\cdot\widetilde{m}\cdot n\right)\\
	& =\frac{1}{2}\mathrm{anti}\left(T\cdot\widetilde{m}\cdot n\right)+\frac{1}{2}\mathrm{anti}\left(n\otimes n\cdot\widetilde{m}\cdot n\right)\\
	& =\frac{1}{2}\mathrm{anti}\left(T\cdot\widetilde{m}\cdot n\right)+\frac{1}{2}\mathrm{anti}\left(\left.\langle  n,(\mathrm{sym}\,\widetilde{m})\cdot n\right.\rangle n\right)=:A+\frac{1}{2}\mathrm{anti}\left(\psi\,\: n\right).
	\end{align*}
	Recalling the definition \eqref{eq:AxlAnti_indices} of the $\mathrm{anti}$-operator, the second term in the traction boundary condition \eqref{eq:Boundary_CS-3}
	can hence be manipulated as follows:
	\begin{gather}
	\left[\nabla\left(B\cdot T\right):T\right]_{i}=\left(A_{ij}T_{jp}-\frac{1}{2}\epsilon_{ijk}\psi\, n_{k}T_{jp}\right)_{,h}T_{hp}=\left(A_{ij}T_{jp}\right)_{,h}T_{hp}-\frac{1}{2}\left(\epsilon_{ijk}\psi\, n_{k}T_{jp}\right)_{,h}T_{hp}.\label{eq:Calculation}
	\end{gather}
	We can now remark that
	\begin{equation}
	\left(\epsilon_{ijk}\psi\, n_{k}T_{jp}\right)_{,h}T_{hp}=\epsilon_{ijk}\left[\psi\,_{,h}T_{jh}n_{k}+\psi\, T_{jh}n_{k,h}+\psi\, T_{jp,h}T_{hp}n_{k}\right].\label{eq:Interm-1}
	\end{equation}
	On the other hand, recalling that $T+Q=\id$ and that $Q=n\otimes n$
	is a symmetric tensor, we also have
	\begin{equation}
	0=(T_{jp}+Q_{jp})_{,h}=T_{jp,h}+n_{j,h}n_{p}+n_{p,h}n_{j}\quad\Rightarrow\quad T_{jp,h}=-2n_{j,h}n_{p}.
	\end{equation}
	Using this last equality in \eqref{eq:Interm-1} and the fact that
	$T\cdot n=0$ we finally have
	\begin{align}
	\left(\epsilon_{ijk}\psi\, n_{k}T_{jp}\right)_{,h}T_{hp} & =\epsilon_{ijk}\left[\psi\,_{,h}T_{jh}n_{k}+\psi\, T_{jh}n_{k,h}-2\psi\, n_{j,h}n_{p}T_{hp}n_{k}\right]\notag\\
	& =\epsilon_{ijk}\psi\,_{,h}T_{jh}n_{k}+\epsilon_{ijk}\psi\, T_{jh}n_{k,h}-0,
	\end{align}
	and we can hence rewrite the term in \eqref{eq:Calculation} as
	\begin{align}
	\left[\nabla\left(B\cdot T\right):T\right]_{i} & =\left(A_{ij}T_{jp}-\frac{1}{2}\epsilon_{ijk}\psi\, n_{k}T_{jp}\right)_{,h}T_{hp}\notag\label{eq:Calculation-1}\\
	& =\left(A_{ij}T_{jp}\right)_{,h}T_{hp}-\frac{1}{2}\left(\epsilon_{ijk}\psi\,_{,h}T_{jh}n_{k}+\epsilon_{ijk}\psi\, T_{jh}n_{k,h}\right)\notag\\
	& =\left(A_{ij}T_{jp}\right)_{,h}T_{hp}+\frac{1}{2}\left(\epsilon_{ijk}\psi\,_{,h}(\delta_{jh}-T_{jh}-\delta_{jh})n_{k}\right)-\frac{1}{2}\left(\epsilon_{ijk}\psi\, T_{jh}n_{k,h}\right)\notag\\
	& =\left(A_{ij}T_{jp}\right)_{,h}T_{hp}+\frac{1}{2}\left(\epsilon_{ijk}\psi\,_{,h}n_{j}n_{h}n_{k}\right)-\frac{1}{2}\left(\epsilon_{ijk}\psi\,_{,j}n_{k}\right)-\frac{1}{2}\left(\epsilon_{ijk}\psi\, T_{jh}n_{k,h}\right)\\
	& =\left(A_{ij}T_{jp}\right)_{,h}T_{hp}+0-\frac{1}{2}\left(\epsilon_{ijk}\psi\,_{,j}n_{k}\right)-\frac{1}{2}\left(\epsilon_{ijk}\psi\, T_{jh}n_{k,h}\right)\notag\\
	& =\left(A_{ij}T_{jp}\right)_{,h}T_{hp}-\frac{1}{2}\left(\epsilon_{ijk}\psi\,_{,j}n_{k}\right)-\frac{1}{2}\left(\epsilon_{ijk}\psi\, T_{jh}n_{k,h}\right)\notag
	\end{align}
	or equivalently in compact form (see the definition of vector product
	in \eqref{eq:AxlAnti_indices}):
	\begin{align}
	\nabla\left(B\cdot T\right):T & =\nabla\left(A\cdot T\right):T-\frac{1}{2}(\nabla\psi\,)\times n-\frac{1}{2}\psi\,\nabla[\anti(n)]:T\notag\\
	& =\nabla\left(A\cdot T\right):T+\frac{1}{2}n\times(\nabla\psi\,\cdot T)-\frac{1}{2}\psi\,\nabla[\anti(n)]:T=\\
	& =\frac{1}{2}\nabla\left[\mathrm{anti}\left(T\cdot\widetilde{m}\cdot n\right)\cdot T\right]:T+\frac{1}{2}n\times\left[\nabla\left(\left.\langle  n,(\mathrm{sym}\,\widetilde{m})\cdot n\right.\rangle \right)\right]-\frac{1}{2}\psi\,\nabla[\anti(n)]:T.\notag
	\end{align}
	
	Moreover, using Lemma \ref{sub:vanishing}, we have
	\begin{align}
	\nabla[\anti(n)]:T=0.
	\end{align}
	Therefore, the proof is complete.

\subsection{Some  lemmas useful to understand  Mindlin and Tiersten's approach }\label{appendixlemma}

\begin{lemma}\label{lemacurlbun}
	Let $\Omega$ be an open subset of $\mathbb{R}^3$, $\Gamma$ an open subsets of $\partial \Omega$ and  $\widetilde{t},\widetilde{g}:\partial\Omega\setminus \overline{\Gamma}\rightarrow\mathbb{R}^3$ two functions. Then, the equality
	\begin{align}
	& \int_{\partial\Omega\setminus \overline{\Gamma}}\langle\widetilde{t},\delta u\rangle\, da
	+\int_{\partial\Omega\setminus \overline{\Gamma}}\langle\widetilde{g},(\id-n\otimes n)\cdot{\rm curl}\,\delta u\rangle\, da=0
	\end{align}
	holds for all  $\delta u\in C^2(\overline{\Omega})$ if and only if  $\widetilde{t}\Big|_{\partial\Omega\setminus \overline{\Gamma}}=0$ and $(\id-n\otimes n)\, \widetilde{g}\Big|_{\partial\Omega\setminus \overline{\Gamma}}=0$.
\end{lemma}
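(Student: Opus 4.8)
The "if" direction is immediate: since the projector $\id-n\otimes n$ is symmetric, the second integrand equals $\langle(\id-n\otimes n)\,\widetilde g,\curl\delta u\rangle$, so both integrals vanish identically once $\widetilde t=0$ and $(\id-n\otimes n)\,\widetilde g=0$ on $\partial\Omega\setminus\overline\Gamma$. The whole content therefore lies in the "only if" direction, and the plan is to exploit precisely the freedom underlying the weak independence discussed in Section \ref{newsubsectionmadeo}.

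The first step is to isolate $\widetilde g$ by testing with displacements that vanish on the surface. Concretely, I would choose $\delta u\in C^2(\overline\Omega)$ with $\delta u\big|_{\partial\Omega\setminus\overline\Gamma}=0$ but with prescribed, otherwise arbitrary, tangential normal derivative $(\id-n\otimes n)\cdot\nabla\delta u\cdot n$ there; such a field exists because $\partial\Omega$ is of class $C^2$, so it can be built in a tubular neighbourhood of $\partial\Omega\setminus\overline\Gamma$ using normal coordinates and then extended to $\overline\Omega$. For such $\delta u$ all tangential derivatives vanish on the surface, so by the local expression \eqref{Curl_NormDer} the tangential curl reduces to
\[
(\id-n\otimes n)\cdot\curl\delta u=\Big(-\frac{\partial\delta u_\nu}{\partial x_n},\ \frac{\partial\delta u_\tau}{\partial x_n},\ 0\Big)^{T},
\]
which ranges over all smooth tangential vector fields as the two surviving normal derivatives vary freely. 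Since the first integral vanishes (because $\delta u=0$ on the surface), the hypothesis forces
\[
\int_{\partial\Omega\setminus\overline\Gamma}\langle(\id-n\otimes n)\,\widetilde g,(\id-n\otimes n)\cdot\curl\delta u\rangle\,da=0
\]
for an arbitrary tangential test field, and the fundamental lemma of the calculus of variations yields $(\id-n\otimes n)\,\widetilde g=0$ on $\partial\Omega\setminus\overline\Gamma$.

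With $(\id-n\otimes n)\,\widetilde g=0$ established, the second integral vanishes for \emph{every} $\delta u$, and the assumption collapses to $\int_{\partial\Omega\setminus\overline\Gamma}\langle\widetilde t,\delta u\rangle\,da=0$ for all $\delta u\in C^2(\overline\Omega)$. Because the trace $\delta u\big|_{\partial\Omega\setminus\overline\Gamma}$ can now be taken to be an arbitrary smooth vector field, a second application of the fundamental lemma gives $\widetilde t=0$ on $\partial\Omega\setminus\overline\Gamma$, which completes the argument.

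The main obstacle, and the conceptual heart of the lemma, is the first step: since $\delta u$ and $(\id-n\otimes n)\cdot\curl\delta u$ are only weakly (not strongly) independent, one cannot naively vary the tangential curl while holding $\delta u$ fixed at an arbitrary value. The key realization is that it suffices to hold $\delta u$ equal to \emph{zero}: then every tangential derivative drops out of \eqref{Curl_NormDer}, and the normal derivatives of the tangential components $\delta u_\tau,\delta u_\nu$ supply exactly enough freedom to let the tangential curl sweep out all tangential fields. The only genuinely technical point requiring care is making rigorous the construction of a $C^2$ field with prescribed trace and prescribed normal derivative on the $C^2$ surface $\partial\Omega\setminus\overline\Gamma$.
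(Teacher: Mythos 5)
Your proof is correct, but it follows a genuinely different route from the one in Appendix \ref{appendixlemma}. The paper does \emph{not} pick special test functions; instead it re-runs the surface integration by parts of Appendix \ref{sectionincomplete2} on the term $\int\langle\widetilde{g},(\id-n\otimes n)\cdot\curl\delta u\rangle\,da$, converting it into a part dual to $\delta u$ (namely $\nabla[(\anti[(\id-n\otimes n)\cdot\widetilde{g}])\cdot T]:T$), a part dual to the strongly independent variation $\nabla\delta u\cdot n$ (namely $(\id-n\otimes n)\cdot\anti[(\id-n\otimes n)\cdot\widetilde{g}]\cdot n$), and a line term on $\partial\Gamma$; it then sets both surface coefficients to zero by strong independence, shows in the local frame $(\tau,\nu,n)$ that $\anti[(\id-n\otimes n)\cdot\widetilde{g}]\cdot n=\widetilde{g}_\tau\nu-\widetilde{g}_\nu\tau=0$ forces $(\id-n\otimes n)\cdot\widetilde{g}=0$, and finally observes that the tangential-gradient term then drops out, leaving $\widetilde{t}=0$. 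Your argument short-circuits all of this machinery: by testing with $\delta u$ vanishing on the surface but with free tangential normal derivative, you exploit directly the residual freedom that makes $u$ and $\curl u$ only weakly independent, and two applications of the fundamental lemma finish the job. What the paper's route buys is coherence with its central theme — the lemma becomes a corollary of the strongly independent decomposition already established — and it sidesteps any explicit construction of test functions; what your route buys is brevity and transparency, at the price of the one technical point you correctly flag (existence of a $C^2$ field on $\overline{\Omega}$ with zero trace and prescribed normal derivative on the $C^2$ surface, e.g.\ via $\delta u=d\,\phi$ with $d$ the signed distance in a tubular neighbourhood). Both proofs tacitly assume enough regularity of $\widetilde{t},\widetilde{g}$ for the fundamental lemma (and, in the paper's case, for differentiating $\widetilde{g}$ tangentially), which the bare statement "two functions" does not supply; this is a shared, minor imprecision rather than a defect of your argument.
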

\begin{proof}
Similar calculations as in Appendix \ref{sectionincomplete2} lead to 
	\begin{align}
	& \int_{\partial \Omega\setminus\overline{\Gamma}}\langle\widetilde{t},\delta u\rangle\, da
	+\int_{\partial \Omega\setminus\overline{\Gamma}}\langle\widetilde{g},(\id-n\otimes n)\cdot{\rm curl}\,\delta u\rangle\, da\\& = \int_{\partial \Omega\setminus\overline{\Gamma}}\langle\widetilde{t}-\nabla[(\anti[(\id-n\otimes n)\cdot\widetilde{g}])\cdot T]:T,\delta u\rangle\, da\notag\\
	& -\int_{\partial \Omega\setminus\overline{\Gamma}}\langle(\id-n\otimes n)\cdot\anti[(\id-n\otimes n)\cdot\widetilde{g}]\cdot n,\nabla\delta u\cdot n\rangle\,da\notag\\
	& -\int_{\partial{\partial \Omega\setminus\overline{\Gamma}}}\langle\jump{\anti[(\id-n\otimes n)\cdot\widetilde{g}]\cdot\nu},\delta u\rangle \,ds\notag
	\end{align}
	for all variations $\delta u\in C^2(\overline{\Omega})$. Therefore
	\begin{align}
	& \int_{\partial \Omega\setminus\overline{\Gamma}}\langle\widetilde{t},\delta u\rangle\, da
	+\int_{\partial \Omega\setminus\overline{\Gamma}}\langle\widetilde{g},(\id-n\otimes n)\cdot{\rm curl}\,\delta u\rangle\, da=0
	\end{align}
	implies
	\begin{align}
	&\int_{\partial \Omega\setminus\overline{\Gamma}}\langle\widetilde{t}-\frac{1}{2}\nabla[(\anti[(\id-n\otimes n)\cdot\widetilde{g}])\cdot T]:T,\delta u\rangle\, da\notag\\
	& -\int_{\partial \Omega\setminus\overline{\Gamma}}\langle(\id-n\otimes n)\cdot\anti[(\id-n\otimes n)\cdot\widetilde{g}]\cdot n,\nabla\delta u\cdot n\rangle\,da\notag\\
	& -\int_{\partial{\partial \Omega\setminus\overline{\Gamma}}}\langle\jump{\anti[(\id-n\otimes n)\cdot\widetilde{g}]\cdot\nu},\delta u\rangle \,ds=0
	\end{align}
	for all variations $\delta u\in C^2(\overline{\Omega})$. Since $\delta u\in C^2(\overline{\Omega})$, this fact is equivalent to
	\begin{align}
	&\int_{\partial \Omega\setminus\overline{\Gamma}}\langle\widetilde{t}-\nabla[(\anti[(\id-n\otimes n)\cdot\widetilde{g}])\cdot T]:T,\delta u\rangle\, da\notag\\
	& -\int_{\partial \Omega\setminus\overline{\Gamma}}\underbrace{\langle(\id-n\otimes n)\cdot\anti[(\id-n\otimes n)\cdot\widetilde{g}]\cdot n,\nabla\delta u\cdot n\rangle}_{\begin{array}{c}
		\text{\tiny{completely\ $\delta u$}}\text{\tiny{-independent second}}\vspace{-1mm}\\
		\text{\tiny{order normal variation of gradient}}
		\end{array}}\,da=0
	\end{align}
	for all variations $\delta u\in C^2(\overline{\Omega})$.
	Having now the possibility to consider independent variations of $\delta u$ and $ \nabla\delta u\cdot n$ on $\partial \Omega\setminus\overline{\Gamma}$, we obtain that
	\begin{align}
	\widetilde{t}-\nabla[(\anti[(\id-n\otimes n)\cdot\widetilde{g}])\cdot T]:T\Big|_{\partial \Omega\setminus\overline{\Gamma}}&=0,\notag\\
	(\id-n\otimes n)\cdot\anti[(\id-n\otimes n)\cdot\widetilde{g}]\cdot n\Big|_{\partial \Omega\setminus\overline{\Gamma}}&=0.
	\end{align}
	We also remark that
	\begin{align}
	(\id-n\otimes n)\cdot\anti[(\id-n\otimes n)\cdot\widetilde{g}]\cdot n=\anti[(\id-n\otimes n)\cdot\widetilde{g}]\cdot n.
	\end{align}
	Hence, it remains to prove that from
	\begin{align}
	\widetilde{t}-\nabla[\anti[(\id-n\otimes n)\cdot\widetilde{g}]\cdot T]:T\Big|_{\partial \Omega\setminus\overline{\Gamma}}&=0,\notag\\
	\anti[(\id-n\otimes n)\cdot\widetilde{g}]\cdot n\Big|_{\partial \Omega\setminus\overline{\Gamma}}&=0
	\end{align}
	it follows that $\widetilde{t}\Big|_{\partial \Omega\setminus\overline{\Gamma}}=0$ and $(\id-n\otimes n)\cdot\widetilde{g}\Big|_{\partial \Omega\setminus\overline{\Gamma}}=0$.

	In the suitable local coordinate system $(\tau,\nu,n)$ at the boundary we obtain
	\begin{align}
	&n\otimes n=\diag(0,0,1), \qquad (\id-n\otimes n)=\diag(1,1,0), \qquad \widetilde{g}=\widetilde{g}_\tau\,\tau+\widetilde{g}_\nu\,\nu+\widetilde{g}_n\,n,\notag\\
	&(\id-n\otimes n) \widetilde{g}=\widetilde{g}_\tau\,\tau+\widetilde{g}_\nu\,\nu, \qquad [(\id-n\otimes n)\cdot\widetilde{g}]\times n=\widetilde{g}_\tau\,\nu-\widetilde{g}_\nu\,\tau.
	\end{align}
	Hence
	\begin{align}
	\anti[(\id-n\otimes n)\cdot\widetilde{g}]\cdot n\Big|_{\partial \Omega\setminus\overline{\Gamma}}&=(\widetilde{g}_\tau\,\nu-\widetilde{g}_\nu\,\tau)\Big|_{\partial \Omega\setminus\overline{\Gamma}}=0\notag\\&\Leftrightarrow \quad \widetilde{g}_\tau\Big|_{\partial \Omega\setminus\overline{\Gamma}}=0, \ \widetilde{g}_\nu\Big|_{\partial \Omega\setminus\overline{\Gamma}}=0 \quad \Leftrightarrow
	\quad (\id-n\otimes n)\cdot\widetilde{g}\Big|_{\partial \Omega\setminus\overline{\Gamma}}=0.
	\end{align}
	
	Moreover, we deduce
	\begin{align}
	\{\nabla[\anti[(\id-n\otimes n)\cdot\widetilde{g}]\cdot T]:T\}_i\Big|_{\partial \Omega\setminus\overline{\Gamma}}&=[\anti[(\id-n\otimes n)\cdot\widetilde{g}]_{ij,k}\, T_{jk}\Big|_{\partial \Omega\setminus\overline{\Gamma}}=[\anti[(\id-n\otimes n)\cdot\widetilde{g}]_{ij,k}\,\tau_{k}\tau_{j}\Big|_{\partial \Omega\setminus\overline{\Gamma}}\notag\\
	&=\langle\nabla\{[\anti[(\id-n\otimes n)\cdot\widetilde{g}]_{ij}\},\tau\rangle \,\tau_{j}\Big|_{\partial \Omega\setminus\overline{\Gamma}}=0,
	\end{align}
	since $[\anti[(\id-n\otimes n)\cdot\widetilde{g}]_{ij}\Big|_{\partial \Omega\setminus\overline{\Gamma}}=0$, for all $i,j=1,2,3$. 
	Therefore, it follows that also $\widetilde{t}\Big|_{\partial \Omega\setminus\overline{\Gamma}}=0$.
\end{proof}

\begin{lemma}
	Let $\Omega$ be an open subset of $\mathbb{R}^3$, $\Gamma$ an open subsets of $\partial \Omega$ and  $\widetilde{t}:\partial\Omega\setminus \overline{\Gamma}\rightarrow\mathbb{R}^3$, $\widetilde{g}:\partial\Omega\setminus \overline{\Gamma}\rightarrow\mathbb{R}^3$ two functions. Then, the equality
	\begin{align}
	& \int_{\partial\Omega\setminus \overline{\Gamma}}\langle\widetilde{t},\delta u\rangle\, da
	+\int_{\Gamma}\langle\widetilde{g},(\id-n\otimes n)\cdot{\rm curl}\,\delta u\rangle\, da=0
	\end{align}
	holds for all $\delta u\in  C^2(\overline{\Omega})$, $\delta u \Big|_{\Gamma}=0$ and $(\id-n\otimes n)\cdot\curl \delta u\Big|_{\partial\Omega\setminus \overline{\Gamma}}=0$ if and only if  $\widetilde{t}\Big|_{\partial\Omega\setminus \overline{\Gamma}}=0$ and $(\id-n\otimes n)\, \widetilde{g}\Big|_{\Gamma}=0$.
\end{lemma}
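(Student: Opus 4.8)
The plan is to prove both implications, the forward (``if'') one being immediate and the reverse (``only if'') one carrying the content. For the forward direction, assume $\widetilde{t}\big|_{\partial\Omega\setminus\overline{\Gamma}}=0$ and $(\id-n\otimes n)\cdot\widetilde{g}\big|_{\Gamma}=0$. Then the first integral vanishes at once, while in the second, using that $(\id-n\otimes n)$ is symmetric, $\langle\widetilde{g},(\id-n\otimes n)\cdot\curl\delta u\rangle=\langle(\id-n\otimes n)\cdot\widetilde{g},\curl\delta u\rangle=0$ on $\Gamma$, so the whole expression is zero for every admissible $\delta u$; note this direction uses neither constraint on $\delta u$.

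For the reverse direction I would first rewrite the curl integral over $\Gamma$ by exactly the surface integration by parts carried out in Appendix \ref{sectionincomplete2} (the same manipulation used for Lemma \ref{lemacurlbun}, now performed on the surface $\Gamma$ with boundary curve $\partial\Gamma$). This turns $\int_{\Gamma}\langle\widetilde{g},(\id-n\otimes n)\cdot\curl\delta u\rangle\,da$ into the sum of a surface term paired with $\delta u$, a term paired with the normal derivative $\nabla\delta u\cdot n$, and a line integral on $\partial\Gamma$ paired with $\delta u$. Because the admissible fields satisfy $\delta u\big|_{\Gamma}=0$, continuity forces $\delta u=0$ on $\overline{\Gamma}\supseteq\partial\Gamma$, so both the $\delta u$-surface term on $\Gamma$ and the $\partial\Gamma$ line integral drop out, leaving $\int_{\Gamma}\langle\widetilde{g},(\id-n\otimes n)\cdot\curl\delta u\rangle\,da=-\int_{\Gamma}\langle\anti\left((\id-n\otimes n)\cdot\widetilde{g}\right)\cdot n,\nabla\delta u\cdot n\rangle\,da$, where I used $(\id-n\otimes n)\cdot\anti\left((\id-n\otimes n)\cdot\widetilde{g}\right)\cdot n=\anti\left((\id-n\otimes n)\cdot\widetilde{g}\right)\cdot n$ as in the proof of Lemma \ref{lemacurlbun}. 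Thus the hypothesis reduces to $\int_{\partial\Omega\setminus\overline{\Gamma}}\langle\widetilde{t},\delta u\rangle\,da-\int_{\Gamma}\langle\anti\left((\id-n\otimes n)\cdot\widetilde{g}\right)\cdot n,\nabla\delta u\cdot n\rangle\,da=0$ for all admissible $\delta u$.

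The two surviving data are the trace $\delta u\big|_{\partial\Omega\setminus\overline{\Gamma}}$ and the normal derivative $\nabla\delta u\cdot n\big|_{\Gamma}$; since $\partial\Omega\setminus\overline{\Gamma}$ and $\Gamma$ are disjoint open pieces of $\partial\Omega$ separated by the curve $\partial\Gamma$, they can be prescribed independently. Recalling eq. \eqref{Curl_NormDer} and the discussion of Section \ref{newsubsectionmadeo}, fixing the trace of $\delta u$ on a surface fixes its tangential derivatives but leaves its normal derivatives free, so the constraint $(\id-n\otimes n)\cdot\curl\delta u=0$ on $\partial\Omega\setminus\overline{\Gamma}$ can always be met by solving for the normal derivatives of the tangential components of $\delta u$, keeping the trace arbitrary. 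I would therefore (i) take $\delta u$ with arbitrary trace on $\partial\Omega\setminus\overline{\Gamma}$ and with $\delta u=0$, $\nabla\delta u\cdot n=0$ on $\Gamma$, and conclude by the fundamental lemma of the calculus of variations that $\widetilde{t}=0$ on $\partial\Omega\setminus\overline{\Gamma}$; and (ii) take $\delta u=0$ on all of $\partial\Omega$ with $\nabla\delta u\cdot n$ arbitrary on $\Gamma$, and conclude $\anti\left((\id-n\otimes n)\cdot\widetilde{g}\right)\cdot n=0$ on $\Gamma$. Finally, exactly as in the local-coordinate computation closing the proof of Lemma \ref{lemacurlbun}, writing $\widetilde{g}=\widetilde{g}_{\tau}\tau+\widetilde{g}_{\nu}\nu+\widetilde{g}_{n}n$ gives $\anti\left((\id-n\otimes n)\cdot\widetilde{g}\right)\cdot n=\widetilde{g}_{\tau}\nu-\widetilde{g}_{\nu}\tau$, whose vanishing is equivalent to $\widetilde{g}_{\tau}=\widetilde{g}_{\nu}=0$, i.e. $(\id-n\otimes n)\cdot\widetilde{g}=0$ on $\Gamma$.

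The main obstacle I anticipate is the rigorous construction of the admissible test fields realizing these independent boundary data: I must produce $C^{2}$ functions on $\overline{\Omega}$ with prescribed $0$-th order trace on $\partial\Omega\setminus\overline{\Gamma}$, prescribed normal derivative on $\Gamma$, vanishing $0$-th order data on $\Gamma$ (with suitable decay near $\partial\Gamma$), while simultaneously satisfying $(\id-n\otimes n)\cdot\curl\delta u=0$ on $\partial\Omega\setminus\overline{\Gamma}$. This amounts to prescribing compatible Cauchy data (the $0$-th and $1$-st order jet) along $\partial\Omega$ and extending by a tubular-neighborhood and cutoff argument; the disjointness of $\Gamma$ and $\partial\Omega\setminus\overline{\Gamma}$ together with the assumed $C^{2}$ regularity of $\partial\Omega$ ensures no compatibility conflict arises, so the fundamental-lemma localization is legitimate and the proof is complete.
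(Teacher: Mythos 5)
Your proposal is correct and follows exactly the route the paper intends: the paper's own ``proof'' is the single sentence that the argument is similar to that of Lemma \ref{lemacurlbun}, and you carry out precisely that adaptation — surface integration by parts on $\Gamma$, elimination of the $\delta u$-terms there via $\delta u\big|_{\overline{\Gamma}}=0$, independent localization on the disjoint pieces $\partial\Omega\setminus\overline{\Gamma}$ and $\Gamma$ using that the tangential-curl constraint only ties down normal derivatives of tangential components, and the final local-frame computation turning $\anti[(\id-n\otimes n)\cdot\widetilde{g}]\cdot n=0$ into $(\id-n\otimes n)\cdot\widetilde{g}=0$. The sign in front of the $\nabla\delta u\cdot n$ term is immaterial for the conclusion, and your acknowledgment that the test fields must be built by prescribing a compatible first-order jet with cutoffs near $\partial\Gamma$ is the right (and standard) way to close the only technical point the paper leaves implicit.
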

\begin{proof}
	The proof is similar with the  proof of Lemma \ref{lemacurlbun}.

\end{proof}

\subsection{Concluding diagrams}

\begin{figure}[h!]
	\setlength{\unitlength}{1mm}
	\begin{center}
		\begin{picture}(10,115)
		\thicklines
		
		\put(5,85){\oval(165,50)}
		\put(-72,105){\bf\footnotesize{Standard boundary conditions in the indeterminate couple stress model \cite{Mindlin62}}}
		\put(-72,100){\footnotesize{Geometric (essential) boundary conditions \ (3+2) {\bf [only weakly independent]}}}
		\put(-72,95){\footnotesize{$u\big|_{\Gamma}= u^{\rm ext}\in\mathbb{R}^{ 3}, \quad (\id-n\otimes n) \cdot \nabla  u \cdot  n\big|_{\Gamma}=(\id-n\otimes n) \cdot  \widetilde{a}^{\rm ext}\in\mathbb{R}^{3},\quad \text{or}\ \ (\id-n\otimes n) \cdot  \curl u\big|_{\Gamma}=(\id-n\otimes n) \cdot  \widetilde{b}^{\rm ext}$}}
		\put(-72,90){\footnotesize{Mechanical (traction) boundary conditions \ (3+2) }}
		\put(-72,85){\footnotesize{$\left((\sigma-\widetilde{\tau}) \cdot  n
				-\frac{1}{2} n\times \nabla[\langle n,(\sym \widetilde{ {m}}) \cdot  n\rangle]\right)\big|_{\partial \Omega\setminus\overline{\Gamma}}= t^{\rm ext}$, \qquad\qquad\qquad\qquad $\widetilde{\tau}=\Div \widetilde{\mathfrak{m}}=\frac{1}{2}\anti({\rm Div}\,  \widetilde{m})\in \so(3)$}}
		\put(-43.5,80){\footnotesize{$(\id -n\otimes n) \cdot \widetilde{ {m}} \cdot  n\big|_{\partial \Omega\setminus\overline{\Gamma}}=(\id-n\otimes n)\cdot m^{\rm ext}$}}
		\put(78,85){\footnotesize{3 bc}}
		\put(78,80){\footnotesize{2 bc}}
		\put(-72,75){\footnotesize{Boundary virtual work }}
		\put(-72,70){\footnotesize{$-\int_{\partial \Omega}\langle (\sigma-\widetilde{\tau}) \cdot  n, \delta u\rangle \,da
				-\int_{\partial\Omega}\langle \widetilde{ {m}} \cdot  n, \axl (\skw \nabla \delta u) \rangle\, da=
				0\qquad \Leftrightarrow$}}
		\put(-72,65){\footnotesize{$-\int_{\partial \Omega}\langle \Big\{(\sigma-\widetilde{\tau}) \cdot  n -\frac{1}{2} n\times \underbrace{\nabla[\langle n,(\sym \widetilde{ {m}}) \cdot  n\rangle]}_{\text{normal curvature}}\Big\}, \delta u\rangle \,da
				-\int_{\partial\Omega}\langle \widetilde{ {m}}  \cdot  n,\Big\{(\id-n\otimes n) \cdot [\axl (\skw \nabla \delta u)]\Big\}\rangle\,
				da=0$}}
		
		\put(5,52){\Huge $\Updownarrow$}
		
		\put(5,23){\oval(165,50)}
		\put(-72,45){\bf\footnotesize{Standard boundary conditions in the indeterminate couple stress model,  index-format}}
		\put(-72,40){\footnotesize{Geometric (essential) boundary conditions \ (3+2) {\bf [only weakly independent]}}}
		\put(-72,35){\footnotesize{$u_i\big|_{\Gamma}= u^{\rm ext}_{,i}\in\mathbb{R}^{ 3}, \qquad \ \left(\epsilon_{ikl}u_{l,k}-\epsilon_{jkl}u_{l,k}n_jn_i\right)\big|_{\Gamma}= \epsilon_{ikl} u^{\rm ext}_{l,k}-\epsilon_{jkl} u^{\rm ext}_{l,k}n_jn_i,$}}
		\put(-71.2,30){\footnotesize{$\quad \ \ \qquad \quad \qquad \quad \quad \text{or}\ \ \ \left(u_{i,k}n_k-u_{j,k}n_kn_jn_i\right)\big|_{\Gamma}= u^{\rm ext}_{i,k}n_k- u^{\rm ext}_{j,k}n_kn_jn_i$}}
		\put(-72,25){\footnotesize{Mechanical (traction) boundary conditions \ (3+2) }}
		\put(-72,20){\footnotesize{$\left((\sigma_{ij}-\widetilde{\tau}_{ij})\, n_j
				-\frac{1}{2} \epsilon_{ikl}n_k (\widetilde{m}_{ij}n_in_j)_{,l}\right)\big|_{\partial \Omega\setminus\overline{\Gamma}}= t^{\rm ext}_i$, \qquad\qquad\qquad\qquad $\widetilde{\tau}_{ij}=\frac{1}{2}\epsilon_{jik}  \widetilde{m}_{kl,l}\in \so(3)$}}
		\put(-52.5,15){\footnotesize{$\left(\widetilde{ {m}}_{ik}n_k-\widetilde{ {m}}_{jk}n_kn_jn_i\right)\big|_{\partial \Omega\setminus\overline{\Gamma}}= m^{\rm ext}_i- m^{\rm ext}_jn_jn_i$}}
		\put(78,20){\footnotesize{3 bc}}
		\put(78,15){\footnotesize{2 bc}}
		\put(-72,10){\footnotesize{Boundary virtual work }}
		\put(-72,5){\footnotesize{$-\int_{\partial \Omega}\left((\sigma_{ij}-\widetilde{\tau}_{ij})\, n_j
				\right) \delta u_i \,da
				-\int_{\partial\Omega}\langle \widetilde{ {m}} \cdot  n, \axl (\skw \nabla \delta u) \rangle\, da=
				0\qquad \Leftrightarrow$}}
		\put(-72,0){\footnotesize{$-\int_{\partial \Omega}\left((\sigma_{ij}-\widetilde{\tau}_{ij})\, n_j
				-\frac{1}{2} \epsilon_{ikl}n_k (\widetilde{m}_{ij}n_in_j)_{,l}\right) \delta u_i \,da
				-\frac{1}{2}\,\int_{\partial\Omega}\left(\widetilde{ {m}}_{ik}n_k-\widetilde{ {m}}_{jk}n_kn_jn_i\right)\,\left(\epsilon_{ikl}\delta u_{l,k}-\epsilon_{jkl}\delta u_{l,k}n_jn_i\right)\,
				da=0$}}
		
		\end{picture}
	\end{center}
	\caption{The standard boundary conditions in the indeterminate couple stress model which have been employed hitherto by all authors to our knowledge. The virtual displacement is denoted by $\delta u\in C^\infty(\overline{\Omega})$. The number of traction boundary conditions is correct, but the split into independent variations at the boundary is no taken to it's logical end.}\label{limitmodel003}
\end{figure}

\begin{figure}[h!]
	\setlength{\unitlength}{1mm}
	\begin{center}
		\begin{picture}(10,100)
		\thicklines

		\put(5,87){\oval(165,62)}
		\put(-72,115){\bf\footnotesize{Boundary conditions in the indeterminate couple stress model in terms of gradient elasticity}}
		\put(-72,110){\bf\footnotesize{ and third order moment tensors}}
		\put(-72,105){\footnotesize{Geometric (essential) boundary conditions \ (3+2) {\bf [strongly independent]}}}
		\put(-72,100){\footnotesize{$u\big|_{\Gamma}= u^{\rm ext}\in\mathbb{R}^{ 3}, \quad (\id-n\otimes n) \cdot \nabla  u \cdot  n\big|_{\Gamma}=(\id-n\otimes n) \cdot \widetilde{a}
				^{\rm ext}\in\mathbb{R}^{3},\quad \text{or}\ \ (\id-n\otimes n) \cdot \curl u\big|_{\Gamma}=(\id-n\otimes n)    \cdot \widetilde{b}^{\rm ext}$}}
		\put(-72,95){\footnotesize{Mechanical (traction) boundary conditions \ (3+2)}}
		\put(-69,90){\footnotesize{$\left((\sigma-\Div \widetilde{\mathfrak{m}}) \cdot   n-\nabla [(\widetilde{\mathfrak{m}} \cdot  n) \cdot (\id-n\otimes n)]: (\id-n\otimes n)\right)\big|_{\partial \Omega\setminus\overline{\Gamma}}= t^{\rm ext}$,\quad \quad  $\widetilde{\mathfrak{m}}=D_{\nabla\nabla u}[\widetilde{W}_{\mathrm{curv}}(\nabla [\axl (\skw \nabla u)])]$}}
		\put(-24.5,85){\footnotesize{$(\id -n\otimes n) \cdot [\widetilde{\mathfrak{m}} \cdot   n] \cdot  n\big|_{\partial \Omega\setminus\overline{\Gamma}}=(\id-n\otimes n) \cdot  m^{\rm ext}$}}
		\put(-7.2,80){\footnotesize{$ \jump{(\widetilde{\mathfrak{m}} \cdot   n) \cdot   \, \nu}\big|_{\partial \Gamma}= \pi^{\rm ext}$,\ \ \qquad\quad ``edge line force'' on $\partial \Gamma$}}
		\put(78,90){\footnotesize{3 bc}}
		\put(78,85){\footnotesize{2 bc}}
		\put(78,80){\footnotesize{3 bc}}
		\put(-72,75){\footnotesize{Boundary virtual work}}
		\put(-72,70){\footnotesize{$-\int_{\partial \Omega}\langle (\sigma-\Div \widetilde{\mathfrak{m}}) \cdot  n, \delta u\rangle \,da
				-\int_{\partial\Omega}\langle \widetilde{\mathfrak{m}} \cdot   n, \nabla \delta u \rangle\, da=
				0\qquad \Leftrightarrow$}}
		\put(-72,65){\footnotesize{$ -\int_{\partial \Omega}\langle (\sigma-\Div \widetilde{\mathfrak{m}}) \cdot  n-\nabla [(\widetilde{\mathfrak{m}} \cdot   n) \cdot (\id-n\otimes n)]:(\id-n\otimes n), \delta u\rangle \,da-\int_{\partial \Omega}\langle (\id -n\otimes n) \cdot [\widetilde{\mathfrak{m}} \cdot  n] \cdot  n,  \nabla \delta u \cdot  n\rangle da$}}
		\put(-72,60){\footnotesize{$ -
				\int_{\partial \Gamma}\langle \jump{(\widetilde{\mathfrak{m}} \cdot  n)  \cdot  \nu}, \delta u\rangle ds=0$}}

		\put(5,49){{\huge $\Updownarrow$} \  \footnotesize{equivalent}}
		
		\put(5,21){\oval(165,50)}
		\put(-72,40){\footnotesize{\bf Boundary conditions  (3+2) in the indeterminate couple stress model in terms of gradient elasticity,}}
		\put(-72,35){\bf\footnotesize{  third order  moment tensors,  and written in indices}}
		\put(-72,30){\footnotesize{Geometric (essential) boundary conditions \ (3+2) {\bf [strongly independent]}}}
		\put(-72,25){\footnotesize{$u_i\big|_{\Gamma}= u^{\rm ext}_{ i}, \qquad (u_{i,k}n_k-u_{j,k}n_jn_k n_i)\big|_{\Gamma}= u^{\rm ext}_{i,k}n_k- u^{\rm ext}_{j,k}n_in_jn_k  $,}}
		\put(-79,20){\footnotesize{$\hspace{2.25cm} \text{or}\ \ (\epsilon_{ikl}u_{l,k}-\epsilon_{jkl} u_{l,k}n_jn_i)\big|_{\Gamma}=(\epsilon_{ikl} u^{\rm ext}_{l,k}-\epsilon_{jkl}  u^{\rm ext}_{l,k}n_jn_i) $,}}
		\put(-72,15){\footnotesize{Mechanical (traction) boundary conditions \ (3+2)}}
		\put(-72,10){\footnotesize{$\left[\left(\sigma_{ij}-\widetilde{\mathfrak{m}}_{ijk,k}\right)\, n_{j}-\left(\widetilde{\mathfrak{m}}_{ipk}\,n_k-\widetilde{\mathfrak{m}}_{ijk}\, n_{k}n_jn_p\right)_{,h}\left(\delta_{ph}-n_p n_h\right)\right]\big|_{\partial \Omega\setminus\overline{\Gamma}}= t^{\rm ext}_i$,\quad  $\widetilde{\mathfrak{m}}_{ijk}=D_{u_{,ijk}}\widetilde{W}_{\mathrm{curv}}\left(\left(\epsilon_{ijk}u_{k,j}\right)_{,m}\right)$}}
		\put(-25.5,5){\footnotesize{$\left(\widetilde{\mathfrak{m}}_{ijp}\, n_{j}-\widetilde{\mathfrak{m}}_{pjk}\, n_{k}n_{j}n_i\right)\,n_p\big|_{\partial \Omega\setminus\overline{\Gamma}}= m^{\rm ext}_i- m^{\rm ext}_pn_pn_i$}}
		\put(-1.5,0){\footnotesize{$ \jump{\widetilde{\mathfrak{m}}_{pjk}\, n_{k}\, \nu_j}\big|_{\partial \Gamma}= \pi^{\rm ext}_p$,\ \ \qquad\quad ``edge line force'' on $\partial \Gamma$}}
		\put(79,10){\footnotesize{3 bc}}
		\put(79,5){\footnotesize{2 bc}}
		\put(79,0){\footnotesize{3 bc}}
		\end{picture}
	\end{center}
	\caption{The standard strongly independent boundary conditions in the indeterminate couple stress model in terms of a third order couple stress tensor coming from full gradient elasticity.  The virtual displacement is denoted by $\delta u\in C^\infty(\overline{\Omega})$.  The summation convention was used in index notations. }\label{limitmodel03order}
\end{figure}

\newpage

\begin{figure}[h!]
	\setlength{\unitlength}{1mm}
	\begin{center}
		\begin{picture}(10,170)
		\thicklines

		\put(5,159){\oval(165,72)}
		\put(-72,190){\bf \footnotesize{Strongly independent boundary conditions in the indeterminate couple stress model}}
		\put(-72,185){\footnotesize{Geometric (essential) boundary conditions \ (3+2)}}
		\put(-72,180){\footnotesize{$u\big|_{\Gamma}= u^{\rm ext}\in\mathbb{R}^{ 3}, \  (\id-n\otimes n) \cdot  \nabla  u \cdot  n\big|_{\Gamma}=(\id-n\otimes n) \cdot a^{\rm ext}\in\mathbb{R}^{3},\  \text{or}\  (\id-n\otimes n) \cdot \curl u\big|_{\Gamma}=(\id-n\otimes n) \cdot   b^{\rm ext}$}}
		\put(-72,175){\footnotesize{Mechanical (traction) boundary conditions \ (3+2)}}
		\put(-71,170){\footnotesize{$\Big((\sigma-\widetilde{\tau}) \cdot  n
				-\frac{1}{2} n\times \nabla[\langle n,(\sym \widetilde{ {m}}) \cdot  n\rangle]$}}
		\put(-58,165){\footnotesize{$-\frac{1}{2}\nabla[(\anti[(\id-n\otimes n) \cdot  \widetilde{ {m}} \cdot  n]) \cdot (\id-n\otimes n)]: (\id-n\otimes n)\Big)\big|_{\partial \Omega\setminus\overline{\Gamma}}= t^{\rm ext}$,}}
		\put(-31.5,160){\footnotesize{$(\id -n\otimes n) \cdot  \anti[(\id-n\otimes n) \cdot  \widetilde{ {m}} \cdot   n] \cdot  n\big|_{\partial \Omega\setminus\overline{\Gamma}}=(\id-n\otimes n) \cdot  m^{\rm ext}$}}
		\put(3.5,155){\footnotesize{ $\jump{\anti[\widetilde{ {m}}  \cdot   n] \cdot  \nu}\big|_{\partial \Gamma}= \pi^{\rm ext}$,\ \ \qquad\quad ``edge line force'' on $\partial \Gamma$}}
		\put(81.5,165){\footnotesize{3 bc}}
		\put(81.5,160){\footnotesize{2 bc}}
		\put(81.5,155){\footnotesize{3 bc}}
		\put(-72,150){\footnotesize{Boundary virtual work}}
		\put(-72,145){\footnotesize{$-\int_{\partial \Omega}\langle (\sigma-\widetilde{\tau}) \cdot  n, \delta u\rangle \,da
				-\int_{\partial\Omega}\langle \widetilde{ {m}} \cdot   n, \axl (\skw \nabla \delta u) \rangle\, da=
				0\qquad \Leftrightarrow$}}
		\put(-72,140){\footnotesize{$-\int_{\partial \Omega}\langle \Big\{(\sigma-\widetilde{\tau}) \cdot   n -\frac{1}{2} n\times \underbrace{\nabla[\langle n,(\sym \widetilde{ {m}}) \cdot  n\rangle]}_{\text{normal curvature}}\Big\}, \delta u\rangle \,da
				+\int_{\partial\Omega}\langle \widetilde{ {m}}\cdot n,\Big\{(\id-n\otimes n) \cdot  [\axl (\skw \nabla \delta u)]\Big\}\rangle\,
				da=0$}}

		\put(-72,130){\footnotesize{$ -\int_{\partial \Omega}\langle (\sigma-\widetilde{\tau}).\, n
				-\frac{1}{2} n\times \nabla[\langle n,(\sym \widetilde{ {m}}) \cdot  n\rangle]-\frac{1}{2}\nabla[(\anti[(\id-n\otimes n)  \cdot \widetilde{ {m}} \cdot  n]) \cdot (\id-n\otimes n)]: (\id-n\otimes n), \delta u\rangle \,da\notag$}}
		
		\put(-72,125){\footnotesize{$ \qquad-\frac{1}{2}
				\int_{\partial \Omega}\langle (\id -n\otimes n)  \cdot \anti[(\id-n\otimes n) \cdot  \widetilde{ {m}} \cdot  n] \cdot  n,  \nabla \delta u \cdot n \rangle da-\frac{1}{2}
				\int_{\partial\Gamma}\langle \jump{\anti[\widetilde{ {m}} \cdot  n] \cdot \nu}, \delta u\rangle ds\notag =0$}}

		\put(5,115){{\huge $\Updownarrow$} \  \footnotesize{equivalent}}

		\put(5,82){\oval(165,60)}
		\put(-72,105){\bf \footnotesize{Equivalent strongly independent  boundary conditions in the indeterminate couple stress model}}
		\put(-72,100){\footnotesize{Geometric (essential) boundary conditions \ (3+2)}}
		\put(-72,95){\footnotesize{$u\big|_{\Gamma}= u^{\rm ext}\in\mathbb{R}^{ 3}, \  (\id-n\otimes n)  \cdot \nabla  u \cdot  n\big|_{\Gamma}=(\id-n\otimes n) \cdot   a^{\rm ext}\in\mathbb{R}^{3},\  \text{or}\ (\id-n\otimes n) \cdot \curl u\big|_{\Gamma}=(\id-n\otimes n) \cdot  b^{\rm ext}$}}
		\put(-72,90){\footnotesize{Mechanical (traction) boundary conditions \ (3+2)}}
		\put(-72,85){\footnotesize{$\left((\sigma-\widetilde{\tau}) \cdot  n-\frac{1}{2}\nabla[(\anti(\widetilde{ {m}} \cdot  n)) \cdot (\id-n\otimes n)]: (\id-n\otimes n)\right)\big|_{\partial \Omega\setminus\overline{\Gamma}}= t^{\rm ext}$,}}
		\put(-28.5,80){\footnotesize{$(\id -n\otimes n) \cdot \anti[\widetilde{ {m}} \cdot  n] \cdot  n\big|_{\partial \Omega\setminus\overline{\Gamma}}=(\id-n\otimes n) \cdot  m^{\rm ext}$}}
		\put(-32.9,75){\footnotesize{$\hspace{2.23cm} \jump{\anti[\widetilde{ {m}} \cdot  n] \cdot   \, \nu}\big|_{\partial \Gamma}= \pi^{\rm ext}$,\ \ \qquad\quad ``edge line force'' on $\partial \Gamma$}}
		\put(78,85){\footnotesize{3 bc}}
		\put(78,80){\footnotesize{2 bc}}
		\put(78,75){\footnotesize{3 bc}}
		\put(-72,70){\footnotesize{Boundary virtual work}}
		\put(-72,65){\footnotesize{$-\int_{\partial \Omega}\langle (\sigma-\widetilde{\tau}) \cdot  n, \delta u\rangle \,da
				-\int_{\partial\Omega}\langle \widetilde{ {m}}\cdot n, \axl (\skw \nabla \delta u) \rangle\, da=
				0\qquad \Leftrightarrow$}}
		\put(-72,60){\footnotesize{$ -\int_{\partial \Omega}\langle (\sigma-\widetilde{\tau}) \cdot  n-\frac{1}{2}\nabla[(\anti(\widetilde{ {m}} \cdot  n)) \cdot (\id-n\otimes n)]: (\id-n\otimes n), \delta u\rangle \,da$}}
		\put(-72,55){\footnotesize{$ \qquad-\frac{1}{2}
				\int_{\partial \Omega}\langle (\id -n\otimes n)\anti(\widetilde{ {m}} \cdot  n) \cdot  n,  \nabla \delta u \cdot n \rangle da-\frac{1}{2}
				\int_{\partial \Gamma}\langle \jump{\anti[\widetilde{ {m}} \cdot  n] \cdot   \nu}, \delta u\rangle ds=0$}}

		\put(5,45){{\huge $\Updownarrow$} \  \footnotesize{equivalent}}

		\put(5,18){\oval(165,45)}
		\put(-72,35){\bf \footnotesize{Alternative equivalent strongly independent boundary conditions,  index-format}}
		\put(-72,30){\footnotesize{Geometric (essential) boundary conditions \ (3+2)}}
		\put(-72,25){\footnotesize{$u_i\big|_{\Gamma}= u^{\rm ext}_{i}\in\mathbb{R}^{ 3}, \qquad \ \left(\epsilon_{ikl}u_{l,k}-\epsilon_{jkl}u_{l,k}n_jn_i\right)\big|_{\Gamma}= a^{\rm ext}_i$}}
		\put(-71.4,20){\footnotesize{$\ \quad \ \ \qquad \qquad \qquad \quad \text{or}\ \ \left(u_{i,k}n_k-u_{j,k}n_kn_jn_i\right)\big|_{\Gamma}= b^{\rm ext}_{i}$}}
		\put(-72,15){\footnotesize{Mechanical (traction) boundary conditions \ (3+2)}}
		\put(-72,10){\footnotesize{$\left((\sigma_{ij}-\widetilde{\tau}_{ij})\, n_j+\frac{1}{2}(\epsilon_{ihk}\widetilde{m}_{ks}n_s-\epsilon_{ijk}\widetilde{m}_{ks}n_sn_jn_h)_{,p}(\delta_{hp}-n_hn_p)\right)\big|_{\partial \Omega\setminus\overline{\Gamma}}= t^{\rm ext}_i$,}}
		\put(-52,5){\footnotesize{$\hspace{2.23cm}(\epsilon_{ipk}\widetilde{m}_{ks}n_s-\epsilon_{jpk}\widetilde{m}_{ks}n_sn_jn_i)\,n_p\big|_{\partial \Omega\setminus\overline{\Gamma}}= m^{\rm ext}_i- m^{\rm ext}_pn_pn_i,$}}
		\put(-59.5,0){\footnotesize{$ \hspace{5.72cm}\jump{\epsilon_{ipk}\widetilde{m}_{ks}n_s\, \nu_p}\big|_{\partial \Gamma}= \pi^{\rm ext}_i$,\ \ \qquad\quad``edge line force''  on $\partial \Gamma$}}
		\put(78,10){\footnotesize{3 bc}}
		\put(78,5){\footnotesize{2 bc}}
		\put(78,0){\footnotesize{3 bc}}
		\end{picture}
	\end{center}
	\caption{The possible boundary conditions in the indeterminate couple stress model. The equivalence of the geometric boundary condition is clear. The virtual displacement is denoted by $\delta u\in C^\infty(\overline{\Omega})$. }\label{limitmodel00}
\end{figure}

\end{footnotesize}

\end{document}